\pgfplotsset{compat=1.18}
\theoremstyle{plain}
\newtheorem{prob}{Problem}
\newtheorem{theorem}{Theorem}[section]
\newtheorem{lemma}[theorem]{Lemma}
\newtheorem{prop}[theorem]{Proposition}
\newtheorem{corollary}[theorem]{Corollary}
\newtheorem{obs}[theorem]{Observation}
\theoremstyle{definition}
\newtheorem{definition}[theorem]{Definition}
\def\ShowAuthNotes{1}
\newcommand{\authnote}[2]{\ \\ \textcolor{red}{\parbox{0.9\linewidth}{[{\footnotesize {\bf #1:} { {#2}}}]}}\newline}
\newcommand{\authnote}[2]{}
\renewcommand{\epsilon}{\varepsilon}
\newcommand{\eps}{\varepsilon}
\newcommand{\poly}{\operatorname{\mathrm{poly}}}
\newcommand{\R}{\mathbb{R}}
\newcommand{\N}{\mathbb{N}}
\newcommand{\opt}{\mathrm{OPT}}
\newcommand{\sol}{\mathrm{SOL}}
\newcommand{\parti}{Partition\xspace}
\newcommand{\caS}{\mathcal{S}\xspace}
\newcommand{\union}{\cup}
\newcommand{\intersect}{\cap}
\newcommand{\floor}[1]{\left\lfloor #1 \right\rfloor}
\DeclarePairedDelimiter\abs{\lvert}{\rvert}%
\DeclarePairedDelimiter\norm{\lVert}{\rVert}%
\let\oldabs\abs
\def\abs{\@ifstar{\oldabs}{\oldabs*}}
\let\oldnorm\norm
\def\norm{\@ifstar{\oldnorm}{\oldnorm*}}
\renewcommand\bar[1]{\accentset{\rule{.4em}{.6pt}}{#1}}
\title{$(1 - \eps)$-Approximation of Knapsack in Nearly Quadratic Time}
\author{  Xiao Mao \\  Stanford University \\  \texttt{xiaomao@stanford.edu} \\}
\date{}
\begin{document}
	\maketitle
	\begin{abstract}
	 \emph{Knapsack} is one of the most fundamental problems in theoretical computer science. In the $(1 - \eps)$-approximation setting, although there is a fine-grained lower bound of $(n + 1 / \eps) ^ {2 - o(1)}$ based on the $(\min, +)$-convolution hypothesis ([K{\"u}nnemann, Paturi and Stefan Schneider, ICALP 2017] and [Cygan, Mucha, Wegrzycki and Wlodarczyk, 2017]), the best algorithm is randomized and runs in $\tilde O\left(n + (\frac{1}{\eps})^{11/5}/2^{\Omega(\sqrt{\log(1/\eps)})}\right)$\footnote{Throughout this paper, we use $\tilde O(f)$ to denote $O(f\cdot \text{poly} \log (n / \eps))$. } time [Deng, Jin and Mao, SODA 2023], and it remains an important open problem whether an algorithm with a running time that matches the lower bound (up to a sub-polynomial factor) exists. We answer the question positively by showing a deterministic $(1 - \eps)$-approximation scheme for knapsack that runs in $\tilde O(n + (1 / \eps) ^ {2})$ time. We first extend a known lemma in a recursive way to reduce the problem to $n \eps$-additive approximation for $n$ items with profits in $[1, 2)$. Then we give a simple efficient geometry-based algorithm for the reduced problem.
	\end{abstract}

	\section{Introduction}

    \emph{Knapsack} is one of the most fundamental problems in
    computer science and mathematical optimization, and is actively being studied in fields such as integer programming and fine-grained complexity. 
In the Knapsack problem (sometimes also called 0-1 Knapsack), we are given a set $I$ of $n$ items where each item $i \in I$ has weight $w_i>0$ and profit $p_i>0$, as well as a knapsack capacity $W$, and we want to choose a subset $J\subseteq I$ satisfying the weight constraint $\sum_{j\in J}w_j \le W$ such that the total profit $\sum_{j\in J}p_j$ is maximized.

Knapsack is well-known to be hard: it appeared in Karp's original list of 21 NP-hard problems \cite{karp1972reducibility}.
To cope with NP-hardness, a natural direction is to study its \emph{approximation algorithms}. Given a parameter $\eps > 0$, and an input instance with optimal value $\opt$, a $(1-\eps)$-approximation algorithm is required to output a number $\sol$ such that $(1 - \eps)\opt \le \sol \le \opt$.
Fortunately, Knapsack is well-known to have \emph{fully polynomial-time approximation schemes (FPTASes)}, namely $(1-\eps)$-approximation algorithm that runs in $\poly(n,1/\eps)$ time, for any $\eps>0$.

\begin{table}
\centering
\caption{FPTASes for 0-1 knapsack}
\label{table:history}
\begin{tabular}{ l|l|l } 
\hline \hline
 $O\left( n \log n + ( \frac{1}{\eps} )^4 \log \frac{1 }{\eps}\right)$& Ibarra and Kim \cite{ibarra1975fast}& 1975 \\ 
 $O\left(n\log\frac{1}{\eps} + (\frac{1}{\eps})^4\right)$& Lawler \cite{lawler1979fast}& 1979 \\ 
 $O\left(n\log\frac{1}{\eps} + (\frac{1}{\eps})^3 \log^2 \frac{1}{\eps}\right)$& Kellerer and Pferschy \cite{kellerer2004improved}& 2004 \\ 
$O\left(n\log\frac{1}{\eps} + (\frac{1}{\eps})^{5/2} \log^3 \frac{1}{\eps}\right)$  (randomized)& Rhee \cite{rhee2015faster}& 2015\\
$O\left(n \log \frac{1}{\eps} + (\frac{1}{\eps})^{12/5}/2^{\Omega(\sqrt{\log(1/\eps)})}\right)$& Chan \cite{chan2018approximation} & 2018\\
$O\left(n \log \frac{1}{\eps} + (\frac{1}{\eps})^{9/4}/2^{\Omega(\sqrt{\log(1/\eps)})}\right)$ & Jin \cite{DBLP:conf/icalp/Jin19} & 2019 \\  
$O\left(n \log \frac{1}{\eps} + (\frac{1}{\eps})^{11/5}/2^{\Omega(\sqrt{\log(1/\eps)})}\right)$ (randomized) & Deng, Jin and Mao \cite{ourprevious} & 2023 \\  
$\tilde O(n + (\frac{1}{\eps})^{2})$ & \textbf{This work} & \\  
  \hline\hline
\end{tabular}
\end{table}

Reductions showed by Cygan~et~al. \cite{cygan2019problems} and  
K\"{u}nnemann et~al. \cite{kunnemann2017fine} imply that 0-1 knapsack and unbounded knapsack have no FPTAS in $O((n+\frac{1}{\eps})^{2-\delta})$ time, unless $(\min,+)$-convolution admits a truly subquadratic algorithm \cite{mucha2019subquadratic}. Experts in this field have been in pursuit of an algorithm that matches this lower bound long before such a lower bound is known. The history of FPTASes for the Knapsack problem is summarized in \cref{table:history}. 
The first algorithm with only subcubic dependence on $1/\eps$ was due to Rhee \cite{rhee2015faster}, which reduced the problem to a linear programming instance that can be solved using special randomized methods.
In 2018, Chan \cite{chan2018approximation} gave an elegant algorithm for the 0-1 knapsack problem in deterministic $O(n \log \frac{1}{\eps} + (\frac{1}{\eps})^{5/2}/2^{\Omega(\sqrt{\log(1/\eps)})})$ via simple combination of the SMAWK algorithm \cite{aggarwal1987geometric} and a standard divide-and-conquer technique. The speedup of superpolylogarithmic factor $2^{\Omega(\sqrt{\log(1/\eps)})}$ was due to progress on $(\min,+)$-convolution \cite{bremner2014necklaces,williams2014faster,chan2016deterministic}. Using an elementary number-theoretic lemma, Chan further improved the algorithm to $O(n \log \frac{1}{\eps} + (\frac{1}{\eps})^{12/5}/2^{\Omega(\sqrt{\log(1/\eps)})})$ time. Jin extended this algorithm by using a greedy lemma and extending the number theoretical lemma to multiple layers, obtaining a running time of $O(n \log \frac{1}{\eps} + (\frac{1}{\eps})^{9/4}/2^{\Omega(\sqrt{\log(1/\eps)})})$ \cite{DBLP:conf/icalp/Jin19}. Very recently, Deng, Jin and Mao designed an algorithm that used an improved greedy lemma using a result in additive combinatorics, as well as a procedure based on random partitioning, obtaining a running time of $O(n \log \frac{1}{\eps} + (\frac{1}{\eps})^{11/5}/2^{\Omega(\sqrt{\log(1/\eps)})})$, but the algorithm was randomized. In this work, we obtain a running time of $\tilde O(n + (1 / \eps) ^ 2)$ deterministically, putting an end to this decades-long effort to match the lower bound. Although our work uses somewhat complicated results from previous work as black boxes at the beginning, our core geometry-based procedure is fairly simple and is quite different from the framework of techniques made popular by Chan.  

For the related \textit{unbounded knapsack} problem, where every item has infinitely many copies, Jansen and Kraft \cite{jansen2018faster} obtained an $O(n + ( \frac{1}{\eps} )^2 \log^3 \frac{1}{\eps} )$-time algorithm; the unbounded version can be reduced to 0-1 knapsack with only a logarithmic blowup in the problem size \cite{cygan2019problems}. 

There are also some recent advances for the related problems Subset Sum and \parti. The Subset Sum problem is a special case of Knapsack, where the weight of an item is always equal to its profit. The \textit{partition} problem is an interesting special case of the Subset Sum problem where $W = \frac{1}{2}\sum_{i\in I}w_i$. For Subset Sum, the best known algorithm by Bringmann and Nakos \cite{BN21} runs in $\tilde O(n+\eps^{-2}/2^{\Omega(\sqrt{\log(1/\eps)})})$ time (improving \cite{DBLP:journals/jcss/KellererMPS03} by low-order factors). Bringmann and Nakos \cite{BN21} showed a matching lower bound based on the $(\min,+)$-convolution hypothesis. For \parti, the best FPTAS by Deng, Jin and Mao runs in $\tilde O(n+(\frac{1}{\eps})^{5/4})$ time \cite{ourprevious}. Abboud, Bringmann, Hermelin, and Shabtay \cite{abboud2019seth} showed that \parti cannot be approximated in $\poly(n)/\eps^{1-\delta}$ time for any $\delta>0$, under the Strong Exponential Time Hypothesis. 

We can see that before this work, the complexity of Subset Sum is already settled, but for Knapsack and \parti there still remain gaps between the best-known algorithm and its conditional lower bound. As we have mentioned, in this paper, we make a breakthrough that closes this gap for the Knapsack problem: 
        \begin{theorem} \label{theo:main}
 There is a deterministic algorithm for  $(1 - \eps)$-approximating Knapsack with running time $\tilde O(n  + \epsilon ^ {-2})$.
        \end{theorem}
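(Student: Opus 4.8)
My plan follows the two-phase outline promised in the abstract.

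\emph{Phase 1: reduction.} Guess $\opt$ within a factor of $2$ (this costs only a logarithmic overhead, and can in fact be removed). Call an item \emph{coarse} if its profit is $\ge \eps\opt$ and \emph{fine} otherwise. Any near-optimal solution uses at most $1/\eps$ coarse items, and more precisely at most $O(\opt/s)$ items of absolute profit scale $\approx s$; so for each of the $O(\log(1/\eps))$ dyadic scales above $\eps\opt$ it suffices to keep the $O(\opt/s)$ lightest coarse items of that scale (a heavier used item of a scale can always be swapped for a lighter one), a total of $O(1/\eps)$ coarse items. Rounding the coarse profits, within each dyadic scale $s$, to an arithmetic grid of spacing $\eps s/\polylog(1/\eps)$ then leaves $\tilde O(1/\eps)$ distinct coarse profit values in all and loses only an $\eps$-fraction of $\opt$. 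The fine items are not truncated at all; we merely sort them by profit-to-weight ratio. The ``known lemma'' I would extend is exactly the statement that this bucketing is lossless up to $(1-\eps)$, and applying it recursively, scale by scale, is what makes the various errors compose correctly. All of this costs $\tilde O(n+\eps^{-2})$.

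\emph{Phase 2: a geometric algorithm.} Within a single (rounded) profit class, taking the $t$ lightest items contributes weight equal to the $t$-th prefix sum of the sorted weights and profit equal to $t$ times the class profit; thus each class is an \emph{upper-convex lattice chain} in the $(\text{weight},\text{profit})$ plane, and the achievable $(\text{weight},\text{profit})$ pairs of a family of classes form exactly the Minkowski sum of their chains. I would handle the coarse items by folding their $O(1/\eps)$ chains one at a time, maintaining the \emph{exact} upper-left frontier $f$ of the classes processed so far: each step is a $(\max,+)$-convolution $f\oplus C$ with $C$ concave, whose matrix is Monge, so SMAWK-style matrix searching returns the new frontier in time near-linear in $|f|+|C|$; since $|f|$ never exceeds the number of coarse items folded in, the whole fold costs $\tilde O((1/\eps)\cdot(1/\eps))=\tilde O(\eps^{-2})$. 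For the fine items I would use pure greedy: the best prefix (in ratio order) of weight $\le B$ is within one fine item's profit, i.e.\ within $\eps\opt$, of the optimal fine-only solution for budget $B$, and these prefixes trace a single concave chain computed in $\tilde O(n)$ time, which I compress to $O(1/\eps)$ breakpoints (only profit-resolution $\eps\opt$ is needed). Finally I would combine the exact coarse frontier with the compressed fine chain by one more $(\max,+)$-convolution of two $O(1/\eps)$-size objects at cost $\tilde O(\eps^{-2})$, and read off the value at $W$. The output is $(1-O(\eps))\opt$, which rescales to a genuine $(1-\eps)$-approximation; the whole algorithm is deterministic and runs in $\tilde O(n+\eps^{-2})$.

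\emph{Main obstacle.} I expect the crux to be the exact coarse fold. It must be \emph{correct}: once the first class is incorporated the running frontier $f$ is no longer concave, so one genuinely needs the Monge/SMAWK row-maximum machinery rather than the naive ``merge edges by slope,'' which recovers only the greedy solution — good to within an additive constant, but not to within $\eps\opt$ when the coarse share of $\opt$ is small. And it must be \emph{fast}: the $\tilde O((\#\text{items})\times(\#\text{classes}))$ bound on folding is affordable only because Phase 1 has pruned the coarse part down to $O(1/\eps)$ items, so the recursive reduction and the geometric core must be designed in tandem. Everything else is careful but routine — arranging the roundings (the coarse-profit grid, the fine-greedy slack, the frontier compression) so that each loses at most an $O(\eps/\polylog)$ fraction of $\opt$ and their sum still fits in the $\eps\opt$ budget, and checking the polylogarithmic factors hidden in the matrix-searching and convolution subroutines.
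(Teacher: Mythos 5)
Your proposal has two genuine gaps, and they sit exactly at the crux of the problem. First, the Phase 1 pruning argument is not $(1-\eps)$-lossless: within a dyadic scale $[s,2s)$ profits vary by a factor of $2$, so swapping a used heavier item (profit up to $2s$) for a kept lighter one (profit as low as $s$) can lose a constant fraction of the coarse profit, not an $\eps$-fraction. Repairing this in the standard way (round within the scale, keep the lightest items \emph{per rounded value}) leaves $\tilde\Theta(\eps^{-2})$ coarse items, not $O(\eps^{-1})$. Reducing to an $O(1/\eps)$-item core while losing only $1-\eps$ is precisely what the paper needs heavy machinery for: the recursive Greedy Exchange Lemma (\cref{lemma:exchange}, resting on the Bringmann--Wellnitz density result from additive combinatorics), Chan's number-theoretic \cref{lemma:chan} for the many-items/few-values part, and a recursion on the tail with shrinking profit caps (\cref{lemma:greedyrecurse}); a coarse/fine split plus swapping does not achieve it.

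Second, the Phase 2 cost analysis rests on the false claim that the exact frontier $f$ ``never exceeds the number of coarse items folded in.'' The exact Pareto frontier of $k$ items is the dynamic-programming state space: its size is governed by the number of distinct achievable profit sums, which is exponential in $k$ without rounding, and after the rounding needed to keep the total error at $\eps\opt$ over up to $1/\eps$ coarse items (grid $\approx\eps^2\opt$) it is $\Theta(\eps^{-2})$, so folding $\Theta(1/\eps)$ classes costs $\tilde\Omega(\eps^{-3})$ even with SMAWK (which, as you note, only exploits concavity of the incoming class). The alternative --- re-rounding the frontier to a $\tilde O(1/\eps)$-level grid after each fold --- accumulates an error of order (grid spacing)$\times$(number of folds), which destroys the guarantee; this efficiency-versus-precision trade-off is exactly the barrier the paper describes for the sequential approach. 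The paper's core contribution is a mechanism that breaks it: the 2D sparsification/$F$-generatability framework (\cref{def:fgen}, \cref{lemma:invariants}, \cref{lemma:geomain}) shows that all reductions together cost only a single $\delta$ of subtractive error no matter how many merges are performed, and the convex-number bound (\cref{lemma:reduce}) together with the block-wise Minkowski-sum merging (\cref{lemma:combine}) keeps each merge near-linear. Nothing in your proposal plays the role of this non-accumulation argument, so the claimed $\tilde O(n+\eps^{-2})$ bound does not follow.
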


\section{Technical Overview}

\subsection{Problem Statement}
We study approximation algorithms for the 0--1 Knapsack problem. The input is a list of $n$ items $(p_1,w_1),\dots,(p_n,w_n)\in\mathbb{N}\times\mathbb{N}$ and a capacity $W\in\mathbb{N}$. The optimal value is
\[
\opt := \max_{J\subseteq[n]} \Bigl\{ \sum_{j\in J} p_j \ \Big|\  \sum_{j\in J} w_j \le W \Bigr\}.
\]
Equivalently: given a knapsack of capacity $W$ and a trove of items with weights and profits, choose a subset that maximizes total profit without exceeding the capacity.

We aim for an approximation scheme. Given an instance and a parameter $\eps\in(0,1)$, a \emph{$(1-\eps)$-approximation algorithm} outputs a value $\sol$ with $(1-\eps)\opt \le \sol\le \opt$.

\subsection{Reduction to Few Items}
A natural heuristic is to prefer items with a high profit-to-weight ratio $p_j/w_j$. For the \emph{fractional} version, the greedy strategy by $p_j/w_j$ is optimal; for 0--1 knapsack it can be very bad. For example, with capacity $W=100$ and items
\[
(p_1,w_1)=(100,60),\quad (p_2,w_2)=(75,50),\quad (p_3,w_3)=(70,50),
\]
the ratios satisfy $100/60>75/50>70/50$. Greedy picks $(100,60)$ for profit $100$, after which neither of the remaining items fits. The best choice is $(75,50)$ and $(70,50)$ for profit $145$. Here the first item has a strong ratio but its weight is not ``flexible'' enough: we either take it and block out other good items, or skip it entirely.

\paragraph{Arming heuristics with additive combinatorics.}
In our previous work \cite{ourprevious}, we used the following observation: suppose $S$ is a set of high-ratio items with many distinct weights. By results from additive combinatorics \cite{DBLP:journals/siamcomp/GalilM91,bw21}, for any target weight $W'$ in a reasonable range one can choose $S'\subseteq S$ whose total weight is very close to $W'$\footnote{In the actual proofs we apply these tools to profits rather than weights due to some technicality.}. Thus, such an $S$ offers both high ratios and sufficient weight \emph{flexibility} to interleave with the remaining (lower-ratio) items. This suggests that high-ratio items should dominate the optimum.

Concretely, we partition the items into three disjoint parts $I=X\cup Y\cup Z$:
\begin{description}
  \item[Part I] $X$ has exactly $\Delta$ distinct weights, where $\Delta$ is “just right”: large enough to give flexibility yet small enough to solve $X$ quickly;
  \item[Part II] $Y$ contains $O(1/\eps)$ items;
  \item[Part III] $Z:=I\setminus X\setminus Y$ contains only items with ratios lower than those in $X$ and $Y$.
\end{description}
We prove that the contribution from $Z$ to the total profit is bounded by a reasonably small value $B$. Hence we can solve the instance by solving the parts and then combine the answers:
\begin{description}
  \item[Part I] $X$ has few distinct weights; we can solve it in $\tilde{O}\!\left((1/\eps)^2\right)$ time;
  \item[Part II] $Y$ has $O(1/\eps)$ items and (hopefully) can be handled reasonably efficiently;
  \item[Part III] For $Z$, we only need to consider solutions with profit at most $B$, which (hopefully) enables us to solve for this part efficiently.
\end{description}
Using prior techniques, $X$ can be handled in $\tilde{O}\!\left((1/\eps)^2\right)$ time, but neither $Y$ nor $Z$ could be solved within the same bound.

A key observation is that $Z$ reduces to the \emph{same} problem but with an upper bound $B$ on achievable profit. We therefore apply the above reduction recursively. Because the feasible solutions shrink, we can afford a larger parameter $\Delta'$ in the next level and still solve for $X$ in $\tilde{O}\!\left((1/\eps)^2\right)$ time. Larger $\Delta'$ strengthens the additive-combinatorics guarantees, which in turn tightens the next profit cap $B'$, yielding an even more constrained subproblem (\cref{lemma:greedyrecurse}). After $\tilde{O}(1)$ levels the recursion terminates with a full solution.

\subsection{Knapsack with Few Items via Computational Geometry}
What remains is to approximate $Y$: knapsack with few items. To be able to combine the answers with other parts, we must approximate \emph{not only} for the given capacity $W$, but for all $W'\le W$. That is, we approximate the \emph{profit function} $f_Y$, where $f_Y(x)$ is the maximum profit under capacity $x$ (\cref{prob:knapmain}). Due to the nature of our reduction, it suffices to compute a $\delta$-additive approximation for the profit function for $\delta := \text{(sum of profits)} \times \eps$.

All prior algorithms in this line use divide-and-conquer \cite{chan2018approximation,DBLP:conf/icalp/Jin19,ourprevious}. Given items $I$, split into $L$ and $R$, approximate $f_L$ and $f_R$, and combine via the max-plus convolution $f_I=f_L\oplus f_R$. Since \cite{chan2018approximation}, we have been rounding profits to multiples of $\delta$, resulting in step functions with $O(1/\eps)$ steps, but no algorithm is known to compute the max-plus convolution of two sequences in strongly sub-quadratic time in their lengths. Thus the naive divide and conquer still costs $\tilde{O}\!\left((1/\eps)^3\right)$ time. Previous speedups use number-theoretic ideas \cite{chan2018approximation,DBLP:conf/icalp/Jin19,ourprevious} or random partitioning \cite{ourprevious}.

We instead use a \emph{2D analog} of the rounding technique from Bringmann and Nakos’s FPTAS (termed ``sparsification'') for \textsc{Subset Sum} \cite{BN21}. A $\delta$-additive approximation of a set $S$ is a set $S'$ such that: for every $x\in S$ there exists $x'\in S'$ with $x-\delta\le x'\le x$, and for every $x'\in S'$ there exists $x\in S$ with $x-\delta\le x'\le x$. To round $S\subseteq[0,B]$, they exhaustively remove any middle element $v$ from triples $u<v<w$ with $w-u\le\delta$. It is easy to see that the resulting set $S'$ has size $O(B/\delta)$. If $S_X$ and $S_Y$ are possible subset sums of disjoint sets $X$ and $Y$, that round to $\hat S_X,\hat S_Y$ respectively, then $\{x+y:x\in\hat S_X,\,y\in\hat S_Y\}$ is still a \emph{$\delta$-approximation} (not $2\delta$) of $\{x+y:x\in S_X,\,y\in S_Y\}$ \cite{BN21}. In other words, the rounding error remains $\delta$ and does not accumulate to $2\delta$.

A naive analog on profit \emph{functions} fails: if we round a step function $f$ by removing a breakpoint $v$ whenever consecutive breakpoints $u<v<w$ satisfy $f(u)<f(v)<f(w)\le f(u)+\delta$\footnote{We slightly abuse the notation and use ``breakpoint'' to refer to both the x-coordindate $u$ and the actual point $(u, f(u))$.}, the rounding error \emph{accumulates} under convolution. Consider
\[
f(x) =
\begin{cases}
-\infty, & x<0,\\
0,       & 0\le x<0.8,\\
\delta-1,& 0.8\le x<2,\\
\delta,  & x\ge 2.
\end{cases}
\]
The rounded $f'$ removes the breakpoint at $x=0.8$ (see \cref{fig:f-vs-fp}):
\[
f'(x)=
\begin{cases}
-\infty, & x<0,\\
0,       & 0\le x<2,\\
\delta,  & x\ge 2.
\end{cases}
\]
Then $(f\oplus f)(1.7)=2\delta-2$, but $(f'\oplus f')(1.7)=0$ (see \cref{fig:conv}).
\begin{figure}[H]
\centering
\begin{tikzpicture}
  \pgfplotsset{compat=1.18}
  \pgfmathsetmacro{\del}{6} 
  \begin{axis}[
    width=0.75\linewidth, height=6cm,
    xmin=0, xmax=3.1, ymin=-0.2, ymax=\del+0.6,
    axis lines=left,
    xtick={0,0.8,2,3}, ytick={0,\del-1,\del},
    yticklabels={$0$, $\delta-1$, $\delta$},
    legend style={draw=none, fill=none, at={(0.5,1.03)}, anchor=south},
    clip=false
  ]
    \addplot+[const plot, very thick, no marks] coordinates
      {(0,0) (0.8,0) (0.8,\del-1) (2,\del-1) (2,\del) (3.1,\del)};
    \addlegendentry{$f$}
    \addplot+[const plot, thick, dashed, no marks] coordinates
      {(0,0) (2,0) (2,\del) (3.1,\del)};
    \addlegendentry{$f'$}
  \end{axis}
\end{tikzpicture}
\caption{$f$ and its rounded $f'$.}
\label{fig:f-vs-fp}
\end{figure}
\begin{figure}[H]
\centering
\begin{tikzpicture}
  \pgfplotsset{compat=1.18}
  \pgfmathsetmacro{\del}{6} 
  \begin{axis}[
    width=0.85\linewidth, height=6.4cm,
    xmin=0, xmax=4.6, ymin=0, ymax=2*\del+0.8,
    axis lines=left,
    xtick={0,0.8,1,1.6,2,2.8,3,4}, 
    ytick={0,\del-1,\del,2*\del-2,2*\del-1,2*\del},
    yticklabels={$0$, $\delta-1$, $\delta$, $2\delta-2$, $2\delta-1$, $2\delta$},
    legend style={draw=none, fill=none, at={(0.5,1.03)}, anchor=south},
    clip=false
  ]
    \addplot+[const plot, very thick, no marks] coordinates
      {(0,0)
       (0.8,0) (0.8,\del-1)
       (1.6,\del-1) (1.6,2*\del-2)
       (2.8,2*\del-2) (2.8,2*\del-1)
       (4,2*\del-1) (4,2*\del)
       (4.6,2*\del)};
    \addlegendentry{$f\oplus f$}

    \addplot+[const plot, thick, dashed, no marks] coordinates
      {(0,0) (2,0) (2,\del) (4,\del) (4,2*\del) (4.6,2*\del)};
    \addlegendentry{$f'\oplus f'$}
  \end{axis}
\end{tikzpicture}
\caption{Max-plus convolution: $f\oplus f$ vs.\ $f'\oplus f'$.}
\label{fig:conv}
\end{figure}

\paragraph{Warm-up: an idealized rounding scheme.} Our core (idealized) observation is that we should remove the breakpoint at $v$ \emph{only} when $w-u\le\delta$ \emph{and} the point $(v,f(v))$ lies \emph{below} the line segment between $(u,f(u))$ and $(w,f(w))$. As a warm-up, pretend for the moment that this idealized rounding is already stable under max-plus convolution. Under this rounding, any triple of consecutive breakpoints $x<y<z$ for which the polyline $(x,f'(x))\!\to\!(y,f'(y))\!\to\!(z,f'(z))$ is nonconvex must satisfy $z-x\ge\delta$. Consequently, the contour of $f'$ decomposes into upper convex-hull blocks that are $\delta$ apart, so there are $O(\mathrm{range}(f')/\delta)$ such blocks. To convolve two rounded profit functions, we convolve blockwise; each pair of \emph{upper} convex blocks reduces to their Minkowski sum and is computable efficiently.

In the actual algorithm, the rounding is slightly more involved. We search for two $x$-breakpoints $u<w$ such that (see \cref{fig:apex-block} for an illustration):
\begin{itemize}
  \item $f(w)-f(u)\le\delta$;
  \item $(u,f(u))$ and $(w,f(w))$ are \emph{apexes} (each lies strictly above the segment connecting its two neighboring breakpoints);
  \item every breakpoint $v'$ with $u<v'<w$ lies below the segment connecting $(u,f(u))$ and $(w,f(w))$.
\end{itemize}

\begin{figure}[H]
\centering
\begin{tikzpicture}
  \pgfplotsset{compat=1.18}
  \begin{axis}[
    width=0.75\linewidth, height=4.6cm,
    xmin=0, xmax=5.1, ymin=0, ymax=1.2,
    axis lines=left,
    xtick=\empty, ytick=\empty,
    xlabel={}, ylabel={},
    clip=false
  ]
    \addplot+[const plot, very thick, no marks]
      coordinates {(0,0.10) (1,0.40) (2,0.45) (3,0.60) (4,0.90) (5,1.10)};

    \addplot[thick, dashed, domain=1:4, samples=2]
      ({x},{0.40 + (0.90-0.40)/3*(x-1)});

    \addplot[thick, dotted, domain=0:2, samples=2] ({x},{0.10 + 0.175*x});          
    \addplot[thick, dotted, domain=3:5, samples=2] ({x},{0.60 + 0.25*(x-3)});        

    \addplot[only marks, mark=*, mark size=1.7pt]
      coordinates {(1,0.40) (2,0.45) (3,0.60) (4,0.90)};

    \node[anchor=south] at (axis cs:1,0.40) {\scriptsize $u$};
    \node[anchor=south] at (axis cs:4,0.90) {\scriptsize $w$};
  \end{axis}
\end{tikzpicture}
\caption{Collapsible block $[u,w]$. Dotted: neighbor chords showing $u$ and $w$ are apexes; dashed: chord over $[u,w]$; interior breakpoints lie below it.}\label{fig:apex-block}
\end{figure}

We show that under these conditions, we may remove every breakpoint $v'$ with $u<v'<w$ (see \cref{def:fgen}) while actually avoiding the accumulation of rounding errors under convolution (\cref{lemma:geomain}). This yields $O(\mathrm{range}(f')/\delta)$ blocks that \emph{alternate} between upper- and lower-convex hulls (\cref{lemma:reduce}). Using monotonicity, we also convolve efficiently when a lower-convex block is involved (\cref{lemma:combine}). Altogether, the divide-and-conquer runs in $\tilde{O}\!\left((1/\eps)^2\right)$ time (\cref{lemma:knapmain}).

    \section{Preliminaries}
    \label{sec:prelim}

We write $\N = \{0,1,2,\dots\}$ and $\N^+ = \{1,2,\dots\}$. For $n\in \N$ we write $[n] = \{1,2,\dots,n\}$. For a finite set $S$ we let $\abs{S}$ be the size of $S$ (i.e. the number of elements in $S$). For a sequence $A = \{a_i\}$, we let $\abs{A}$ be the length of $A$ and for an interval $[l, r] \in [1, \abs{A}]$ we write $A_{l, r}$ as a shorthand for the subsequence $\{a_l, a_{l + 1}, \cdots, a_r\}$.

\subsection{Problem Statements}

    In the Knapsack problem, the input is a list of $n$ items $(p_1,w_1),\dots,(p_n,w_n) \in \N\times \N$ together with a knapsack capacity $W \in \N$, and the optimal value is 
    \[\opt:= \max_{J\subseteq [n]} \Big \{\sum_{j\in J }p_j \, \Big \vert \,  \sum_{j\in J}w_j \le W \Big \}.\]

Given a Knapsack instance and a parameter $\eps \in (0,1)$, an \emph{$(1-\eps)$-approximation algorithm} is required to output a number $\sol$ such that $(1 - \eps)\opt \le \sol \le \opt$.

In this problem, we can assume $n =O(\eps^{-4})$ and hence $\log n = O(\log \eps^{-1})$.  For larger $n$,  Lawler's  algorithm \cite{lawler1979fast} for Knapsack in $O(n\log \frac{1}{\eps} + (\frac{1}{\eps})^4)$ time is already near-optimal. 

We will sometimes describe algorithms with approximation ratio $1-O(\eps)$ (or $1 - \tilde O(\eps)$), which can be made $1-\eps$ by scaling down $\eps$ by a constant factor (or a logarithmic factor) at the beginning.

\subsection{Knapsack Problem and Profit functions}
\label{sec:prelimknap}

In the knapsack problem,  assume $0<w_i\le W$ and $p_i>0$ for every item $i$. Then a trivial lower bound of the maximum total profit is $\max_j p_j$. At the beginning, we can discard all items $i$ with $p_i \le \frac{\eps}{n} \max_j p_j$, reducing the total profit by at most $\eps \max_j p_j$, which is only an $O(\eps)$ fraction of the optimal total profit. Therefore, we can assume $\frac{\max_j p_j}{\min_j p_j} \le \frac{n}{\eps}$.

For a set $I$ of items, 
we use $f_I$ to denote the \emph{profit function} defined as:
\begin{equation*}
f_I(x) = \max \Bigg \{ \sum_{i\in J} p_i  : \sum_{i \in J} w_i\le x, \;\; J \subseteq I \Bigg\}
\end{equation*}
over $x\in [0,+\infty)$. 

We adopt the terminology of Chan \cite{chan2018approximation} with some modification. We slightly abuse the term profit function to refer to any monotone non-decreasing step function over some domain range $x \in [l, r]$. The \textit{complexity} of a monotone step function refers to the number of its steps. The $(\max, +)$-convolution of two functions $f$ with domain $x \in [l_f, r_f]$ and $g$ with domain $x \in [l_g, r_g]$ is a function with domain $x \in [l_f + l_g, r_f + r_g]$ and is defined to be: 
$$(f\oplus g)(x) = \max_{x_f \in [l_f, r_f], x_g \in [l_g, r_g], x_f + x_g = x}f(x_f)+g(x_g).$$ 

The contour of a profit function with domain $x \in [x_1, x_k]$ can be described by a set $P(f)$ of $(2k - 1)$ 2-D points $$\{(x_1, y_1), (x_2, y_1), (x_2, y_2), (x_3, y_2), \cdots, (x_k, y_{k - 1}), (x_k, y_k)\}$$ with $x_1 < x_2 < \cdots < x_k$ and $y_1 < y_2 < \cdots < y_k$, where $f(x) = y_i$ for $x_i \le x < x_{i + 1} (1 \le i \le n)$.

Let $I_1,I_2$ be two disjoint subsets of items, and $I = I_1 \union I_2$. It is straightforward to see that $f_{I} = f_{I_1} \oplus f_{I_2}$.

For each item $i \in I$, we let its \emph{unit profit} be $p_i / w_i$.

    \subsection{$(1-\delta,\Delta)$ approximation up to $t$}
    We sometimes uses the notion of \emph{$(1-\delta,\Delta)$-approximation up to $t$}, as stated below.

    \begin{definition}[Approximation for Profit Functions]
          For functions $\tilde f, f$ with identical domains and real numbers $t,\Delta \in \R_{\ge 0}, \delta \in [0,1)$, we say that  $\tilde f$ is a  \emph{$(1-\delta,\Delta)$ approximation of $f$ up to $t$}, if for all $x$ in their domain,
          \[ \tilde f(x)\le f(x)\]
          holds, and 
          \[ \tilde f(x) \ge (1-\delta)f(x) - \Delta\]
          holds whenever $f(x)\le t$.
    \end{definition} 

For the case of $t=+\infty$, we simply omit the phrase ``up to $t$.''

We also refer to $(1,\Delta)$ approximation as \emph{$\Delta$-additive approximation}, and refer to $(1-\delta,0)$ approximation as \emph{$(1-\delta)$-multiplicative approximation}, or simply \emph{$(1-\delta)$ approximation}.




We now prove the following fact:
\begin{prop}
\label{prop:merge2}
Let $I_1,I_2$ be two disjoint subsets of items.
For $i\in \{1,2\}$, suppose $\tilde f_i$ is a $(1-\delta,\Delta_i)$ approximation of the profit function $f_{I_i}$ up to $t$.
Then, $f = \tilde f_1 \oplus \tilde f_2$ is a $(1-\delta,\Delta_1+\Delta_2)$ approximation of $f_{I_1\union I_2}$ up to $t$.
\end{prop}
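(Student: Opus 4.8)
The plan is to lean on the identity $f_{I_1\cup I_2} = f_{I_1}\oplus f_{I_2}$ recorded in \cref{sec:prelimknap}, together with the (obvious) monotonicity of $(\max,+)$-convolution: if $g_i\le h_i$ pointwise on a common domain for $i\in\{1,2\}$, then $g_1\oplus g_2\le h_1\oplus h_2$ pointwise, since $(g_1\oplus g_2)(x)$ is a maximum of sums $g_1(x_1)+g_2(x_2)$ each dominated by the corresponding $h_1(x_1)+h_2(x_2)$. First I would dispense with the upper-bound half of the definition: by hypothesis $\tilde f_i\le f_{I_i}$, so monotonicity gives $f=\tilde f_1\oplus\tilde f_2\le f_{I_1}\oplus f_{I_2}=f_{I_1\cup I_2}$ on their (common) domain $[0,\infty)$.

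Next I would prove the lower bound. Fix $x$ with $f_{I_1\cup I_2}(x)\le t$. Since $f_{I_1\cup I_2}=f_{I_1}\oplus f_{I_2}$ is a maximum over the finitely many breakpoints of two step functions, there is a split $x=x_1+x_2$ with $f_{I_1}(x_1)+f_{I_2}(x_2)=f_{I_1\cup I_2}(x)$. Each profit function is nonnegative (take $J=\emptyset$), so $f_{I_i}(x_i)\le f_{I_1}(x_1)+f_{I_2}(x_2)=f_{I_1\cup I_2}(x)\le t$, and hence the ``up to $t$'' guarantee applies at $x_i$, giving $\tilde f_i(x_i)\ge(1-\delta)f_{I_i}(x_i)-\Delta_i$. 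Summing and using $f(x)=(\tilde f_1\oplus\tilde f_2)(x)\ge\tilde f_1(x_1)+\tilde f_2(x_2)$ yields
\[ f(x)\ge(1-\delta)\bigl(f_{I_1}(x_1)+f_{I_2}(x_2)\bigr)-(\Delta_1+\Delta_2)=(1-\delta)f_{I_1\cup I_2}(x)-(\Delta_1+\Delta_2), \]
which is exactly the required inequality.

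There is no genuine obstacle here; the statement is essentially bookkeeping. The only two points needing a word of care are: (i) the optimal split $x=x_1+x_2$ realizing $f_{I_1\cup I_2}(x)$ is actually attained, which holds because profit functions are step functions of finite complexity so the maximum defining $\oplus$ ranges over a finite set; and (ii) the threshold $t$ must bound \emph{both} pieces $f_{I_i}(x_i)$, which follows from nonnegativity of profit functions as above. With these in hand, the two inequality chains close the argument.
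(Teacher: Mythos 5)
Your proof is correct and follows essentially the same bookkeeping route as the paper: use $f_{I_1\cup I_2}=f_{I_1}\oplus f_{I_2}$, monotonicity of $\oplus$ for the upper bound, and an optimal split together with the per-part guarantees for the lower bound. If anything, your write-up is the more careful one — you split the \emph{true} function $f_{I_1\cup I_2}$ for the lower-bound direction and justify via nonnegativity that each piece satisfies $f_{I_i}(x_i)\le t$ so the ``up to $t$'' guarantee applies, points the paper's terser (and somewhat garbled) proof glosses over.
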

\begin{proof}
For any $b\in [0,t]$ where $f(b)=f_1(b_1)+f_2(b_2)$ for $b_1 + b_2 = b$ and $b_1, b_2 \in [0, t]$, we have for $i \in [1, 2]$, $(1-\delta)f_i(b_i) - \Delta_i \le f_{I_i}(b_i)$. Since  $I_1$ and $I_2$ are disjoint, $f_{I_1\union I_2}(b)\ge f_{I_1}(b_1)+f_{I_2}(b_2)$. Hence, $f_{I_1}(b_1)+f_{I_2}(b_2) \le (1-\delta)f_1(b_1) - \Delta_1 + (1-\delta)f_2(b_2) - \Delta_2 = (1 - \delta)(f_1(b_1) + f_2(b_2)) - (\Delta_1 + \Delta_2) = (1 - \Delta)f(b) - (\Delta_1 + \Delta_2)$.

The converse direction can be verified similarly.
\end{proof}

Following Chan \cite{chan2018approximation} and Jin \cite{DBLP:conf/icalp/Jin19}, given a monotone step function $f$ with range contained in $\{-\infty, 0\} \union [A,B]$, one can
round the positive values of $f$ down to powers of $1/(1-\eps)$, and obtain another profit function $\tilde f$ which has complexity only $O(\eps^{-1}\log (B/A))$, and $(1-\eps)$-approximates $f$. 
In our algorithm we will always have $B/A \le \poly(n/\eps)$,  so we may always assume that the intermediate profit functions computed during our algorithm are monotone step functions with complexity $\tilde O(\eps^{-1})$ after rounding.




\section{Summary of Previous Techniques}
\subsection{Known Lemmas}
By known reductions (e.g., \cite{chan2018approximation,DBLP:conf/icalp/Jin19,ourprevious}), we can focus on solving the following cleaner problem, which already captures the main difficulty of knapsack:
\begin{restatable}{prob}{proknaprestate}
\label{probknap}
Assume $\eps\in (0,1/2)$ and $1/\eps \in \N^+$. Given a list $I$ of items $(p_1,w_1),\dots, (p_n,w_n)$ with weights $w_i\in \N$ and profits $p_i$ being multiples of $\eps$ in the interval $[1,2)$, compute a profit function that  $(1-\eps)$-approximates $f_I$ up to $2/\eps$. 
\end{restatable}

\begin{restatable}{lemma}{redknap}
	\label{lem:reduction-knap} If for some $c\ge 2$, \cref{probknap} can be solved in $\tilde O(n + 1/\eps^c)$ time, then $(1-\eps)$-approximating \emph{Knapsack} can also be solved in $\tilde O(n + 1/\eps^c)$ time. 
\end{restatable}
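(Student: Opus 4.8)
The plan is to carry out the classical ``guess-and-scale'' reduction that underlies all previous FPTASes for Knapsack (\cite{chan2018approximation,DBLP:conf/icalp/Jin19,ourprevious}), being careful that every subroutine fits inside the budget $\tilde O(n+1/\eps^c)$. First I would dispose of trivialities: replace $\eps$ by $1/\ceil{1/\eps}$ so that $1/\eps\in\N^+$, assume $\eps<1/2$, and note that for $n>\eps^{-4}$ Lawler's algorithm \cite{lawler1979fast} already runs in $O(n\log\tfrac1\eps+\eps^{-4})=\tilde O(n)$ time; hence we may assume $n\le\eps^{-4}$, so $\log(n/\eps)=O(\log\tfrac1\eps)$ and every $\tilde O(\cdot)$ factor below is $\poly\log(1/\eps)$. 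Since $\max_i p_i\le\opt\le n\max_i p_i$, I would enumerate the $O(\log n)=O(\log\tfrac1\eps)$ dyadic guesses $V\in\{\max_i p_i,2\max_i p_i,4\max_i p_i,\dots\}$ for $\opt$, run the procedure below for each, and output the maximum of the returned numbers; as every number produced is a valid lower bound (our approximate profit functions never exceed $f_I$, so evaluating at $W$ gives at most $\opt$), it suffices to treat the guess $V$ with $\opt\le V<2\opt$.

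Fix such a $V$. I would first discard every item with $p_i<\eps V/n$, losing at most $\eps V\le 2\eps\opt$ of profit, so the remaining items have $p_i\in[\eps V/n,V]$; partition them into \emph{small} items $I_s$ with $p_i\le \eps V$ and $O(\log\tfrac1\eps)$ \emph{large} classes $C_j=\{i:p_i\in[2^j,2^{j+1})\}$ over the (at most $\log_2\tfrac1\eps+O(1)$) values of $j$ with $2^{j+1}>\eps V$ and $2^j\le V$. For each large class $C_j$ I would round every profit down to the nearest multiple of $\eps 2^j$ (since $2^j$ is itself such a multiple and $p_i\ge 2^j$, this costs a factor $1-\eps$), then divide all profits in $C_j$ by $2^j$, obtaining an instance with weights in $\N$ and profits that are multiples of $\eps$ in $[1,2)$; applying the assumed solver for \cref{probknap} costs $\tilde O(|C_j|+1/\eps^c)$ and, after scaling profits back by $2^j$, yields a $(1-O(\eps))$-multiplicative approximation of $f_{C_j}$ up to $2^{j}\cdot(2/\eps)=2^{j+1}/\eps>V$. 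The reason ``up to $V$'' is enough is precisely the normalization $V=\Theta(\opt)$: in the convolution $f_I=\bigl(\bigoplus_j f_{C_j}\bigr)\oplus f_{I_s}$ evaluated at $W$, each operand is evaluated at a point where its value is at most $\opt\le V$.

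For $I_s$ I would use the classical greedy/LP gap: sort $I_s$ by decreasing unit profit $p_i/w_i$ and let $g$ be the step function that inserts items in this order, so $g(x)$ is the profit of the longest prefix of total weight $\le x$. Then $g\le f_{I_s}$, and since the fractional optimum at budget $x$ equals $g(x)$ plus a fraction of one further item of profit $\le\eps V$, we get $f_{I_s}(x)\le g(x)+\eps V$; rounding the positive values of $g$ down to powers of $1-\eps$ (they span a $\poly(n/\eps)$ range) gives a profit function $\tilde g$ of complexity $\tilde O(1/\eps)$ that $(1-\eps,\eps V)$-approximates $f_{I_s}$, computable in $\tilde O(n)$ time. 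Finally I would merge $\tilde g$ together with the $O(\log\tfrac1\eps)$ large-class functions by iterated $(\max,+)$-convolution, truncating values at $V$ and re-rounding to powers of $1-\eps$ after each step so that every intermediate function has complexity $\tilde O(1/\eps)$; one $(\max,+)$-convolution of two step functions of complexity $m$ takes $\tilde O(m^2)$ time (form the $\le m^2$ ``sum points'', sort by the weight coordinate, take a prefix maximum), so the whole merge costs $\tilde O(1/\eps^2)=\tilde O(1/\eps^c)$ because $c\ge2$. Iterating \cref{prop:merge2} (viewing each large-class function as a $(1-O(\eps),0)$-approximation and $\tilde g$ as a $(1-O(\eps),\eps V)$-approximation) shows the final function $\tilde f_I$ is a $(1-O(\eps),O(\eps V))$-approximation of $f_I$ up to $V$, hence $\tilde f_I(W)\in[(1-O(\eps))\opt-O(\eps V),\ \opt]=[(1-O(\eps))\opt,\ \opt]$ for the correct guess; rescaling $\eps$ by a constant turns this into a genuine $(1-\eps)$-approximation. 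The running time is $\tilde O(n)+\sum_j\tilde O(|C_j|+1/\eps^c)+\tilde O(n+1/\eps^2)=\tilde O(n+1/\eps^c)$ per guess, and there are only $\tilde O(1)$ guesses.

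The one genuinely delicate point—and the step I expect to need the most care—is the bookkeeping of thresholds: \cref{probknap} only promises accuracy up to $2/\eps$ on the \emph{scaled} profit axis, which for a class at scale $2^j$ is accuracy only up to $\Theta(2^j/\eps)$ on the true axis; this exceeds $V$ exactly because we normalized so that $V=\Theta(\opt)$ and restricted the large classes to $2^j=\Omega(\eps V)$, while everything smaller goes into $I_s$ and is handled by the greedy bound whose additive error $\le\eps V=O(\eps\opt)$ is precisely what we can afford. The remaining work—keeping every profit function at complexity $\tilde O(1/\eps)$ via rounding so the merge does not exceed $\tilde O(1/\eps^2)$, and chaining the error bounds through \cref{prop:merge2}—is routine.
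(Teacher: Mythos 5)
Your proof is correct, and while it shares the paper's skeleton---dyadic profit classes rescaled to $[1,2)$ with profits rounded to multiples of $\eps$, one call to the assumed solver for \cref{probknap} per class, and a greedy argument covering what the solver's threshold $2/\eps$ does not---it routes the reduction differently. The paper never guesses $\opt$: within each class it uses the greedy function of \cref{sortgreed} (additive error at most $2$, i.e.\ a $(1-O(\eps))$ factor once the class's value exceeds $\Omega(1/\eps)$) to handle large values, so the solver is only needed up to $2/\eps$ in the class's own scale, and then all $O(\log\frac{1}{\eps})$ class functions are merged with \cref{lemma:dc}. You instead normalize by a dyadic guess $V=\Theta(\opt)$, push everything with $p_i\le\eps V$ into one ``small'' class handled by a single greedy with additive error $\eps V$, and observe that for each remaining class the solver's threshold, rescaled, already exceeds $V$, so no per-class high-value patch is needed; you also replace \cref{lemma:dc} by brute-force $(\max,+)$-convolution of $\tilde O(1/\eps)$-complexity functions, which fits the budget since $c\ge 2$. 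What your route buys is self-containedness (no dependence on \cref{lemma:dc}) at the price of an outer loop over $O(\log\frac{1}{\eps})$ guesses and the bookkeeping that every returned value is a lower bound on $\opt$; the paper's route avoids guessing altogether and merges directly in $\tilde O(n+\eps^{-2})$ time. One small correction: since you truncate and re-round to powers of $1-\eps$ after each of the $O(\log\frac{1}{\eps})$ merge steps, the accumulated multiplicative loss is $1-O(\eps\log\frac{1}{\eps})$, so you must rescale $\eps$ by a logarithmic factor rather than a constant---this is harmless and is exactly the convention the paper sets up in its preliminaries.
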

The proof of \cref{lem:reduction-knap} can be seen in the appendix.

The following useful lemma allows us to merge multiple profit functions, which was proven by Chan using divide-and-conquer and improved algorithms for $(\min,+)$-convolution \cite{bremner2014necklaces,williams2014faster,chan2016deterministic}.
\begin{lemma}[\text{\cite[Lemma 2(i)]{chan2018approximation}}]
\label{lemma:dc}
Let $f_1,\dots, f_m$ be monotone step functions with total complexity $N$ and
ranges contained in $\{-\infty, 0\} \cup [A, B]$. Then we can compute a monotone step function that has complexity
$\tilde O(\frac{1}{\eps}\log B/A)$ and 
$(1-O(\eps))$-approximates $f_1 \oplus\dots \oplus f_m$, in $O(N) + \tilde O((\frac{1}{\eps})^2 m/2^{\Omega(\sqrt{\log(1/\eps)})} \log B/A)$ time.
\end{lemma}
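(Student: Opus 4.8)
The plan is to follow the divide-and-conquer strategy. First I would preprocess every input function: since each $f_i$ has range contained in $\{-\infty,0\}\cup[A,B]$, round its positive values down to the nearest power of $1/(1-\eps)$, exactly as in the rounding fact recalled in \cref{sec:prelim}. This costs $O(N)$ time in total, loses only a $(1-\eps)$ factor, and turns each $f_i$ into a monotone step function whose set of values has size $O(\eps^{-1}\log(B/A))$ --- hence of complexity $O(\eps^{-1}\log(B/A))$, since a monotone step function cannot have more steps than it has distinct values.

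Next I would lay the $f_i$'s at the leaves of a balanced binary tree of depth $O(\log m)$ and process it bottom-up: at an internal node whose two children have computed monotone step functions $g_L$ and $g_R$ (each of complexity $O(\eps^{-1}\log(B/A))$), compute $g_L\oplus g_R$ and re-round the result back down to complexity $O(\eps^{-1}\log(B/A))$. By repeated application of \cref{prop:merge2} in the purely multiplicative case $\Delta_i=0$, together with the rounding step, the function at the root $(1-\eps)^{O(\log m)}$-approximates $f_1\oplus\cdots\oplus f_m$. Running the re-rounding at each level with a rescaled accuracy $\eps/\Theta(\log m)$ makes the total loss $1-O(\eps)$; in the regime $m\le\poly(1/\eps)$ relevant to our applications this rescaling affects only $\tilde O(\cdot)$ factors, so the output complexity is still $\tilde O(\eps^{-1}\log(B/A))$ as required.

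The heart of the proof is performing a single merge $g_L\oplus g_R$ quickly. A $(\max,+)$-convolution of two monotone step functions of complexity $s$ costs $O(s^2)=O(\eps^{-2}(\log(B/A))^2)$ naively, so two ideas are needed. (i) \emph{Reduce to integer $(\min,+)$-convolution}: invert $g_L$ and $g_R$ to obtain monotone step functions of the profit variable (the inverse being $g^{-1}(y)=\min\{x:g(x)\ge y\}$), relabel their $O(s)$ distinct rounded profit values as a contiguous integer range, $(\min,+)$-convolve the two resulting weight sequences, and invert back; then invoke the subquadratic $(\min,+)$-convolution algorithm coming from progress on the problem \cite{bremner2014necklaces,williams2014faster,chan2016deterministic}, which runs in $\ell^2/2^{\Omega(\sqrt{\log \ell})}$ time on length-$\ell$ sequences. (ii) \emph{Shave the $\log(B/A)$ factor by a block decomposition}: partition $[A,B]$ geometrically into $O(\log(B/A)/\log(1/\eps))$ blocks, each spanning a value ratio $\poly(1/\eps)$. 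Within a block a rounded function has only $\tilde O(\eps^{-1})$ distinct values; and if $(g_L\oplus g_R)(x)=g_L(x_f)+g_R(x_g)$ lies in block $k$, then the larger summand lies in block $k-1$ or $k$, and if the smaller summand lies below block $k-1$ it contributes only an $O(\eps)$ fraction and may be dropped (or recorded as a single number per block). Hence each merge reduces to $O(\log(B/A)/\log(1/\eps))$ convolutions of size $\tilde O(\eps^{-1})$, giving a per-merge cost of $\tilde O(\eps^{-2}\log(B/A)/2^{\Omega(\sqrt{\log(1/\eps)})})$; summing over the $m-1$ internal nodes and adding the $O(N)$ preprocessing yields the claimed bound.

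The main obstacle is this single-merge step: realizing the monotone $(\max,+)$-convolution as an integer $(\min,+)$-convolution without damaging the multiplicative guarantee, and organizing the block decomposition so that $\log(B/A)$ enters the running time linearly (not quadratically) while the speedup exponent stays $2^{\Omega(\sqrt{\log(1/\eps)})}$ --- which requires verifying that all cross-block combinations are indeed negligible for a $(1-\eps)$-approximation. The remaining ingredients --- the initial rounding, the binary tree, and the compounding of the $(1-\eps)$ errors over $O(\log m)$ levels --- are routine.
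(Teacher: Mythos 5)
This statement is not proved in the paper at all: \cref{lemma:dc} is imported as a black box from Chan's paper (Lemma 2(i) of \cite{chan2018approximation}), so the comparison here is against Chan's proof, whose architecture you have essentially reconstructed: geometric rounding to powers of $1/(1-\eps)$, a balanced binary merge tree with accuracy rescaled by $\Theta(\log m)$, and a per-merge approximate $(\max,+)$-convolution done scale-by-scale via a reduction to integer $(\min,+)$-convolution with the $2^{\Omega(\sqrt{\log(1/\eps)})}$ speedup. That outline is the right one.

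However, your description of the single-merge step --- which you correctly identify as the heart of the matter --- has a genuine gap. In step (i) you propose to ``relabel the $O(s)$ distinct rounded profit values as a contiguous integer range'' and then $(\min,+)$-convolve the inverse (weight) sequences indexed by these labels. This is not a valid reduction: the $(\min,+)$-convolution $h(y)=\min_{y_1+y_2=y} g_L^{-1}(y_1)+g_R^{-1}(y_2)$ only computes $(g_L\oplus g_R)^{-1}$ if index addition mirrors profit addition, i.e.\ if the values lie on a common \emph{arithmetic} grid. After rounding to powers of $1/(1-\eps)$ the distinct values are geometrically spaced, and relabelling them by rank (or by exponent) destroys additivity, so the convolved sequence no longer encodes profit sums. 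The standard repair --- and what Chan actually does --- is to decompose the value range into scales of \emph{constant} ratio (say $[2^j,2^{j+1})$), and within scale $j$ round profits additively to multiples of $\eps 2^j$, so each function restricted to that scale becomes an integer sequence of length $O(1/\eps)$ to which the inverse trick and the subquadratic $(\min,+)$-convolution apply; the $O(\log(B/A))$ scales then enter the running time linearly, and the cross-scale terms are handled exactly by the ``smaller summand is either in a nearby scale or contributes an $O(\eps)$ fraction'' argument you sketch. Your choice of blocks spanning a ratio of $\poly(1/\eps)$ does not combine with this fix: to keep a $(1-\eps)$ multiplicative guarantee at the bottom of such a block you would need an arithmetic grid of size about $\eps^{-2}$ per block, not $\tilde O(\eps^{-1})$, which would blow the per-merge cost up to roughly $\eps^{-4}$. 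With constant-ratio scales and additive in-scale rounding the rest of your argument (tree depth, error compounding, $O(N)$ preprocessing, and the final complexity bound $\tilde O(\eps^{-1}\log(B/A))$) goes through as claimed.
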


The following reduction was used in \cite{ourprevious}.
\begin{lemma}
    \label{lemma:reduction}
    Given a list $I$ of $n$ items with $p_i$ being multiples of $\eps$ in interval $[1,2)$, and integer $1\le m \le n$ with $m=O(1/\eps)$, if for some $c\ge 2$ one can compute a profit function that $(m\eps)$-additively approximates $f_I$ up to $2m$ in $\tilde O(n + 1/\eps^c)$ time . Then one can $(1 - \eps)$-approximate \emph{Knapsack} in $\tilde O(n + 1/\eps^c)$ time.
\end{lemma}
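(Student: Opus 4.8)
The plan is to combine \cref{lem:reduction-knap} with a routine scale-splitting argument. By \cref{lem:reduction-knap} it suffices to solve \cref{probknap} within the stated time, i.e., to output, for a list of items with integer weights and profits that are multiples of $\eps$ in $[1,2)$, a profit function that $(1-\eps)$-approximates $f_I$ up to $2/\eps$. The observation that makes this possible is that on any solution whose value $v$ lies in a window $[m,2m]$, an $(m\eps)$-additive guarantee is automatically a $(1-\eps)$-multiplicative one, since there the additive slack $m\eps$ is at most $v\eps$. So it is enough to choose values of $m$ whose windows $[m,2m]$ cover the whole relevant range of $f_I$, run the hypothesized primitive once for each, and glue the outputs together.

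Concretely, I would invoke the primitive with $m=m_j:=\min(n,2^j)$ for every integer $j\ge 0$ with $2^j\le 2/\eps$. Each such $m_j$ satisfies $1\le m_j\le n$ and $m_j=O(1/\eps)$, so the hypothesis yields a profit function $\tilde f^{(j)}$ that $(m_j\eps)$-additively approximates $f_I$ up to $2m_j$ in $\tilde O(n+1/\eps^c)$ time. There are only $O(\log(1/\eps))$ indices $j$, so the total time is still $\tilde O(n+1/\eps^c)$. After rounding each $\tilde f^{(j)}$ down to powers of $1/(1-\eps)$ (legitimate since the relevant range ratio is at most $2n=\poly(n/\eps)$, at the price of one more multiplicative $(1-\eps)$ factor) each $\tilde f^{(j)}$ has complexity $\tilde O(1/\eps)$, and I can compute the pointwise maximum $\tilde f:=\max_j \tilde f^{(j)}$ in $\tilde O(1/\eps)$ further time.

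It remains to verify that $\tilde f$ is a $(1-O(\eps))$-approximation of $f_I$ up to $2/\eps$; rescaling $\eps$ by a constant turns this into a $(1-\eps)$-approximation, and plugging it into \cref{lem:reduction-knap} finishes the proof. The bound $\tilde f\le f_I$ is immediate. For the lower bound, fix $x$ and put $v:=f_I(x)$; the case $v=0$ is trivial, and otherwise $v\ge 1$ (every $p_i\ge 1$) and $v\le \sum_i p_i<2n$ (every $p_i<2$). Assume $v\le 2/\eps$ and let $j_0:=\lfloor\log_2 v\rfloor$, a valid index since $0\le j_0$ and $2^{j_0}\le v\le 2/\eps$. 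In the case $2^{j_0}\le n$ we have $m_{j_0}=2^{j_0}$ and $v\in[2^{j_0},2^{j_0+1})=[m_{j_0},2m_{j_0})$; in the case $2^{j_0}>n$ we have $m_{j_0}=n$ and $n<v<2n$, i.e.\ again $m_{j_0}<v<2m_{j_0}$. Either way $m_{j_0}\le v\le 2m_{j_0}$, so the ``up to $2m_{j_0}$'' guarantee of $\tilde f^{(j_0)}$ applies and $\tilde f(x)\ge\tilde f^{(j_0)}(x)\ge v-m_{j_0}\eps\ge v-v\eps=(1-\eps)v$, with the extra rounding factor absorbed into the $O(\eps)$.

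I do not anticipate a genuine obstacle here: the technical content of the paper lies in \cref{lemma:knapmain}, and this statement is a bookkeeping reduction. The only points that require a little care are the small-$n$ edge cases, which is precisely why $m_j$ is capped at $n$ and why the elementary bound $f_I<2n$ is invoked, so that some window $[m_j,2m_j]$ still contains $v$ even when $n\ll 1/\eps$; and the usual housekeeping of keeping every intermediate profit function at complexity $\tilde O(1/\eps)$ so that the pointwise maximum and the call to \cref{lem:reduction-knap} remain within the $\tilde O(n+1/\eps^c)$ budget.
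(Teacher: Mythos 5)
Your proposal is correct and follows essentially the same route as the paper: reduce via \cref{lem:reduction-knap} to \cref{probknap}, cover the value range by dyadic windows $[m,2m)$, invoke the hypothesized primitive for each, and take pointwise maxima after rounding, using the fact that an $m\eps$-additive guarantee is $(1-\eps)$-multiplicative once the value is at least $m$. Your capping of $m_j$ at $n$ (and the bound $f_I<2n$) is a small extra bit of care for the $n\ll 1/\eps$ edge case that the paper's proof glosses over, but it is the same argument.
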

\begin{proof}
    From \cref{lem:reduction-knap}, it suffices to approximate \cref{probknap}. We divide $[1, 2\eps ^ {-1})$ into $O(\log (1/\eps))$ many intervals $[m,2m)$ where $m$'s are powers of 2, and for each $m$, compute profit functions $f_m$ achieving $m\eps$-additive approximation up to $2m$. For each $x \ge 0$, let $m_x$ be such that $m_x \le f_I(x) < 2m_x$. Then $f_{m_x}(x) \in [f_I(x) - m_x\eps, f_I(x)]$. Since $f_I(x) \ge m_x$, $f_{m_x}(x) \in [f_I(x)(1 - \eps), f_I(x)]$. Thus taking the pointwise maxima of all $f_m$'s yields a $(1-\eps)$ approximation of $f_I$. Note that by using the rounding scheme mentioned before we can make the complexity of each profit function $\tilde O(1 / \eps)$. Therefore taking the pointwise maxima of all $f_m$'s takes $O(\log (1/\eps)) \times \tilde O(1 / \eps) = o(n + 1/\eps^c)$ time.
\end{proof}

As an ingredient, we use a corollary of Lemma 7 of \cite{chan2018approximation}. The corollary states that profit functions can be approximately more efficiently when there are few distinct profit values or when there is a good upper bound on profit values. We first present the lemma in its original form. A step function $f$ is $p$-uniform if the function values are of the form $-\infty, 0, p, 2p, \cdots lp$ for some non-negative integer $l$. A $p$-uniform function is pseudo-concave if the sequence of differences of consecutive $x$-breakpoints is non-decreasing.
\begin{lemma} [Lemma 7 of \cite{chan2018approximation}]
\label{lemma:chanoriginal}
Let $f_1, ..., f_m$ be monotone step functions with ranges contained in $\{-\infty, 0\} \cup [1, B]$. If every $f_i$ is $p_i$-uniform and pseudo-concave for some $p_i \in [1, 2],$ then we can compute a monotone step function that $(1 - \eps)$-approximates $\min(f_1 \oplus \cdots \oplus f_m, B)$ and complexity $\tilde{O}(\frac{1}{\eps})$ in $\tilde{O}(\sqrt{B} m / \eps)$ deterministic\footnote{The original lemma used the word "expected" since it was initially randomized due to a randomized construction for a number theoretical lemma, but Chan later derandomized it in the same paper. A simpler deterministic construction can be seen in Lemma 3.11 of \cite{ourprevious}.} time, assuming $B = \tilde O(1 / \eps).$
\end{lemma}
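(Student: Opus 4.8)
The plan is to exploit the $p_i$-uniform pseudo-concave structure to reduce the whole computation to combining a small number of essentially \emph{convex} step functions. The key fact is that a $p$-uniform pseudo-concave function, once we strip off a forced $-\infty$-prefix and its single ``activation step'' (the jump from value $0$ to value $p$), is convex: its remaining step widths are non-decreasing and its step heights are all $p$, so its slopes are non-increasing, and the $(\max,+)$-convolution of convex monotone step functions is just a merge of their steps by slope.

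First I would normalize the input. Since each $f_i$ is $p_i$-uniform with range in $\{-\infty,0\}\cup[1,B]$ and $p_i\ge 1$, it has at most $B+2$ breakpoints; and rounding every $p_i$ down to the nearest multiple of $\eps$ costs only a factor $1-\eps$ (because $p_i\ge 1$) while leaving only $O(1/\eps)$ distinct step heights. Next, if $f_i=-\infty$ on $[0,\beta_i)$, then $\bigoplus_i f_i$ is $-\infty$ on $[0,\sum_i\beta_i)$ and, shifted left by $\sum_i\beta_i$, equals the $(\max,+)$-convolution of the functions $x\mapsto f_i(x+\beta_i)$, so I may assume $f_i(0)=0$ for every $i$. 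Since we only need the convolution up to value $B=\tilde O(1/\eps)$, after truncating each function at $B$ and rounding its positive values down to powers of $1/(1-\eps)$, every function we retain has complexity $\tilde O(1/\eps)$.

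Now group the $f_i$ by their common (rounded) step height $p$; there are $O(\min(m,1/\eps))$ groups. Fix a group with step height $p$ and functions $f^{(1)},\dots,f^{(k)}$; each is an activation step of some width $a_t$ followed by a convex tail of equal-height-$p$ steps. The tails combine trivially: merge all their widths in non-decreasing order and keep the prefix of total height at most $B$, in $O(\sum_t|f^{(t)}|)$ time, producing a $p$-uniform pseudo-concave function. The real difficulty is the activation steps: deciding which functions to ``turn on'' is itself a miniature knapsack, and since $a_t$ may exceed the subsequent tail widths, the full $f^{(t)}$ need not be convex and a naive global merge-by-slope would overshoot $\bigoplus_t f^{(t)}$. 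This is exactly where I would invoke Chan's deterministic number-theoretic construction (a small additive basis of $\{1,\dots,B\}$ of size $\tilde O(\sqrt B)$; see the simple deterministic version in \cite[Lemma 3.11]{ourprevious}): it allows one to combine each group, and then the $O(\min(m,1/\eps))$ resulting group-functions of differing heights $p$ with one another, into a single monotone step function that $(1-\eps)$-approximates the truncated convolution, and by Chan's analysis this runs in $\tilde O(\sqrt B\,m/\eps)$ time overall, the $\tilde O(\sum_i|f_i|)=\tilde O(mB)$ cost of reading and merging the convex tails being absorbed into this bound.

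Finally I would undo the normalization: shift the result right by $\sum_i\beta_i$, take its pointwise minimum with $B$, and round positive values down to powers of $1/(1-\eps)$ so the output has complexity $\tilde O(1/\eps)$; the accumulated error is a $1-O(\eps)$ factor, absorbed by rescaling $\eps$ at the outset. I expect the main obstacle to be precisely the treatment of the activation steps within and across groups — importing the number-theoretic lemma and applying it so that the approximate $(\max,+)$-combination of the pseudo-concave pieces is produced in $\tilde O(\sqrt B\,m/\eps)$ time while keeping the output complexity down to $\tilde O(1/\eps)$; the merge of the convex tails, the $-\infty$-prefix bookkeeping, and the final rounding are all routine.
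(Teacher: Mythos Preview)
The paper does not prove this lemma; it is quoted verbatim as Lemma~7 of Chan~\cite{chan2018approximation} and used as a black box (the footnote only notes that the randomized construction in the original can be replaced by a deterministic one). So there is no in-paper proof to compare your attempt against.

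On the substance of your sketch: the within-group part is slightly off. A $p$-uniform pseudo-concave function with widths $w_1\le w_2\le\cdots$ has \emph{non-increasing} slopes $p/w_1\ge p/w_2\ge\cdots$ from the very first step, so the entire function (not just a ``tail'') is concave, and for functions sharing the same $p$ the exact $(\max,+)$-convolution is the greedy slope-merge of \emph{all} steps, first step included. There is no separate ``activation-step knapsack'' to solve within a group; the first step is simply the steepest and enters the merge like any other. The genuine difficulty in Chan's lemma is combining the $m$ concave functions with \emph{different} $p_i$'s: a naive sequence of SMAWK merges costs $\tilde O(m/\eps)$ per round and accumulates error, and the number-theoretic basis is used to cut the number of rounds to $\tilde O(\sqrt{B})$. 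Your proposal correctly identifies that this is where the $\sqrt{B}$ comes from, but it does not actually carry out that step --- you defer entirely to ``Chan's analysis'' --- so as written it is a pointer to the proof rather than a proof.
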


\begin{restatable}{corollary}{lemmachan}
\label{lemma:chan}
    Given a list $I$ of items $(p_1,w_1),\dots, (p_n,w_n)$ with weights $w_i\in \N$ and profits $p_i$ being multiples of $\eps$ in the interval $[1,2)$, if there are only $\Delta$ distinct profit values $p_i$, then for any $B = \tilde O(1 / \eps)$, one can $(1 - \eps)$-approximate $\min(f_I, B)$ (i.e. $f_I$ up to $B$) in $\tilde O(n+\Delta \sqrt{B} \eps ^ {-1})$ time.
\end{restatable}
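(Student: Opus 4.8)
Here is the plan. Corollary~\ref{lemma:chan} should follow from \cref{lemma:chanoriginal} by grouping items according to their profit value. Let $p^{(1)},\dots,p^{(\Delta)}$ be the distinct profit values occurring in $I$, and for each $j$ set $I_j=\{i\in I:p_i=p^{(j)}\}$, so that the $I_j$ partition $I$ and $f_I=f_{I_1}\oplus\cdots\oplus f_{I_\Delta}$. The first observation is that each $f_{I_j}$ is already a $p^{(j)}$-uniform pseudo-concave step function: since every item in $I_j$ carries the same profit $p^{(j)}$, the cheapest way to collect total profit $k\,p^{(j)}$ is to take the $k$ lightest items of $I_j$, so if $w_{(1)}\le w_{(2)}\le\cdots$ are the sorted weights of $I_j$ then $f_{I_j}$ has breakpoints at the prefix sums $w_{(1)},\,w_{(1)}+w_{(2)},\dots$ with values $p^{(j)},2p^{(j)},\dots$; the gaps between consecutive breakpoints are $w_{(1)}\le w_{(2)}\le\cdots$, hence non-decreasing, and all values are multiples of $p^{(j)}\in[1,2)\subseteq[1,2]$.

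Next I would truncate each group's function so that the hypotheses of \cref{lemma:chanoriginal} hold. We may assume $B\ge 1$ (otherwise $\min(f_I,B)$ is trivially a two-valued step function). Let $B_j:=\ceil{B/p^{(j)}}\cdot p^{(j)}$, a multiple of $p^{(j)}$ with $B\le B_j<B+2$, and put $g_j:=\min(f_{I_j},B_j)$; truncating only deletes breakpoints of value $\ge B_j$, so $g_j$ remains $p^{(j)}$-uniform and pseudo-concave, its range lies in $\{-\infty,0\}\union[1,B']$ for $B':=\ceil{B}+2=\tilde O(1/\eps)\ge\max_j B_j$, it has complexity $O(\ceil{B/p^{(j)}})=O(B+1)=\tilde O(1/\eps)$, and it depends only on the $O(B)$ lightest items of $I_j$. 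Applying \cref{lemma:chanoriginal} to $g_1,\dots,g_\Delta$ with cap $B'$ yields, in $\tilde O(\sqrt{B'}\,\Delta/\eps)=\tilde O(\Delta\sqrt{B}/\eps)$ deterministic time, a step function $g$ of complexity $\tilde O(1/\eps)$ that $(1-\eps)$-approximates $\min(g_1\oplus\cdots\oplus g_\Delta,\,B')$; the algorithm outputs $\min(g,B)$.

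It remains to verify correctness, and the main (routine) subtlety is the truncation bookkeeping. The key identity is $\min(g_1\oplus\cdots\oplus g_\Delta,B)=\min(f_I,B)$ pointwise: since $g_j\le f_{I_j}$ the left side is $\le$ the right; for the reverse, fix $x$ and an optimal decomposition $x=\sum_j x_j$ of $f_I(x)$ — if $f_I(x)\le B$ then every $f_{I_j}(x_j)\le B\le B_j$, so no term is truncated and the two sides agree, while if $f_I(x)>B$ then either no term is truncated (and the sum still exceeds $B$) or some $f_{I_j}(x_j)>B_j\ge B$ (and then $g_j(x_j)=B_j\ge B$), so in all cases $(g_1\oplus\cdots\oplus g_\Delta)(x)\ge\min(f_I(x),B)$. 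Combined with the standard monotonicity fact that, for $B\le B'$, any $(1-\eps)$-approximation of $\min(F,B')$ composed with $\min(\cdot,B)$ is a $(1-\eps)$-approximation of $\min(F,B)$, this shows $\min(g,B)$ is a $(1-\eps)$-approximation of $\min(f_I,B)$. Finally, the preprocessing — bucketing the $n$ items by profit and extracting the $O(B)$ lightest items of each bucket — costs $O(n\log n)=\tilde O(n)$, so the total running time is $\tilde O(n+\Delta\sqrt{B}/\eps)$. I do not anticipate any real obstacle; the only care needed is rounding the per-group cap up to a multiple of $p^{(j)}$ (to preserve $p^{(j)}$-uniformity) and allowing the additive-$O(1)$ slack $B'=\ceil B+2$ in the cap fed to \cref{lemma:chanoriginal}.
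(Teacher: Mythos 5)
Your proposal is correct and follows essentially the same route as the paper: partition the items by their $\Delta$ distinct profit values, observe that each group's profit function is $p_j$-uniform and pseudo-concave via sorted weights, truncate at (roughly) $B$, and invoke \cref{lemma:chanoriginal} to merge, for total time $\tilde O(n+\Delta\sqrt{B}\,\eps^{-1})$. Your only addition is the careful bookkeeping of rounding each group's cap up to a multiple of $p^{(j)}$ and allowing the $O(1)$ slack in the cap fed to \cref{lemma:chanoriginal}, a detail the paper's proof glosses over by truncating directly at $B$.
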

\begin{proof}
    Note that $I = I_1 \cup I_2 \cup \cdots \cup I_{\Delta}$ where each $I_j$ contains items of the same profit value. It suffices to compute $\min(f_{I_1} \oplus \cdots \oplus f_{I_{\Delta}}, B)$. Fix $j \in [1, \Delta]$. If we let ${w ^ {\prime}}_k$ be the $k$-th smallest weight in $I_j$, then:
    \begin{itemize}
        \item For every $0 \le k \le \abs{I_j} - 1$, for $x \in \left[\sum_{1 \le k ^ {\prime} \le k}{{w ^ {\prime}}_{k ^ {\prime}}}, \sum_{1 \le k ^ {\prime} \le k + 1}{{w ^ {\prime}}_{k ^ {\prime}}}\right)$ we have $f_{I_j}(x) = kp_j$.
        \item $f_{I_j}\left(\sum_{1 \le k ^ {\prime} \le \abs{I_j}}{{w ^ {\prime}}_{k ^ {\prime}}}\right) = \abs{I_j}p_j$.
    \end{itemize} 
    It is easy to see that $f_{I_j}$ is $p_j$-uniform and pseudo-concave with $p_j \in [1, 2]$.  Furthermore, for each $j$ it suffices to compute $\min(f_{I_j}, B)$, whose range is contained in $\{-\infty, 0\} \cup [1, B]$. From \cref{lemma:chanoriginal} we can compute $\min(f_{I_1} \oplus \cdots \oplus f_{I_{\Delta}}, B)$ in $\tilde{O}(\sqrt{B} m / \eps)$ time. Constructing $\min(f_{I_j}, B)$ all $j$'s takes $\tilde O(n)$ time. The total time complexity is $\tilde O(n+\Delta \sqrt{B} / \eps)$.
\end{proof}

\subsection{Greedy Exchange Lemma}
We now introduce the Greedy Exchange Lemma previously employed in \cite{ourprevious}, used to prove the following lemma:
\begin{restatable}{lemma}{lemmaourprevious} [Lemma 3.5 of \cite{ourprevious}]
    \label{lem:greedyweaker}
        Given a list $I$ of $n$ items with $p_i$ being multiples of $\eps$ in interval $[1,2)$, and integer $1\le m \le n$ with $m=O(1/\eps)$, one can compute a profit function that $(m\eps)$-additively approximates $f_I$ up to $2m$ in $O(n+\eps^{-11/5}/2^{\Omega(\sqrt{\log (1/\eps)})} )$ time .
\end{restatable}
We will follow the framework in \cite{ourprevious}. Given items $(p_1,w_1),\dots,(p_n,w_n)$, where $p_i\in[1,2)$ are multiples of $\eps$, we sort them by non-increasing order of unit profit: $p_1/w_1\ge p_2/w_2\ge\dots \ge p_n/w_n$. 

Then, we consider prefixes of this sequence of items, and define the following measure of diversity:
\begin{definition}[$D(i)$]
   For $1\le i \le n$, let $D(i)= \min_{J} C([i]\setminus J)$, where the minimization is over all subsets $J \subseteq [i]$ with $|J|\le 2m$, and $C([i] \setminus J)$ denote the number of distinct values in $\{p_j : j\in [i] \setminus J\}$. 
   \label{def:d}
\end{definition}
We have the following immediate observations about $D(i)$:
\begin{obs}
    \begin{enumerate}
        \item For all $2\le i\le n$, $0\le D(i) - D(i-1)\le 1$.
        \item $D(i)$ (and the minimizer $J$) can be computed in $\tilde O(i)$ time by the following greedy algorithm: start with all values $p_1,p_2,\cdots,p_i$.  Repeat the following up to $2m$ times: remove the value $p_j$ with the minimum multiplicity, and add $j$ into $J$. 
    \end{enumerate}
    \label{obs:D}
\end{obs}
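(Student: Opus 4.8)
The plan is to handle the two parts separately, in both cases by elementary manipulations of the set function $C(\cdot)$ counting distinct profit values, which is monotone under set inclusion.

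For the first part, I would obtain the lower bound $D(i)\ge D(i-1)$ by restriction: given an optimal deletion set $J\subseteq[i]$ for index $i$ (so $|J|\le 2m$ and $C([i]\setminus J)=D(i)$), the set $J':=J\cap[i-1]$ is feasible for index $i-1$, and since $i\notin[i-1]$ we have $[i-1]\setminus J'=[i-1]\setminus J\subseteq[i]\setminus J$, whence $D(i-1)\le C([i-1]\setminus J')\le C([i]\setminus J)=D(i)$. For the upper bound $D(i)\le D(i-1)+1$, I would reuse an optimal deletion set $J$ for index $i-1$ unchanged on index $i$: since $[i]\setminus J=([i-1]\setminus J)\cup\{i\}$, adjoining the single item $i$ creates at most one new profit value, so $C([i]\setminus J)\le D(i-1)+1$.

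For the second part the crux is a closed form for $D(i)$. Partition $[i]$ into profit classes $V_1,\dots,V_q$ with $|V_1|\le\cdots\le|V_q|$, where $q=C([i])$, and set $k^\star:=\max\{k\ge 0:|V_1|+\cdots+|V_k|\le 2m\}$. I would show $D(i)=q-k^\star$: any deletion set that leaves $d$ distinct values must delete all items of some $q-d$ whole classes, so $2m\ge|V_1|+\cdots+|V_{q-d}|$ forces $q-d\le k^\star$, i.e.\ $d\ge q-k^\star$; conversely deleting $V_1\cup\cdots\cup V_{k^\star}$ is feasible and leaves $q-k^\star$ values. Then I would check that the stated greedy (remove one item from a currently-smallest profit class, repeated at most $2m$ times) realizes this: because touching a class leaves it weakly smallest until emptied, the procedure empties classes in nondecreasing size order, so after $\min(2m,i)$ removals it has emptied exactly $V_1,\dots,V_{k^\star}$ and, if budget remains, depleted $V_{k^\star+1}$ (which exists since $k^\star<q$ in this regime) by fewer than $|V_{k^\star+1}|$ items, hence without eliminating it; the resulting $J$ satisfies $|J|\le 2m$ and $C([i]\setminus J)=q-k^\star=D(i)$, so it certifies both $D(i)$ and a minimizer.

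For the running time: if $i\le 2m$ then $D(i)=0$ with $J=[i]$, so assume $i>2m$, hence $2m=\tilde O(i)$. Sorting $p_1,\dots,p_i$ yields the class sizes in $O(i\log i)$ time; keeping the nonempty classes in a min-heap keyed by current size lets each of the at most $2m$ deletions run in $O(\log i)$, for $\tilde O(i)$ in total. I expect the only genuine subtlety to be the exchange argument for the greedy in the second part — verifying that ``always remove from the smallest class'' empties classes in exactly sorted-size order irrespective of tie-breaking, which rests on the invariant that once a class is touched it stays weakly smallest until empty; everything else is bookkeeping.
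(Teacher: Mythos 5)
Your proposal is correct. The paper itself offers no proof of \cref{obs:D} --- it is asserted as an ``immediate observation'' --- so there is no argument of the paper's to compare against; your write-up simply supplies the details the paper takes for granted. Both halves check out: the restriction/extension argument for $0\le D(i)-D(i-1)\le 1$ is exactly the natural one, and your closed form $D(i)=q-k^\star$ together with the invariant that a class, once touched, stays weakly smallest until emptied correctly certifies that the greedy (which removes one item of a currently least-frequent value per step, up to $2m$ steps) empties precisely the $k^\star$ smallest classes and hence outputs a minimizer; the sorting-plus-heap implementation and the case split on $i\le 2m$ give the claimed $\tilde O(i)$ bound.
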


\begin{definition}[$i ^ {\Delta}$]
   For some parameter $\Delta$, define ${i ^ {\Delta}} \in \{1,2,\dots,n\}$ be the maximum such that $D({i ^ {\Delta}}) \le \Delta$, which can be found using \cref{obs:D} with a binary search in $\tilde O(n)$ time. 
   \label{def:idelta}
\end{definition}

In \cite{ourprevious}, the following lemma was proven. Recall that \cref{lemma:chan} guarantees a faster run time when we have a bound on profit values. The goal of the lemma is to impose a bound on the total profit items in $[n] \setminus [{i ^ {\Delta}}]$ can contribute. On a high level, it argues that, given any subset $S$ of items, we can swap out some items in $S \intersect ([n] \setminus [{i ^ {\Delta}}])$ with items in $[{i ^ {\Delta}}]$, resulting in a new subset $\tilde{S}$ that is not too much worse than $S$, but the contribution from $[n] \setminus [{i ^ {\Delta}}]$ to $\tilde S$ is small.
\begin{restatable}{lemma}{greedyexchange}[Greedy Exchange Lemma, Lemma 3.9 of \cite{ourprevious}]
\label{lemma:exchange}
     Let $S\subseteq [n]$ 
     be any item set with total profit $\sum_{s\in S}p_s \le 2m$. Let $\Delta = \omega(\eps ^ {-1 / 2} \log ^ {1 / 2}{(1 / \eps)})$ be some parameter. Let $B:=c\eps^{-1}/\Delta$ for some universal constant $c\ge 1$.
     
     Then, there exists an item set $\tilde S \subseteq [n]$, such that 
     \begin{equation}
     \label{eqn:req1}
     \tilde  p:=\sum_{s \in \tilde S \cap ([n] \setminus [{i ^ {\Delta}}])} p_s \le B,
     \end{equation}
           and
           \begin{equation}
     \label{eqn:req2}
          \sum_{s \in \tilde  S } p_s  \ge (1-\eps) \sum_{s \in   S } p_s,
           \end{equation}
          and
          \begin{equation}
     \label{eqn:req3}
          \sum_{s \in \tilde  S } w_s  \le  \sum_{s \in   S } w_s.
          \end{equation}
\end{restatable}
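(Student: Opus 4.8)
The plan is a greedy exchange argument whose engine is the \emph{unit‑profit monotonicity} of the sorted order $p_1/w_1\ge\dots\ge p_n/w_n$: every item of $[i^\Delta]$ has unit profit at least $p_{i^\Delta}/w_{i^\Delta}$, while every item of $[n]\setminus[i^\Delta]$ has unit profit at most $p_{i^\Delta}/w_{i^\Delta}$. The point I want to exploit is that if we delete from $S$ a set $D_L$ of ``low'' items (those in $[n]\setminus[i^\Delta]$) and add instead a sub‑multiset $T$ of unused ``high'' items (those in $[i^\Delta]\setminus S$) \emph{with the same total profit}, then the total weight cannot increase: writing $w_s=p_s/(p_s/w_s)$ gives $\sum_{t\in T}w_t\le\frac{w_{i^\Delta}}{p_{i^\Delta}}\sum_{t\in T}p_t=\frac{w_{i^\Delta}}{p_{i^\Delta}}\sum_{s\in D_L}p_s\le\sum_{s\in D_L}w_s$. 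Hence $\tilde S:=(S\setminus D_L)\cup T$ automatically satisfies \eqref{eqn:req2} with equality and \eqref{eqn:req3}, and it remains only to choose $D_L,T$ so that the residual low profit of $\tilde S$ drops to at most $B$.

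So, after disposing of the trivial case (if $\sum_{s\in S\cap([n]\setminus[i^\Delta])}p_s\le B$, take $\tilde S=S$), I would keep the high part $S\cap[i^\Delta]$ fixed and look for a common value between the two sides of a matching problem. On one side, the admissible values of $P:=\sum_{s\in D_L}p_s$ are exactly the subset sums of $\{p_s:s\text{ low in }S\}$ lying in the window from $\bigl(\sum_{\text{low }s}p_s\bigr)-B$ up to $\sum_{\text{low }s}p_s$ (the lower end is what forces \eqref{eqn:req1}); since near the top of a subset‑sum set consecutive sums differ by less than $2$, this window of width $B\gg1$ already contains $\Omega(B)$ admissible targets. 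On the other side, the realizable values of $\sum_{t\in T}p_t$ are the subset sums of $\{p_t:t\in C\}$ with $C:=[i^\Delta]\setminus S$. If these two sets meet in the window, we are done.

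The crux, which I expect to be the main obstacle, is establishing that they do meet, and this is exactly where the definition of $i^\Delta$ is used. In the nontrivial case $i^\Delta<n$, so $D(i^\Delta+1)>\Delta$, which already forces $i^\Delta\ge\Delta+2m$; and $D(i^\Delta)\le\Delta$ says that $[i^\Delta]$ loses at most $\Delta$ distinct profit values after deleting some $J$ with $\abs{J}\le 2m$. Since $\sum_{s\in S}p_s\le 2m$ and every high item has profit at least $1$, $S$ uses at most $2m$ items of $[i^\Delta]$, so $C$ still contains essentially all of this ``low‑complexity core.'' Rescaling profits to integers in an interval of length $1/\eps$, a subset‑sum density result from additive combinatorics (in the spirit of S\'ark\"ozy's theorem) — which needs precisely a hypothesis of the form $\Delta=\omega(\sqrt{\eps^{-1}\log\eps^{-1}})$, matching the condition in the lemma — shows that the subset sums realizable from this core cover a long arithmetic progression, up to a small common difference $d$. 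One then matches the $\Omega(B)$ admissible targets against this progression; because the admissible targets are themselves multiples of $\eps$ and plentiful, one has enough freedom to shift $D_L$ within the residue class modulo $d$ to land on the progression, which absorbs the gcd bookkeeping and produces the required $D_L$ and $T$.

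Finally I would clean up the regimes the density argument does not directly reach. When $\sum_{s\in S}p_s$ is already large enough (say at least $2/\eps$), the crude greedy refill of $C$ in decreasing order of unit profit loses at most one item's profit, i.e. $O(1)\le\eps\sum_{s\in S}p_s$, and we are done without additive combinatorics. When $C$ or its core is too small to invoke the subset‑sum result — which, given $i^\Delta\ge\Delta+2m$ and the $2m$‑item cap on $S\cap[i^\Delta]$, can only occur when $m$ is comparable to $\Delta$ — a direct estimate on the attainable low profit either contradicts nontriviality or pushes $\sum_{s\in S}p_s$ back into the large regime. Stitching these three cases together is the delicate part of the write‑up, but each piece reduces to the same mechanism above, and assembling them yields the stated $\tilde S$.
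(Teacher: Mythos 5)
Your high-level ingredients match the paper's: the unit-profit monotonicity across the cut at $i^{\Delta}$, the fact that in the nontrivial case the unused prefix items $[i^{\Delta}]\setminus \tilde S$ still carry at least $\Delta$ distinct profit values (because $D(i^{\Delta})$ is a minimum over deletions of up to $2m$ items), and a subset-sum density result from additive combinatorics driven by $\Delta=\omega(\eps^{-1/2}\log^{1/2}(1/\eps))$. But there is a genuine gap at the crux you yourself flag: you require the swapped-in high items $T$ to have \emph{exactly} the same total profit as the removed low items $D_L$, and you propose to achieve this by covering a long arithmetic progression with the high-side subset sums and then ``shifting $D_L$ within the residue class modulo $d$.'' Nothing in your sketch guarantees that the $\Omega(B)$ admissible low-side sums hit the right residue class, nor that an exact coincidence of the two subset-sum sets exists at all (e.g.\ the low items in $S$ and the available high items can have incompatible profit values, and the density theorem available here is one-sided and approximate, not an exact AP-covering statement with controllable common difference). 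The lemma's statement deliberately does not need exactness: the $(1-\eps)$ factor in \eqref{eqn:req2} is there precisely to absorb a small profit deficit in the exchange. The paper exploits this by applying the Bringmann--Wellnitz-type density lemma to a target $t=\frac{(1-\eps)p^{*}}{\eps}+\frac{\eps^{-1}}{\Delta}$, obtaining a replacement set $R$ with profit $\tilde p$ satisfying only $1/\Delta\le \tilde p-(1-\eps)p^{*}\le 9/\Delta$, so that $\tilde p\le p^{*}$ (weights do not increase, by the unit-profit ordering) while the loss is at most an $\eps$-fraction of the exchanged profit.

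A second, structural difference is how error is controlled globally. You do a one-shot constructive swap from $S$, which is why you are forced toward exact matching (otherwise you must quantify the aggregate loss and the ranges in which the density lemma applies for a possibly large removed mass, and your ``clean-up regimes'' paragraph is too vague to certify this). The paper instead uses an extremal argument: $\tilde S$ is \emph{defined} as the maximizer of the discounted potential $\sum_{s\in \tilde S\cap[i^{\Delta}]}p_s+\sum_{s\in\tilde S\cap([n]\setminus[i^{\Delta}])}(1-\eps)p_s$ subject to $\sum_{\tilde S}w_s\le\sum_S w_s$ and $\sum_{\tilde S}p_s\le\sum_S p_s$, so \eqref{eqn:req2} and \eqref{eqn:req3} are immediate, and \eqref{eqn:req1} follows by contradiction: if the low-side profit exceeded $B$, one small exchange of a chunk $K$ with profit $p^{*}\in(B,B+2]$ against the set $R$ above would strictly increase the potential without violating the constraints. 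This removes any need for iteration, exact matching, or residue bookkeeping. To repair your argument you would either have to prove the exact-intersection claim (which the known density results do not give) or relax to approximate matching and then supply the missing bookkeeping that the paper's potential-maximizer trick provides for free.
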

The proof is included in the appendix for completeness.

In \cite{ourprevious}, we have $\Delta := \lfloor\epsilon ^ {-5 / 8}\rfloor$, which satisfies the condition in \cref{lemma:exchange}.

The reduction in \cite{ourprevious} uses the following lemma for $(1-\eps)$-approximating knapsack up to a small $B$:
\begin{lemma} [Follows from Lemma 17 of \cite{DBLP:conf/icalp/Jin19}] \label{theo:jin1}
            Given a list $I$ of items $(p_1,w_1),\dots, (p_n,w_n)$ with weights $w_i\in \N$ and profits $p_i$ being multiples of $\eps$ in the interval $[1,2)$, one can $(1-\eps)$-approximate the profit function $f_I$ up to $B$ in $\tilde O(n+\epsilon ^ {-2}B ^ {1 / 3} / 2 ^ {\Omega(\sqrt{\log(1 /  \epsilon)})})$ time.        
\end{lemma}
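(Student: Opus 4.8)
The claimed bound is essentially Lemma~17 of~\cite{DBLP:conf/icalp/Jin19} (stated there for the full profit function, i.e.\ for $B=\Theta(\eps^{-1})$; truncating every intermediate step function at $B$ produces the dependence on $B$), so the plan is to reconstruct that argument from the tools already set up. Fix the target bound $B$. Since every profit is a multiple of $\eps$ lying in $[1,2)$, any subset with total profit at most $B$ uses at most $B$ items, and we may assume integer weights. \emph{Step 1 (split by weight).} Partition $I$ into $O(\log(n/\eps))$ dyadic classes $I=\bigcup_j I^{(j)}$, where $I^{(j)}$ is the set of items with weight in $[2^{j},2^{j+1})$. It suffices to $(1-O(\eps))$-approximate each $f_{I^{(j)}}$ up to $B$ by a step function of complexity $\tilde O(\eps^{-1})$ and then merge all of these by \cref{lemma:dc}; merging $O(\log(n/\eps))$ such functions costs only $\tilde O(\eps^{-2}/2^{\Omega(\sqrt{\log(1/\eps)})})$, which is within budget. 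So fix one class whose weights all lie in $[w,2w)$.

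\emph{Step 2 (structure inside a weight class).} Group the items of this class by profit value: at most $1/\eps$ distinct values occur, and for each value $p$ the items of profit $p$, sorted by weight, define a $p$-uniform pseudo-concave step function, exactly as in the proof of \cref{lemma:chan}. Feeding all $1/\eps$ of these into \cref{lemma:chanoriginal} gives running time $\tilde O(\eps^{-2}\sqrt B)$ — the exponent $1/2$ on $B$, not the desired $1/3$. \emph{Step 3 (the $B^{1/3}$ speedup).} To improve the exponent one uses the number-theoretic lemma of Chan, refined by Jin to act in ``multiple layers'': for a suitable parameter one sets aside a small set of ``exceptional'' items and re-bins the weights of the rest onto a common coarse grid, so that the items break into $\tilde O(B^{1/3})$ groups, each of whose profit functions — after the re-binning — is coarsely uniform (only few relevant steps over $[1,B]$) and hence cheap to handle via \cref{lemma:chanoriginal}; the $O(\eps)$ multiplicative loss incurred by the re-binning is absorbed by the rounding scheme, and the few exceptional items are folded in directly. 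Merging the resulting $\tilde O(B^{1/3})$ profit functions of complexity $\tilde O(\eps^{-1})$ by \cref{lemma:dc} costs $\tilde O(\eps^{-2}B^{1/3}/2^{\Omega(\sqrt{\log(1/\eps)})})$, and summing over the $O(\log(n/\eps))$ weight classes preserves this bound.

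The bookkeeping in Steps~1 and~2 and the two invocations of \cref{lemma:dc} are routine. The real obstacle is Step~3: specifying the re-binning of the weights onto a common coarse grid so that every resulting profit function genuinely becomes coarsely uniform, bounding the number of exceptional items, and — crucially — showing that \emph{iterating} the re-binning (the ``multiple layers'') drives the exponent down to $1/3$ rather than stopping at $2/5$, all while keeping the accumulated relative error at $O(\eps)$. This is exactly the technical heart of~\cite{DBLP:conf/icalp/Jin19}, and since we need the statement only as a black box, the cleanest route is to cite that construction rather than reproduce it here.
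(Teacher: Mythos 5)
The paper itself gives no proof of this lemma---it is invoked purely as a black box, ``follows from Lemma 17 of \cite{DBLP:conf/icalp/Jin19}''---which is exactly where your argument ends up, so your approach coincides with the paper's. (A minor quibble: the Chan/Jin number-theoretic re-binning acts on the \emph{profits}, not the weights, but since you ultimately defer to the cited construction rather than reproving it, this does not affect the conclusion.)
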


Finally, a weaker version of \cref{lemma:knapmain} was proven in \cite{ourprevious} using random partitioning.
\begin{restatable}{lemma}{knapsackmainweakerrestate} [Lemma 3.6 of \cite{ourprevious}]
 \label{lemma:knapmainweaker}

 \Cref{prob:knapmain} can be solved in ramdomized $\tilde O(n ^ {{4}/{5}}\epsilon ^ {-{7}/{5}}/2^{\Omega(\sqrt{\log (1/\eps)})})$ time.
\end{restatable}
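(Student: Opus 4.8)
The plan is to prove the lemma by a randomized group‑splitting argument in which the random partition is designed so that each piece only needs to be computed up to a \emph{small} profit bound. Since $f_I$ takes values in $[0,2n]$, a $(1-\eps/2)$-multiplicative approximation of $f_I$ is automatically an $(n\eps)$-additive one, so it suffices to $(1-O(\eps))$-approximate $f_I$; moreover it is enough to do so on each of the $\tilde O(1)$ dyadic value‑scales $[2^a,2^{a+1})$ with $2^a\ge n\eps$ (for $f_I(x)\le n\eps$ the zero function already suffices), and one can freely coarsen the profits at scale $a$ down to multiples of a power‑of‑two multiple of $\eps$ of size $\Theta(n\eps/2^a)$, since any solution with profit $<2^{a+1}$ uses $<2^{a+1}$ items, so the incurred loss is $O(n\eps)$; this leaves only $\tilde O\!\left(\min(n,\,2^a/(n\eps))\right)$ distinct profit values at scale $a$.

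Next I would fix a parameter $k$ (to be optimized) and partition the $n$ items uniformly at random into groups $I_1,\dots,I_k$; a Chernoff bound gives $|I_j|=\tilde O(n/k)$ for every $j$ with high probability. The heart of the argument is a concentration claim. Fix a scale $a$ and a target $x$, and let $S^\star$ be an optimal subset realizing $f_I(x)=P\in[2^a,2^{a+1})$ with $\sum_{s\in S^\star}w_s\le x$. Since the items are assigned to groups independently and every $p_s\in[1,2)$, the quantity $P_j:=\sum_{s\in S^\star\cap I_j}p_s$ has mean $P/k$ and variance $O(P/k)$, so Bernstein's inequality yields $P_j\le B^{(a)}:=\tilde O(2^a/k)$ (meaning $O(2^a/k)$ plus polylogarithmic lower‑order terms) with failure probability $\poly(\eps)$. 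Union‑bounding over the $k$ groups, the $\poly(1/\eps)$ breakpoints of the rounded $f_I$, and the $\tilde O(1)$ scales keeps all of this simultaneously valid at a $\tilde O(1)$ loss.

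With this in hand, for each scale $a$ I would: (i) for each group $I_j$, compute a $(1-\eps)$-approximation $h^{(a)}_j$ of $\min(f_{I_j},B^{(a)})$, using \cref{lemma:chan} (which is fast here because $B^{(a)}=\tilde O(1/\eps)$ and the number of distinct profits is small) or \cref{theo:jin1}, whichever is cheaper for the current parameters; (ii) merge $h^{(a)}_1\oplus\cdots\oplus h^{(a)}_k$ by \cref{lemma:dc} into a step function $g^{(a)}$; and finally output $\max_a g^{(a)}$ (and $0$ on the trivial part). Correctness is then immediate: $g^{(a)}\le f_I$ always, and if $f_I(x)=P\in[2^a,2^{a+1})$, writing $x_j:=\sum_{s\in S^\star\cap I_j}w_s$ the concentration bound gives $P_j\le B^{(a)}$ for all $j$, hence $h^{(a)}_j(x_j)\ge(1-\eps)\min(f_{I_j}(x_j),B^{(a)})\ge(1-\eps)P_j$, so $g^{(a)}(x)\ge(1-O(\eps))\sum_j P_j=(1-O(\eps))P$; taking the maximum over scales recovers $f_I$ up to a $(1-O(\eps))$ factor everywhere.

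The step I expect to be the main obstacle is the running‑time accounting. Per scale, the cost of part (i) summed over all $k$ groups is $\tilde O\!\left(n+k\cdot(\text{per-group cost})\right)$ with per‑group cost of the form $\tilde O\!\left(|I_j|+\Delta\sqrt{B^{(a)}}/\eps\right)$ or $\tilde O\!\left(|I_j|+\eps^{-2}(B^{(a)})^{1/3}/2^{\Omega(\sqrt{\log(1/\eps)})}\right)$, and this must be balanced against the merging cost $\tilde O\!\left(k\eps^{-2}/2^{\Omega(\sqrt{\log(1/\eps)})}\right)$ of part (ii), which grows with $k$; the randomness is what makes this balance favorable, since it shrinks the per‑group bound to $B^{(a)}=\tilde O(2^a/k)$ rather than the deterministic $\Theta(n/k)$. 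One then has to verify that the coarsening and concentration bounds are tight enough at every scale, sum the contributions over the $\tilde O(1)$ scales, and choose $k$ to equalize the two sides; carrying this out should yield the claimed $\tilde O\!\left(n^{4/5}\eps^{-7/5}/2^{\Omega(\sqrt{\log(1/\eps)})}\right)$ running time.
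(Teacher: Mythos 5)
First, note that this paper does not actually prove \cref{lemma:knapmainweaker}; it is imported as a black box from \cite{ourprevious}, whose proof (as the overview here indicates) uses random partitioning to \emph{limit the scope of the $(\max,+)$-convolutions} performed when merging, not merely to cap each group's profit contribution. Your proposal uses the randomness in a weaker way, and this is where it breaks: the running-time accounting you flag as the ``main obstacle'' in fact cannot be made to work. In your scheme the merge of the $k$ group functions via \cref{lemma:dc} costs $\tilde O(k\eps_a^{-2})$ at scale $2^a$ (each group function still has complexity $\tilde\Theta(1/\eps_a)$ after rounding, regardless of the cap $B^{(a)}$, and the tree of pairwise convolutions is charged at full quadratic cost), while the per-group computations cost either $\tilde O(\Delta_a\sqrt{B^{(a)}}/\eps_a)$ (\cref{lemma:chan}) or $\tilde O(\eps_a^{-2}(B^{(a)})^{1/3})$ (\cref{theo:jin1}) per group. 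At the dominant scale $2^a=\Theta(n)$, with $B^{(a)}=\tilde O(n/k)$ and $\Delta_a\le\tilde O(n/k)$, balancing $k\cdot\Delta_a\sqrt{B^{(a)}}/\eps=\tilde O(n^{3/2}k^{-1/2}\eps^{-1})$ against $k\eps^{-2}$ gives $k=n\eps^{2/3}$ and total $\tilde O(n\eps^{-4/3})$; the Jin-based variant is minimized at $k=1$, giving $\tilde O(n^{1/3}\eps^{-2})$. For $n=\Theta(1/\eps)$ --- exactly the regime in which \cref{lemma:knapmainweaker} is invoked inside \cref{lem:greedyweaker}, where the group $J$ can have $\Theta(1/\eps)$ items --- both bounds are $\tilde O(\eps^{-7/3})$, no better than Jin's deterministic bound and strictly worse than the claimed $\tilde O(n^{4/5}\eps^{-7/5})=\tilde O(\eps^{-11/5})$.

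The missing idea is that the concentration bound must be exploited \emph{inside the merging step itself}: in \cite{ourprevious}, when two (sub)collections of randomly assigned items are merged in a balanced binary tree, the profit split of any near-optimal solution between the two sides deviates from the proportional split by only $\tilde O(\sqrt{P})$, so each pairwise $(\max,+)$-convolution need only be evaluated on a narrow band around that split rather than on the full $\tilde O(\eps^{-1})\times\tilde O(\eps^{-1})$ domain; the leaf computations are cheap and the savings come from these restricted convolutions. Without some such mechanism for making the merges themselves subquadratic in $1/\eps$ (your Bernstein/union-bound step and the profit-coarsening are fine, but they only shrink $B^{(a)}$ and $\Delta_a$, which the merge cost is insensitive to), the proposed decomposition cannot reach the stated running time.
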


In \cite{ourprevious}, a randomized running time of $\tilde O(n+\eps^{-11/5}/2^{\Omega(\sqrt{\log (1/\eps)})} )$ was achieved for \cref{probknap} by combining the following lemma with \cref{lemma:reduction}. To finish our introduction for techniques in \cite{ourprevious}, we will also include its proof.
\lemmaourprevious*
\begin{proof}[Proof of \cref{lem:greedyweaker}]
    Recall that we have $\Delta := \lfloor\epsilon ^ {-5 / 8}\rfloor$, and that $i ^ {\Delta} \in \{1,2,\dots,n\}$ is the maximum such that $D(i ^ {\Delta}) \le \Delta$, which can be found using \cref{obs:D} with a binary search in $\tilde O(n)$ time.  Let $J \subset[i ^ {\Delta}]$ with $|J|\le 2m$ be the minimizer for $D(i ^ {\Delta})$.
    
    Now, we approximately compute the profit functions $f_{J}, f_{[i ^ {\Delta}]\setminus J}, f_{[n]\setminus [i ^ {\Delta}]}$ for three item sets $J, [i ^ {\Delta}]\setminus J, [n]\setminus [i ^ {\Delta}]$ using different algorithms, described as follows:
    \begin{enumerate}
        \item Use \cref{lemma:knapmainweaker} to compute $f_1$ that $(2m\eps)$-additively approximates $f_J$ in randomized\linebreak $O(m ^ {\frac{4}{5}}\epsilon ^ {-\frac{7}{5}} / 2 ^ {\Omega(\sqrt{\log(1 /  \epsilon)}}) \le O(\epsilon ^ {-\frac{11}{5}} / 2 ^ {\Omega(\sqrt{\log(1 /  \epsilon)}})$ time.
        \item By the definition of $i ^ {\Delta}$, items in $[i ^ {\Delta}]\setminus J$ have no more than $\Delta$ distinct profit values. Hence we can use \cref{lemma:chan} to compute $f_2$ that $(1-\eps)$-approximates  $f_{[i ^ {\Delta}]\setminus J}$ up to $2m$, in $\tilde O(n + \Delta \sqrt{2m}\epsilon ^ {-1}) = \tilde O(\epsilon ^ {-17 / 8})$ time.
        \item Use \cref{theo:jin1} to compute $f_3$ that $(1-\eps)$-approximates the $f_{[n]\setminus [i ^ {\Delta}]}$ up to $B=c\eps^{-1}/\Delta=\Theta(\eps^{-1}/\Delta)$ where $c$ is the universal constant in \cref{lemma:exchange}, in $\tilde O(B^{1/3}\eps^{-2}) \le \tilde O(\epsilon ^ {-17 / 8})$ time.
    \end{enumerate}
        Finally, merge the three parts $f_1,f_2,f_3$ using \cref{lemma:dc} in $\tilde O(\eps ^ {-2})$ time, and return the result.
        
    In the third part, the correctness of only computing up to $B$ is justified by \cref{lemma:exchange}, which shows that if we only consider approximating sets with total profit up to $2m$, then we can assume the items in $[n] \setminus [i ^ {\Delta}]$ only contributes profit at most $B$ \eqref{eqn:req1}, at the cost of only incurring an $(1-\eps)$ approximation factor \eqref{eqn:req2}.
    
    To analyze the error, notice that in the first part we incur an additive error of $2m\eps$. In the second and third part and the final merging step we incur $(1-O(\eps))$ multiplicative error, which turns into $O(m\eps)$ additive error since we only care about approximating up to $2m$. Hence the overall additive error is $O(m\eps)$, which can be made $m\eps$ by lowering the value of $\eps$.
\end{proof}

Our new algorithm will use a new reduction that uses the Greedy Exchange Lemma recursively. Intuitively, observe that the computation of $f_3$ is a reduced version of the original problem with a stricter upper bound, so instead of computing it directly, we can apply the procedure recursively.

\section{Main Algorithm}
    \label{sec:knapsack}

\subsection{Reduction to the Sparse Case via Recursive Greedy Exchange} \label{sec:reduction}

In this subsection, we will prove the following:
\begin{lemma}
    \label{lemma:greedy}
    Given a list $I$ of $n$ items with $p_i$ being multiples of $\eps$ in interval $[1,2)$, and integer $1\le m \le n$ with $m=O(1/\eps)$, one can compute a profit function with complexity $\tilde O(\eps ^ {-1})$ that $(m\eps)$-additively approximates $f_I$ up to $2m$ in $\tilde O(n+\eps^{-2})$ time. 
\end{lemma}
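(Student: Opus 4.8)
\begin{proofsketch}
We would establish the lemma via a recursive procedure $\mathrm{Approx}(I,m)$ which, given a list $I$ already sorted by non-increasing unit profit and an integer $1\le m\le |I|$ with $m=O(1/\eps)$, returns a profit function of complexity $\tilde O(\eps^{-1})$ that $(m\eps)$-additively approximates $f_I$ up to $2m$; sorting the input and calling $\mathrm{Approx}(I,m)$ then proves the lemma. The idea is to reuse the three-way split from the proof of \cref{lem:greedyweaker}, but to replace its two bottlenecks — the computation of $f_J$ and of $f_{[n]\setminus[i^{\Delta}]}$ — by a black-box call to \cref{lemma:knapmain} and by a recursive call to $\mathrm{Approx}$ on a far smaller profit budget, respectively.

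Given $(I,m)$, we fix $\Delta=\Delta(m):=\Theta\!\big(\eps^{-1}\log(1/\eps)/\sqrt m\big)$, chosen so that simultaneously (i) $\Delta=\omega(\eps^{-1/2}\log^{1/2}(1/\eps))$, which lets \cref{lemma:exchange} be applied with this $\Delta$, and (ii) $\Delta\sqrt m\,\eps^{-1}=\tilde O(\eps^{-2})$; both hold because $m=O(1/\eps)$ leaves polylogarithmic slack. Using \cref{obs:D} and \cref{def:idelta}, compute $i^{\Delta}$ and the minimizer $J\subseteq[i^{\Delta}]$, $|J|\le 2m$, of $D(i^{\Delta})$ in $\tilde O(|I|)$ time; then $[i^{\Delta}]\setminus J$ has at most $\Delta$ distinct profit values. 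Now compute: $\tilde f_1$, a $(2m\eps)$-additive approximation of $f_J$, by \cref{lemma:knapmain} in $\tilde O(\eps^{-2})$ time (valid since $|J|\le 2m=O(1/\eps)$); $\tilde f_2$, a $(1-\eps)$-multiplicative approximation of $f_{[i^{\Delta}]\setminus J}$ up to $2m$, by \cref{lemma:chan} with $B=2m=\tilde O(1/\eps)$ in $\tilde O(|I|+\Delta\sqrt{2m}\,\eps^{-1})=\tilde O(|I|+\eps^{-2})$ time; and $\tilde f_3:=\mathrm{Approx}\big([|I|]\setminus[i^{\Delta}],\,B/2\big)$ where $B:=c\eps^{-1}/\Delta$ is the bound supplied by \cref{lemma:exchange} (if that suffix has at most $B/2$ items, call \cref{lemma:knapmain} on it directly instead). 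Finally return a $(1-\eps)$-multiplicative approximation of $\tilde f_1\oplus\tilde f_2\oplus\tilde f_3$ obtained from \cref{lemma:dc}, which costs $\tilde O(\eps^{-2})$ and, after rounding, has complexity $\tilde O(\eps^{-1})$. Each recursion level thus costs $\tilde O(|I|+\eps^{-2})$.

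Correctness would follow exactly as for \cref{lem:greedyweaker}: \cref{lemma:exchange} certifies that it suffices to approximate $f_J\oplus f_{[i^{\Delta}]\setminus J}\oplus\big(f_{[|I|]\setminus[i^{\Delta}]}\text{ up to }B\big)$, and combining the three approximations (via \cref{prop:merge2} and \cref{lemma:dc}) gives a $\big(1-O(\eps),\,2m\eps+E(B/2)\big)$ approximation of $f_I$ up to $2m$, where $E(m)$ denotes the additive error of $\mathrm{Approx}(\cdot,m)$; since on values $\le 2m$ a multiplicative $(1-O(\eps))$ error is $O(m\eps)$-additive, $E(m)\le 2m\eps+E(B/2)+O(m\eps)$. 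With our choice of $\Delta$ we get $B/2=\Theta\!\big(\sqrt m/\log(1/\eps)\big)\le m$, so the profit budget is essentially square-rooted at each level; after $O(\log\log(1/\eps))$ levels it falls below a threshold $2^{c'\sqrt{\log(1/\eps)}}$ (for a suitable constant $c'$), at which point we stop and compute $f_I$ up to $2m$ directly via \cref{theo:jin1} in $\tilde O\big(|I|+\eps^{-2}(2m)^{1/3}/2^{\Omega(\sqrt{\log(1/\eps)})}\big)=\tilde O(|I|+\eps^{-2})$ time, with base error $O(m\eps)$ (few-item instances are likewise dispatched by \cref{lemma:knapmain}). Writing $m_0=m$ and $m_{j+1}=B_j/2$ for the successive budgets, we have $m_j\le m_0^{\,2^{-j}}$, so unrolling the recurrence gives $E(m)=\sum_j O(m_j\eps)=O(m\eps)$; the $O(\log\log(1/\eps))$ multiplicative $(1-O(\eps))$ factors compose to $1-\tilde O(\eps)$, so the output is $\tilde O(m\eps)$-additive up to $2m$, which becomes exactly $(m\eps)$-additive after scaling $\eps$ down by a $\tilde O(1)$ factor at the outset. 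Summing the per-level cost $\tilde O(|I_j|+\eps^{-2})$ over the $O(\log\log(1/\eps))$ levels, each with $|I_j|\le n$, yields total running time $\tilde O(n+\eps^{-2})$. The one genuinely delicate point is the joint choice of $\Delta$: \cref{lemma:exchange} forces $\Delta$ large, \cref{lemma:chan} forces $\Delta\sqrt m=\tilde O(\eps^{-1})$, and fast termination requires $B/2$ appreciably below $m$ — all three are compatible only because the current budget never exceeds $O(1/\eps)$.
\end{proofsketch}
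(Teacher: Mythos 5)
Your proposal is correct and follows essentially the same route as the paper: a recursive application of the Greedy Exchange Lemma in which $J$ is handled by \cref{lemma:knapmain}, the low-diversity prefix by \cref{lemma:chan}, the suffix by a recursive call with a budget that roughly square-roots each level (so $O(\log\log(1/\eps))$ levels), and the parts are merged with \cref{lemma:dc}. The only differences are cosmetic: the paper keeps $m$ fixed and shrinks a separate cap $B$ (charging $O(m\eps)$ error per level and rescaling $\eps$ by a polylog factor, with a trivial base case at $B<2$), whereas you shrink the budget itself so the per-level errors telescope, and you terminate via \cref{theo:jin1} at a $2^{c'\sqrt{\log(1/\eps)}}$ threshold.
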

\Cref{lemma:greedy} implies \cref{theo:main} due to \cref{lemma:reduction}. Our core ingredient is \cref{lemma:knapmain} which will be proved in \cref{sec:smalln}:
\begin{restatable}{prob}{probfewitems}
    \label{prob:knapmain}
    Given a list $I$ of $n= O(1/\eps)$ items with $p_i$ being multiples of $\eps$ in interval $[1,2)$\footnote{The input profit values in the Knapsack problem are integers, but in intermediate problems, the profits can be real numbers due to rounding.}, compute a profit function that $(n\eps)$-additively approximates $f_I$.
\end{restatable}
\begin{restatable*}{lemma}{knapsackmainrestate}
    \label{lemma:knapmain}
    \Cref{prob:knapmain} can be solved in $\tilde O(\epsilon ^ {-2})$ time.
\end{restatable*}

Also recall \cref{lemma:chan}:
\lemmachan*

To prove \cref{lemma:greedy}, we will use the proof in \cref{lem:greedyweaker} in a recursive manner. We solve the following problem for a variety of parameters $B$. Assume that $C > 0$ is a universal large constant.
\begin{prob} \label{prob:greedyrecurse}
    Given a list $I$ of $n$ items with $p_i$ being multiples of $\eps$ in interval $[1,2)$, integer $1\le m \le n$ with $m=O(1/\eps)$, integer $k > 0$, as well as a number $1 \le B \le 2m$, compute a profit function that $(Ckm\eps)$-additively approximates $f_I$ up to $B$.
\end{prob}
\begin{lemma} \label{lemma:greedyrecurse}
    Let $T(n, m, B, k)$ be the running time for \cref{prob:greedyrecurse} with parameters $n, m, k$ and $B$. Then there exists $B ^ {\prime} = o(B ^ {1 / 2})$ and $n ^ {\prime} \le n$ such that:
    $$T(n, m, B) = \tilde O(n + \eps ^ {-2}) + T(n ^ {\prime}, m, B ^ {\prime}).$$
\end{lemma}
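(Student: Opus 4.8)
The plan is to reuse the proof of \cref{lem:greedyweaker} almost verbatim, changing only two ingredients: the ``$f_J$'' piece is handled by the fast \cref{lemma:knapmain} instead of \cref{lemma:knapmainweaker}, and the ``$f_{[n]\setminus[i^\Delta]}$'' piece is handled by a recursive call to \cref{prob:greedyrecurse} (with a strictly smaller cap and one fewer unit of $k$) instead of by \cref{theo:jin1}. The one genuinely new decision is to tie the diversity threshold to the current cap: set $\Delta := \ceil{\eps^{-1}\log(1/\eps)/\sqrt{B}}$, let $i^\Delta$ and a minimizer $J\subseteq[i^\Delta]$ with $|J|\le 2m$ for $D(i^\Delta)$ be as in \cref{def:d,def:idelta} (computable in $\tilde O(n)$ time after sorting the items by unit profit, via \cref{obs:D}), and split $I = J \cup ([i^\Delta]\setminus J) \cup ([n]\setminus[i^\Delta])$. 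Since $B\le 2m=O(\eps^{-1})$ we have $\sqrt{B}=O(\eps^{-1/2})$, hence $\Delta=\Omega(\eps^{-1/2}\log(1/\eps))=\omega(\eps^{-1/2}\log^{1/2}(1/\eps))$, so \cref{lemma:exchange} applies with this $\Delta$ and produces its cap $B':=c\eps^{-1}/\Delta$. From $\Delta\ge\eps^{-1}\log(1/\eps)/\sqrt{B}$ we get $B'\le c\sqrt{B}/\log(1/\eps)=o(\sqrt{B})$ (this is the promised $B'$), and for $\eps$ below a suitable constant $B'\le\sqrt{B}\le 2m$, so $(n',m,B',k-1)$ with $n':=n-i^\Delta<n$ is again a legal instance of \cref{prob:greedyrecurse}.

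I would then compute, in parallel, three approximations. (i) $f_1$, an $(|J|\eps)$-additive (hence $\le 2m\eps$-additive) approximation of $f_J$ by \cref{lemma:knapmain} (legal since $|J|\le 2m=O(\eps^{-1})$), in $\tilde O(\eps^{-2})$ time. (ii) $f_2$, a $(1-\eps)$-multiplicative approximation of $f_{[i^\Delta]\setminus J}$ up to $B$ by \cref{lemma:chan} (legal since $[i^\Delta]\setminus J$ has at most $D(i^\Delta)\le\Delta$ distinct profit values and $B=\tilde O(\eps^{-1})$), in $\tilde O(n+\Delta\sqrt{B}\,\eps^{-1})$ time, which is $\tilde O(n+\eps^{-2})$ by the choice of $\Delta$. (iii) $f_3$, a $(C(k-1)m\eps)$-additive approximation of $f_{[n]\setminus[i^\Delta]}$ up to $B'$, by a recursive call costing $T(n',m,B',k-1)$; when $i^\Delta=n$ (i.e.\ $[n]\setminus[i^\Delta]=\emptyset$) this piece and the recursive call disappear. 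I would round each $f_i$ down to complexity $\tilde O(\eps^{-1})$ (each rounding costs an extra $O(m\eps)$ additive error up to $B$, since all relevant ranges lie inside $[1,\poly(\eps^{-1})]$), merge $f_1\oplus f_2\oplus f_3$ with \cref{lemma:dc} in $\tilde O(\eps^{-2})$ time, and return the result. Everything outside the recursive call costs $\tilde O(n+\eps^{-2})$, so $T(n,m,B,k)=\tilde O(n+\eps^{-2})+T(n',m,B',k-1)$, which is the stated recurrence (the statement abbreviates $T(n,m,B,k)$ by $T(n,m,B)$; the parameter $k$ drops by one in the recursive call).

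For correctness I would argue exactly as in \cref{lem:greedyweaker}. Fix a capacity $x$ with $f_I(x)\le B$. Applying \cref{lemma:exchange} to an optimal item set at capacity $x$ (its total profit is $f_I(x)\le B\le 2m$) gives a split $x=x_1+x_2+x_3$ with $f_{[n]\setminus[i^\Delta]}(x_3)\le B'$ and $f_J(x_1)+f_{[i^\Delta]\setminus J}(x_2)+f_{[n]\setminus[i^\Delta]}(x_3)\ge(1-\eps)f_I(x)$. At this split each of the three summands is $\le f_I(x)\le B$, and $f_1(x_1)\ge f_J(x_1)-2m\eps$, $f_2(x_2)\ge(1-\eps)f_{[i^\Delta]\setminus J}(x_2)\ge f_{[i^\Delta]\setminus J}(x_2)-\eps B$, and $f_3(x_3)\ge f_{[n]\setminus[i^\Delta]}(x_3)-C(k-1)m\eps$; the cap on $f_3$ causes no trouble precisely because the true value $f_{[n]\setminus[i^\Delta]}(x_3)$ is itself $\le B'$. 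Summing, then folding in the $O(m\eps)$ rounding losses and the $(1-O(\eps))$ loss from \cref{lemma:dc}, and using $B\le 2m$ to absorb each surviving multiplicative $\eps$-factor into another $O(m\eps)$ additive term, the merged function $g$ satisfies $g(x)\ge f_I(x)-(C(k-1)+O(1))m\eps$; and $g\le f_1\oplus f_2\oplus f_3\le f_I$ everywhere, since each $f_i$ lies pointwise below the corresponding exact profit function (the ``up to $B'$'' guarantee still gives $f_3\le f_{[n]\setminus[i^\Delta]}$ everywhere) and $f_I=f_J\oplus f_{[i^\Delta]\setminus J}\oplus f_{[n]\setminus[i^\Delta]}$. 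Choosing the universal constant $C$ larger than the hidden $O(1)$ makes $g$ a $(Ckm\eps)$-additive approximation of $f_I$ up to $B$, as \cref{prob:greedyrecurse} requires.

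The step I expect to be the crux is the choice of $\Delta$: it must be small enough that \cref{lemma:chan} on $[i^\Delta]\setminus J$ still runs in $\tilde O(n+\eps^{-2})$ time (forcing $\Delta=\tilde O(\eps^{-1}/\sqrt{B})$), large enough that the recursive cap $B'=\Theta(\eps^{-1}/\Delta)$ genuinely falls below $\sqrt{B}$ by an $\omega(1)$ factor (forcing $\Delta=\omega(\eps^{-1}/\sqrt{B})$), and simultaneously large enough to meet the hypothesis $\Delta=\omega(\eps^{-1/2}\log^{1/2}(1/\eps))$ of \cref{lemma:exchange}. There is essentially a single spare logarithmic factor to spend, and checking that all three requirements hold at once, using only $B\le 2m=O(\eps^{-1})$, is the real content of the proof; everything else is bookkeeping carried over from \cref{lem:greedyweaker}.
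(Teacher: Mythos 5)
Your proposal is correct and follows essentially the same route as the paper: the same three-way split $J$, $[i^\Delta]\setminus J$, $[n]\setminus[i^\Delta]$ handled by \cref{lemma:knapmain}, \cref{lemma:chan}, and a recursive call respectively, merged via \cref{lemma:dc}, with the same choice $\Delta = \Theta\bigl(\eps^{-1}\log(1/\eps)/\sqrt{B}\bigr)$ yielding $B' = c\eps^{-1}/\Delta = o(\sqrt{B})$ and the same telescoping error budget $C(k-1)m\eps + O(m\eps) \le Ckm\eps$. Your explicit verification of the three competing constraints on $\Delta$ and of the edge cases ($i^\Delta = n$, legality of the recursive instance) is just a more detailed write-up of what the paper does implicitly.
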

\begin{proof}
        To solve the problem for $(n, m, k, B)$, we set $$\Delta := \Theta((\eps ^ {-1} / B ^ {1 / 2}) \log{(1 / \eps)}) = \omega(\eps ^ {-1 / 2} \log ^ {1 / 2}{(1 / \eps)}).$$ Recall \cref{def:d}, \cref{obs:D} and \cref{def:idelta}. ${i ^ {\Delta}} \in \{1,2,\dots,n\}$ is the maximum such that $D({i ^ {\Delta}}) \le \Delta$, which can be found using with a binary search in $\tilde O(n)$ time.  Let $J \subset[{i ^ {\Delta}}]$ with $|J|\le 2m$ be the minimizer for $D({i ^ {\Delta}})$.

        As in the proof of \cref{lem:greedyweaker}, we approximately compute the profit functions $f_{J}, f_{[{i ^ {\Delta}}]\setminus J}, f_{[n]\setminus [{i ^ {\Delta}}]}$ for three item sets $J, [{i ^ {\Delta}}]\setminus J, [n]\setminus [{i ^ {\Delta}}]$ using different algorithms, described as follows:
        \begin{enumerate}
            \item Use \cref{lemma:knapmain} to compute $f_1$ that $(2m\eps)$-additively approximates $f_J$, in $\tilde O(\epsilon ^ {-2})$ time.
            \item By definition of ${i ^ {\Delta}}$, items in $[{i ^ {\Delta}}]\setminus J$ have no more than $\Delta$ distinct profit values. Since we only need to approximate up to $B$, we can use \cref{lemma:chan} to compute $f_2$ that $(1-\eps)$-approximates $\min(f_{[{i ^ {\Delta}}]\setminus J}, B)$ in $\tilde O(n + \Delta \sqrt{B}\epsilon ^ {-1}) = \tilde O(n + \epsilon ^ {-2})$ time.
            \item We $C(k - 1)m\eps$-additively approximate $\min(f_3, B ^ {\prime})$ for $B ^ {\prime} = c\eps^{-1}/\Delta = o(B ^ {1 / 2})$ where $c$ is the universal constant in \cref{lemma:exchange}. This is equivalent to \cref{prob:greedyrecurse} with parameters $(n ^ {\prime}, m, k - 1, B ^ {\prime})$ where $n ^ {\prime} = n - i ^ {\Delta}$, and can be done in $T(n ^ {\prime}, m, k - 1, B ^ {\prime})$ time. 
        \end{enumerate}
        
        Finally, merge the three parts $f_1,f_2,f_3$ using \cref{lemma:dc} in $\tilde O(n + \eps ^ {-2})$ time, and return the result. The total running time is $\tilde O(n + \eps ^ {-2}) + T(n ^ {\prime}, m, B ^ {\prime})$.
            
        In the third part, the correctness of only computing up to $B ^ {\prime}$ is justified by \cref{lemma:exchange}, which shows that if we only consider approximating sets with total profit up to $2m \ge B$, then we can assume the items in $[n] \setminus [{i ^ {\Delta}}]$ only contributes profit at most $B ^ {\prime}$ (Inequality \ref{eqn:req1}), at the cost of only incurring an $(1-\eps)$-approximation factor (Inequality \ref{eqn:req2}).

        To analyze the error, notice that in the first part, we incur an additive error of $2m\eps$. In the second part and the final merging step we incur $(1-O(\eps))$ multiplicative error, which turns into $O(m\eps)$ additive error since we only care about approximating up to $B \le 2m$. Hence the additive error for these parts is $O(m\eps)$. By letting the constant $C$ be large we can assume that the overall additive error is at most $Cm\eps$. Finally, the third part incurs an additive error of $C(k - 1)m\eps$. Hence the overall additive error is at most $C(k - 1)m\eps + Cm\eps = Ckm\eps$. 
 \end{proof}

 We are now ready to prove \cref{lemma:greedy}
 \begin{proof} [Proof of \cref{lemma:greedy}]
    Note that \cref{prob:greedyrecurse} is trivial when $B < 2$. Note that for any value no more than $2m$, by recursively taking its square root at most $O(\log \log m)$ times, we get a value less than $2$. Note that in \cref{lemma:greedyrecurse} we always have $B ^ {\prime} < B ^ {1 / 2}$. Thus the running time for $T(n, m, k, 2m)$ is a summation of at most $O(\log \log m)$ repetitions of the term $\tilde O(n + \eps ^ {-2})$, as long as $k$ is larger than the number of repetitions. Hence, for $k = \omega(\log \log m)$ we have:
    $$T(n, m, k, 2m) \le O(\log \log m) \times \tilde O(n + \eps ^ {-2}) = \tilde O(n + \eps ^ {-2}).$$
    Namely, we can solve \cref{prob:greedyrecurse} in $\tilde O(n + \eps ^ {-2})$ time for $(n, m, k, 2m)$ as long as $k = \omega(\log \log m)$. Thus for $k = O(\log (1 / \eps))$ , we can get an $O(m \eps \log (1 / \eps))$-additive approximation for the problem in \cref{lemma:greedy} $\tilde O(n + \eps ^ {-2})$ time, which can be made into an $m\eps$-additive approximation by lowering the value of $\eps$ by a polylogarithmic factor.
 \end{proof}

\subsection{Approximation for the Sparse Case via Geometry} \label{sec:smalln}
\subsubsection{Definitions}
In this section, we prove our core lemma:
\knapsackmainrestate*

First we recall the relevant definitions. The contour of a profit function with domain $x \in [x_1, x_k]$ with a complexity of $k$ can be described by a set\footnote{A set must contain distinct elements.} $P(f)$ of $(2k - 1)$ 2-D points $$\{(x_1, y_1), (x_2, y_1), (x_2, y_2), (x_3, y_2), \cdots, (x_k, y_{k - 1}), (x_k, y_k)\}$$ with $x_1 < x_2 < \cdots < x_k$ and $y_1 < y_2 < \cdots < y_k$, where $f(x) = y_i$ for $x_i \le x < x_{i + 1} (1 \le i \le n)$.

A set of points $P = \{(x_1, y_1), (x_2, y_2), \cdots, (x_k, y_k)\}$ satisfying $x_i \le x_{i + 1}$ and $y_i \le y_{i + 1}$ for $1 \le i \le k - 1$ and $x_{i + 2} > x_i$ for $1 \le i \le k - 2$ is called a \emph{monotone point set}. The profit function $f(P)$ with domain $[x_1, x_k]$ is defined as $(f(P))(x) = y_i$ for $x_i \le x < x_{i + 1}$ for $1 \le i < k$ and $(f(P))(x_k) = y_k$.

For a set $S$ of points in the 2D plane, let ${\min}_x{S}, \max_x{S}, {\min}_y{S}$ and ${\max}_y{S}$ be the minimum x-coordinate, the maximum x-coordinate, 
the minimum y-coordinate, and the maximum y-coordinate of points in $S$.

For a point $p$ on the 2D plane, we let $p_x$ and $p_y$ be its x-coordinate and y-coordindate respectively. For two distinct points $p$ and $q$, we say $p < q$ if either $p_x < q_x$ or $p_x = q_x$ and $p_y < q_y$. Given two points $p$ and $q$, let $r := p + q$ be the point with $r_x = p_x + q_x$ and $r_y = p_y + q_y$, and let $r := p - q$ be the point with $r_x = p_x - q_x$ and $r_y = p_y - q_Y$. Given a point $p$ and a point set $S$, let $p + S := S + p := \{p + s \mid s \in S\}$. Given two sets of points $A$ and $B$, their \emph{Minkowski sum} is defined as $A + B = \{a + b \mid a \in A, b \in B\}.$

For a monotone point set $P$, the set of segments $E(P)$ are the line segments between adjacent points in $P$: $E(P) := \{(p_i, p_{i + 1}) \mid 1 \le i < \abs{P}\}$. The region $R(P)$ is defined to be the set of points $(x, y) (x_1 \le x \le x_k, y \ge 0)$ on or below the segments $E(P)$. For $x_1 \le x \le x_k$, let $g[P](x)$ be equal to the maximum y-coordinate of a point on the upper boundary of $R(P)$ with x-coordinate equal to $x$. For any $1 \le i \le \abs{P}$, if $i \in \{1, \abs{P}\}$ or the point $p_i$ is above the line segment connecting $p_{i - 1}$ and $p_{i + 1}$, we call the point $p_i$ an \emph{apex} of $P$. If $i \notin \{1, \abs{P}\}$ and the point $p_i$ is below the line segment connecting $p_{i - 1}$ and $p_{i + 1}$, we call the point $p_i$ a \emph{base} of $P$. If $i \notin \{1, \abs{P}\}$ and the point $p_i$ is on the line segment connecting $p_{i - 1}$ and $p_{i + 1}$, we call the point $p_i$ \emph{degenerate}.

\begin{figure}[ht]
\centering 
    \begin{tikzpicture}

    \coordinate (A) at (1,1);
    \coordinate (B) at (2,1);
    \coordinate (C) at (2,2);
    \coordinate (C2) at (6.5,2);
    \coordinate (D) at (3,2.5);
    \coordinate (E) at (4,4);
    \coordinate (F) at (5,6);
    \coordinate (F2) at (6.5,6);
    \coordinate (G) at (6,7.33333);
    \fill[gray!30] (1,0) -- (A) -- (B) -- (C) -- (D) -- (E) -- (F) -- (G) -- (6,0) -- cycle;

    \node[above=2pt] at (5, 0.5) {\LARGE $R(P)$};
    
    \node[left=2pt] at (5.5, 7) {\large \red $g[P]$};
    \node[right=2pt] at (4, 3) {\large \blue $f(P)$};
    
    \draw[thick,red] (A) -- (B) -- (C) -- (D) -- (E) -- (F) -- (G);
    
    \draw[thin,blue] (A) -- (B) -- (C) -- (3, 2) -- (D) -- (4, 2.5) -- (E) -- (5, 4) -- (F) -- (6, 6) -- (G);

    \draw[dashed,black] (A) -- (C);
    
    \draw[dashed,black] (C) -- (F);
    
    \draw[dashed,black] (B) -- (D);
    
    \draw[dashed,black] (A) -- (F);
    
    \draw[dashed,black] (C) -- (C2);
    \draw[dashed,black] (F) -- (F2);
    \draw[->,black] (C2) -- (F2) node[midway, right] {$\delta$};

    \foreach \point/\position/\name in {A/below/A, B/below/B, C/left/C, D/above/D, E/left/E, F/above/F, G/above/G} {
        \fill[black] (\point) circle (2pt);
        \node[\position=2pt] at (\point) {\name};
    }
    
    \draw[thick,->] (0.5,0) -- (6.5,0) node[anchor=north west] {$x$};
    \draw[thick,->] (0.5,0) -- (0.5,7) node[anchor=south east] {$y$};
    
\end{tikzpicture}
\caption{Monotone Point Set $P = \{A, B, C, D, E, F, G\}$} \label{fig:mps}
\end{figure}  

For example, in \cref{fig:mps}, the set $P = \{A, B, C, D, E, F, G\}$ is a monotone point set. The region $R(P)$ is the shaded region below the thick red curve, which is the plot of $g[P]$. The plot of $f(P)$ is the thin blue line. For instance, $C$ is an apex since it is above the line connecting $B$ and $D$, and $B$ is a base since it is below the line connecting $A$ and $C$. We will see an example of a degenerate point later.

Given two monotone point sets $P_a$ and $P_b$, let $P_a \oplus P_b$ be the minimum set $V$ of points such that $R(V) = R(P_a) + R(P_b)$. It is easy to verify that $V$ is a monotone point set without degenerate points.

It is well-known \cite{compgeom:2000} that the Minkowski sum of two polygons $P$ and $Q$ with $\abs{P}$ and $\abs{Q}$ vertices is the union of $\abs{P}$ translations of $Q$ and $\abs{Q}$ translations of $P$ consisting of:
\begin{itemize}
    \item $p + Q$ for each $p \in P$, and
    \item $P + q$ for each $q \in Q$.
\end{itemize}

It is also easy to verify that $g[P_a \oplus P_b] = g[P_a] \oplus g[P_b]$.

\subsubsection{Roadmap}
We will show an analog of the ``sparsification'' technique in \cite{BN21} in 2D. We can approximate a profit function $f$ with a monotone point set. To start, we can see that $P(f)$ can approximate $f$ exactly. We now show a way to reduce the monotone point set associated with a profit function so that we can still approximate the profit function with the reduced monotone point set, and the rounding errors do not accumulate when combining two profit functions and their monotone point set approximations.
\begin{definition} [$F$-generatability] \label{def:fgen}
For a parameter $\delta := \eps n$, and a set of $m$ profit functions $F = \{f_1, f_2, \cdots, f_m\}$, we define the notion of $F$-generatability on a pair consisting of a profit function and a monotone point set based on the three operations below:
\begin{enumerate}
    \item For each $f \in F$, $(f, P(f))$ is $F$-generatable.
    \item If $(f_a, P_a)$ and $(f_b, P_b)$ are both $F$-generatable, $(f_a \oplus f_b, P_a \oplus P_b)$ is also $F$-generatable;
    \item Suppose $(f, P)$ is $F$-generatable. Suppose we perform a \emph{reduction} on $P$: find two apexes $x, y \in P$ on $P$ with $0 < y_y - x_y \le \delta$ such that all points strictly between $x$ and $y$ are below the line segment connecting $x$ and $y$, let $\bar{P}$ be $P$ with all points between $x$ and $y$ removed. Let $P ^ {\prime}$ be the minimum subset of $\bar{P}$ with $R(P ^ {\prime}) = R(\bar{P})$ (i.e. $\bar{P}$ with degenerate points removed). Then the new pair $(f, P ^ {\prime})$ is $F$-generatable.
\end{enumerate}
\end{definition}
To explain Operation 3, suppose $P = \{A, B, C, D, E, F, G\}$ as shown in \cref{fig:mps}, and suppose $\delta$ is equal to the difference in y-coordinates between $C$ and $F$. For $(x, y) = (C, F)$, since both $D$ and $E$ are below the line segment connecting $C$ and $F$, we can remove them and get $\bar{P} = \{A, B, C, F, G\}$. Note that since $F$ is on the segment connecting $C$ and $G$, it is a degenerate point of $\bar{P}$ and is removed from $\bar{P}$ to obtain $P ^ {\prime} = \{A, B, C, G\}$. We can not perform Operation 3 for $(x, y) = (A, F)$ despite all points between them being below the segment since $F_y - A_y > \delta$. 

It is easy to see that if $(f, P)$ is $F$-generatable for some $F$, then $P$ contains no degnerate points.

Solving \cref{prob:knapmain} is equivalent to combining $n$ profit functions $\bar{F} = \{\bar{f}[1], \bar{f}[2], \cdots, \bar{f}[n]\}$ where $\bar{f}[i]$ has a domain of $[0, w_i]$ and $\bar{f}[i](x) := 0$ for $0 \le x < w_i$ and $\bar{f}[i](x) := p_i$ for $x = w_i$. We can see that $f_I = \bar{f}[1] \oplus \bar{f}[2] \oplus \cdots \oplus \bar{f}[m]$. In our algorithm, since we can not afford to combine the actual profit functions themselves, we will instead combine monotone point sets and reduce the combined monotone point sets using Operation 3 whenever necessary. \Cref{def:fgen} provides us a tool to analyze our algorithm. Every time the algorithm computes a monotone point set $P$, the set is intended to approximate some unknown profit function $f$. Our algorithm will make sure that at any time, $(f, P)$ is $F$-generatable for some set $F$ that can be used to approximate $f_I$ (e.g. $\bar{F}$). Our goal is to show:
\begin{itemize}
    \item Correctness: an $F$-generatable $(f, P)$ should be such that $P$ can be used to approximate $f$.
    \item Efficiency: if $(f_a, P_a)$ and $(f_b, P_b)$ are both $F$-generatable, then there is an efficient way to obtain a $P$ such that $(f_a \oplus f_b, P)$ is $F$-generatable. This allows us to efficient approximate $f_a \oplus f_b$ if both functions are already approximated by $P_a$ and $P_b$.
\end{itemize}

For two functions $\tilde{f}$ and $f$ and a constant $c$, we say that $\tilde{f}$ $c$-subtractively approximates $f$ if $f$ $c$-additively approximates $\tilde{f}$. If $\tilde{f}$ $c$-subtractively approximates $f$, then the function $\hat{f}$ with identical domain as $\tilde{f}$ defined as $\hat{f}(x) := \tilde{f}(x) - \delta$ $c$-additively approximates $f$. Thus subtractive approximation is as powerful as additive approximation. The following shows that individual rounding errors do not accumulate:
\begin{restatable*}{lemma}{lemmageomain}\label{lemma:geomain}
    Given a set $F$ of profit functions. For any set of $m$ profit functions $\{f_1, f_2, \cdots, f_m\}$ and any set of $m$ monotone point sets $\{P_1, P_2, \cdots P_m\}$ such that $(f_1, P_1), (f_2, P_2), \cdots, (f_m, P_m)$ are all $F$-generatable, $(g[P_1] \oplus g[P_2] \oplus \cdots \oplus g[P_m])$ $\delta$-subtractively approximates $f_1 \oplus f_2 \oplus \cdots \oplus f_m$.
\end{restatable*}

For efficiency, we will first define a measure of complexity on monotone point sets. Given two monotone point sets $U = \{u_1, u_2, \cdots, u_{\abs{U}}\}$ and $V = \{v_1, v_2, \cdots, v_{\abs{V}}\}$, if $u_{\abs{U}} = v_1$, define the concatenation $U \circ V := \{u_1, u_2, \cdots, u_{\abs{U}} = v_1, v_2, \cdots v_{\abs{V}}\}$. A monotone point set $P = \{p_1, p_2, \cdots, p_k\}$ is an \emph{upper convex hull} if for all $1 < i < k$, the point $p_i$ is strictly above the line segment connecting $p_{i - 1}$ and $p_{i + 1}$. $P$ is a \emph{lower convex hull} if for all $1 < i < k$, the point $p_i$ is strictly below the line segment connecting $p_{i - 1}$ and $p_{i + 1}$. A \emph{decomposition} of monotone point set $P$ into $b$ blocks is a way to write $P = B[1] \circ B[2] \circ \cdots \circ B[b]$ such that each $B[i]$ is either an upper convex hull or a lower convex hull. For example, the monotone point set $P = \{A, B, C, D, E, F, G\}$ in \cref{fig:mps} can be decomposed into 3 blocks $P = \{A, B, C\} \circ \{C, D, E, F\} \circ \{F, G\}$, where $\{A, B, C\}$, $\{C, D, E, F\}$ are lower convex hulls, and $\{F, G\}$ can be considered either an upper convex hull or a lower convex hull.

Given a monotone point set $P = \{p_1, p_2, \cdots, p_k\}$ with no degenerate point, let $t_i$ be $1$ if $p_i$ is an apex and $0$ if $p_i$ is a base. Let the \emph{convex number} of $P$ be the number of maximal consecutive subsequences of $1$'s and $0$'s of the sequence $T = \{t_i\}$. For example, for the monotone point set $P = \{A, B, C, D, E, F, G\}$ in \cref{fig:mps}, we have $T = \{1, 0, 1, 0, 0, 1, 1\}$, which consists of $5$ maximal consecutive subsequences of $1$'s and $0$'s: $\{t_1\}, \{t_2\}, \{t_3\}, \{t_4, t_5\}$ and $\{t_6, t_7\}$ and thus the convex number of $P$ is $5$. For the relation between convex number and decomposition, we have the following observation:
\begin{obs} \label{obs:decompose}
Let $\tau$ be the convex number of $P$. We have a decomposition into $\tau$ blocks: for $1 \le i \le k$, there is a block starting from $p_i$ for if and only if $i = 1$ or $t_i \ne t_{i - 1}$.
\end{obs}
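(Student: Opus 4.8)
The plan is to write the decomposition down explicitly and then verify its two asserted features: that it has exactly $\tau$ blocks, and that the block boundaries sit precisely at the claimed indices. First I would collect the \emph{run-start} indices, namely the $i \in [k]$ with $i = 1$ or $t_i \neq t_{i-1}$. Each maximal constant run of the sequence $T = (t_1,\dots,t_k)$ has a unique such index as its left endpoint, and conversely every run-start index begins a maximal run, so there are exactly $\tau$ of them; list them as $1 = i_1 < i_2 < \cdots < i_\tau$ and put $i_{\tau+1} := k$. Next I define $B[j] := P_{i_j, i_{j+1}}$ for $1 \le j \le \tau$. Since a contiguous subsequence of a monotone point set is again a monotone point set, each $B[j]$ is a monotone point set; and since $B[j]$ ends at $p_{i_{j+1}}$, which is the first point of $B[j+1]$, the concatenation $B[1] \circ B[2] \circ \cdots \circ B[\tau]$ is defined and equals $P$. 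This already delivers the count of $\tau$ blocks and the fact that a block starts at $p_i$ exactly when $i$ is a run-start, i.e.\ iff $i = 1$ or $t_i \neq t_{i-1}$; what remains is to check that each $B[j]$ is an upper convex hull or a lower convex hull.

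For that, fix $j$ and let $v_j \in \{0,1\}$ be the common value of $t_i$ over the $j$-th run, i.e.\ over $i_j \le i \le i_{j+1}-1$. The interior points of $B[j]$ are the $p_\ell$ with $i_j < \ell < i_{j+1}$ (there may be none, when $i_{j+1} = i_j + 1$), and every such $\ell$ lies in the $j$-th run, so $t_\ell = v_j$. The crucial point is that, because $B[j]$ is a contiguous stretch of $P$, the two neighbours of $p_\ell$ within $B[j]$ are exactly its neighbours $p_{\ell-1}, p_{\ell+1}$ in $P$; hence $p_\ell$ lies strictly above (resp.\ strictly below) the segment joining those neighbours exactly when $p_\ell$ is an apex (resp.\ a base) of $P$, i.e.\ when $t_\ell = 1$ (resp.\ $t_\ell = 0$). (Here I use that $P$ has no degenerate point, so that every non-endpoint of $P$ is an apex or a base and $t_\ell$ is well defined.) Therefore $B[j]$ is an upper convex hull when $v_j = 1$ and a lower convex hull when $v_j = 0$; blocks with at most two points satisfy both conditions vacuously, which in particular covers the degenerate last block $\{p_k\}$ that occurs exactly when $t_{k-1} = 0$ (recall $t_k = 1$, since endpoints are apexes).

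Combining the two parts finishes the proof. The argument is essentially bookkeeping, and the one point that needs a moment's thought is the equivalence between ``$p_\ell$ is an apex or a base of $P$'' and ``$p_\ell$ occupies the position demanded of an interior vertex of its block'', which rests on the trivial but essential fact that restricting to a contiguous subsequence leaves the local triple $(p_{\ell-1}, p_\ell, p_{\ell+1})$ unchanged. Beyond that one only has to respect the boundary conventions: $t_1 = t_k = 1$, each block shares its last point with the next block's first point, and one- or two-point blocks count (vacuously) as convex hulls.
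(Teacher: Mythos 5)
Your proposal is correct: the run-start indices give exactly $\tau$ blocks, and the key point you isolate---that an interior vertex of a contiguous block has the same neighbours in the block as in $P$, so a run of apexes (resp.\ bases) yields an upper (resp.\ lower) convex hull, with one- and two-point blocks handled vacuously---is precisely the intended argument. The paper states this as an observation without proof (giving only the worked example for \cref{fig:mps}), and your write-up is just the careful bookkeeping version of that same implicit reasoning, so there is nothing to add.
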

For example, for the monotone point set $P = \{A, B, C, D, E, F, G\}$ in \cref{fig:mps}, this gives the valid decomposition $P = \{A, B\} \circ \{B, C\} \circ \{C, D\} \circ \{D, E, F\} \circ \{F, G\}$.

We first show that there is a way to reduce the convex number of a monotone point set while still being able to approximate the associated profit function:
\begin{restatable*}{lemma}{lemmareduce}\label{lemma:reduce}
    Fix a set $F$ of profit functions. Suppose $(f, P)$ is $F$-generatable and $P$ contains $l$ points. We can compute another point set $P ^ {\prime}$ with a convex number of $O(({\max}_y{P} - {\min}_y{P}) / \delta)$ in $O(l)$ time such that $(f, P ^ {\prime})$ is $F$-generatable.
\end{restatable*}

We say a monotone point set $\hat{P}$ $\sigma$-accurately approximates another monotone point set if the upper boundary of $R(\hat{P})$ and $R(P)$ match whenever the y-coordinate is a multiple of $\sigma$, namely if:
\begin{itemize}
    \item $\floor{{\min}_y{\hat{P}} / \sigma} = \floor{{\min}_y{P} / \sigma}$ and $\floor{{\max}_y{\hat{P}} / \sigma} = \floor{{\max}_y{P} / \sigma}$.
    \item For every $y \in [{\min}_y{P}, {\max}_y{P}]$ such that $\sigma \mid y$, the minimum x-coordinate such that $g[P](x) \ge y$ is the same as the minimum x-coordinate such that $g[\hat{P}](x) \ge y$.
\end{itemize}

We show that if two monotone point sets have low convex numbers, we can combine them efficiently:
\begin{restatable*}{lemma}{lemmacombine} \label{lemma:combine}
    Suppose we have two monotone point sets $P_a$ and $P_b$. Suppose the y-coordinates of points in both $P_a$ and $P_b$ are multiples of $\sigma$ in $[0, l \sigma)$, and suppose both $P_a$ and $P_b$ have a convex number of at most $m$. We can compute a set that $\sigma$-accurately approximates $P_a \oplus P_b$ in $\tilde O(lm)$ time.
\end{restatable*}
The problem we are solving here is different from the one in \cite{brand_et_al:LIPIcs.ISAAC.2023.16} where all blocks must be upper convex hulls\footnote{See the part regarding convex numbers. The definition in the paper is that blocks are lower convex hulls, but the paper deals with $(\min, +)$-convolution instead of $(\max, +)$-convolution.}.

\Cref{lemma:reduce} and \cref{lemma:combine} give us a good way to achieve efficiency: suppose $(f_a, P_a)$ and $(f_b, P_b)$ are both $F$-generatable for some $F$. We first use \cref{lemma:reduce} to obtain $(f_a, {P ^ {\prime}}_a)$ and $(f_b, {P ^ {\prime}}_b)$ that are both $F$-generatable where both ${P ^ {\prime}}_a$ and ${P ^ {\prime}}_b$ have low convex numbers, and then we will \cref{lemma:combine} to obtain ${P ^ {\prime}}_a \oplus {P ^ {\prime}}_b$ ($\sigma$-accurately with regards to some parameter $\sigma$). By definition, $(f_a \oplus f_b, {P ^ {\prime}}_a \oplus {P ^ {\prime}}_b)$ is $F$-generatable.

With these lemmas, we can prove \cref{lemma:knapmain}.
\knapsackmainrestate
\begin{proof} [Proof of \cref{lemma:knapmain}]
    First, we can reduce $\eps$ so that $1/\eps$ becomes an integer.

    We can assume $n = 2 ^ K$ since otherwise we can pad the set with items of very large weights. For $i = K, K - 1, \cdots, 0$ in order, we will compute a set of profit functions $F_i = \{f_i[1], f_i[2], \cdots, f_i[2 ^ i]\}$ such that each function has values being multiples of $\eps 2 ^ {K - i}$ in $[0, 2 ^ {K - i + 1}]$ and that for each $i$, $f_i[1] \oplus f_i[2] \oplus \cdots f_i[2 ^ i]$ $(2 \eps n(K - i + 1))$-subtractively approximates $f_I$ for $x \in [0, \sum_{j \in I}{w_j}]$. By extending the domain of $f_0[1]$ to all $x \ge 0$ trivially we can get a $(2 \eps n \log_2{n})$-subtractive approximation of $f_I$, which can be made $\eps n$ by lowering the value of $\eps$ by a logarithmic factor.

    For $i = K$, it suffices to let $f_i[j]$ be $\bar{f}[j]$ with values rounded up to multiples of $\eps$, which introduces a subtractive error of $n \eps$. The values of $\bar{f}[j]$'s are contained in $[0, 2]$ since $p_j \in [1, 2]$ for $j \in I$. Since $1 / \eps$ is an integer, we never round a value over $2$, and thus the values of $f_i[j]$'s are contained in $[0, 2]$.

    Suppose we correctly computed $F_{i + 1}$. For each $J \in [1, 2 ^ {i + 1}]$. Let $\hat{P}_{i + 1}[J] = P(f_{i + 1}[J])$. Note that since values of $f_{i + 1}[J]$ are multiples of $\eps 2 ^ {K - i}$ no more than $2 ^ {K - (i + 1) + 1}$, the y-coordindates in $\hat{P}_i[J]$ are also multiples of $\eps 2 ^ {K - i}$ no more than $2 ^ {K - (i + 1) + 1}$, and thus $\hat{P}_i[J]$ has no more than $O(2 ^ {K - (i + 1) + 1} / (2 ^ {K - i}\eps)) = O(1 / \eps)$ points. We use \cref{lemma:reduce} on each $\hat{P}_{i + 1}[J]$ to get ${P ^ {\prime}}_{i + 1}[J]$ with a convex number of $O(2 ^ {K - (i + 1) + 1} / \delta))$, which takes total time $$O(2 ^ {i + 1} (1 / \eps)) = O(\eps ^ {-2}).$$ 
    We then use \cref{lemma:combine} to $(\eps 2 ^ {K - (i + 1)})$-accurately approximate $P_i[j] = {P ^ {\prime}}_{i + 1}[2j - 1] \oplus {P ^ {\prime}}_{i + 1}[2j]$ for each $j \in [1, 2 ^ i]$ in time 
    $$\tilde O(2 ^ i \times (1 / \eps) \times (2 ^ {K - (i + 1) + 1} / \delta)) = \tilde O(2 ^ k\eps ^ {-2} / n) = \tilde O(\eps ^ {-2}).$$
    We set ${\hat{F}}_i = \{g[P_i[1]], g[P_i[2]], \cdots, g[P_i[2 ^ i]]\}$. 
    For each $j$, we let $\tilde{f}_i[j] = f_{i + 1}[2j - 1] \oplus f_{i + 1}[2j]$. Then each $(\tilde{f}_i[j], P_i[j])$ is $F_{i + 1}$-generatable. By \cref{lemma:geomain} we can see that $g[P_i[1]] \oplus g[P_i[2]] \oplus  \cdots \oplus g[P_i[2 ^ i]]$ $\eps n$-subtractively approximates $f_i[1] \oplus f_i[2] \oplus \cdots \oplus f_i[2 ^ i]$. Thus ${\hat{F}}_i$ only introduces an extra additive error of $\eps n$ compared to $F_{i + 1}$. To get the set $F_i = \{f_i[1], f_i[2], \cdots, f_i[2 ^ i]\}$, we round values of each $g[P_i[j]]$ up to multiples of $\eps 2 ^ {K - i}$ to obtain $f_i[j]$, which is possible from a $(\eps 2 ^ {K - (i + 1)})$-accurate approximation of $P_i[j]$ since $\eps 2 ^ {K - i}$ is a multiple of $(\eps 2 ^ {K - (i + 1)})$. Such rounding introduces an additive error of at most $\eps 2 ^ {K - i} \times 2 ^ i = \eps n$. The total additive error for $F_i$ is at most: $$2 \eps n(K - (i + 1) + 1) + \eps n + \eps n = 2\eps n(K - i + 1).$$ 
    It is easy to see that $${\max}_y{P_i[j]} = {\max}_y{P_{i + 1}[2j - 1]} + {\max}_y{P_{i + 1}[2j]} \le 2 ^ {K - (i + 1) + 1} \times 2 = 2 ^ {K - i + 1},$$ and $${\min}_x{P_i[j]} = {\min}_x{P_{i + 1}[2j - 1]} + {\min}_x{P_{i + 1}[2j]} \ge 0.$$ Thus the values of the function $g[P_i[j]]$ are contained in $[0, 2 ^ {K - i + 1}]$. Since $1 / \eps$ is an integer, we never round a value over $2 ^ {K - i + 1}$. Thus the values of the function $f_i[j]$ are also contained in $[0, 2 ^ {K - i + 1}]$. Since there are $K = \log_2{n} = \tilde O(1)$ layers, the total running time is $\tilde O(\eps ^ {-2})$.
\end{proof}

\subsubsection{Proof of \cref{lemma:geomain}, \cref{lemma:reduce} and \cref{lemma:combine}}
For correctness, we need to prove \cref{lemma:geomain}:
\lemmageomain*
To derive \cref{lemma:geomain}, we need to prove some ingredients. 
\begin{lemma} \label{lemma:apex}
    For any two monotone point sets $P$ and $Q$ with no degenerate points, let $V = P \oplus Q$. Then for every $v$ on the upper boundary of $R(V)$, $v = p + q$ such that either:
    \begin{itemize}
        \item $p$ is on the upper boundary of $P$ and $q$ is an apex of $Q$, or
        \item $q$ is on the upper boundary of $Q$ and $p$ is an apex of $P$.
    \end{itemize}
\end{lemma}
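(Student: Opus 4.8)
The plan is to exploit the structural description of Minkowski sums of polygons recalled just before the Roadmap: if we view $R(P)$ and $R(Q)$ as (unbounded) polygonal regions, then the upper boundary of $R(P)+R(Q)=R(V)$ is obtained by an edge-merge of the upper boundaries of $R(P)$ and $R(Q)$. Concretely, an edge of the upper boundary of $R(V)$ is either a translate of an edge of the upper boundary of $R(P)$ (translated by a vertex of $Q$) or a translate of an edge of the upper boundary of $R(Q)$ (translated by a vertex of $P$), and the slopes of the upper-boundary edges, read left to right, appear in decreasing order (since the profit functions are concave-capped from above by $g[\cdot]$). I would first make this precise: the upper boundary of $R(P)$ consists exactly of the segments $E$ of $P$ that lie on the upper convex hull of $P$, i.e. those between consecutive apexes of $P$ (bases and degenerate points are, by definition, strictly below a chord, hence strictly inside $R(P)$ and not on its upper boundary). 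So the vertices appearing on the upper boundary of $R(P)$ are precisely the apexes of $P$, and similarly for $Q$.

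Next I would argue about a single point $v$ on the upper boundary of $R(V)$. Write $v=p+q$ with $p\in R(P)$, $q\in R(Q)$ achieving the maximum, i.e. $v_y = g[V](v_x) = \max\{g[P](x_p)+g[Q](x_q) : x_p+x_q=v_x\}$, realized by $p$ on the upper boundary of $R(P)$ and $q$ on the upper boundary of $R(Q)$ (we may always push $p$ and $q$ up to their respective upper boundaries without decreasing the sum, and push them horizontally to the optimal split of $v_x$). Now the key case analysis: if $p$ is a vertex of the upper boundary of $P$, it is an apex of $P$ by the previous paragraph, and $q$ being on the upper boundary of $Q$ is exactly the first bullet; symmetrically for the second bullet. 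The remaining case is when $p$ lies in the relative interior of an upper-boundary edge $e_P$ of $P$ \emph{and} $q$ lies in the relative interior of an upper-boundary edge $e_Q$ of $Q$. I would rule this out (or rather, show it reduces to the vertex case) via the slope-optimality condition: at the optimal horizontal split, a first-order optimality argument forces the slope of $g[P]$ at $x_p$ to equal the slope of $g[Q]$ at $x_q$; if both $p,q$ are edge-interiors with a common slope, then $v$ lies in the relative interior of an edge of $R(V)$ of that slope, and by sliding $p$ and $q$ in opposite directions along $e_P$ and $e_Q$ (keeping $p+q=v$) until one of them hits an endpoint — an apex of $P$ or of $Q$ — we obtain a representation $v=p'+q'$ with $p'$ an apex and $q'$ still on the upper boundary of $Q$ (or vice versa), which is what we want. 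This sliding is well-defined because both edges have the same slope, so $g[P](x_{p'})+g[Q](x_{q'})$ stays equal to $v_y$ throughout.

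The main obstacle I anticipate is handling the degenerate geometry carefully: collinear edges (equal slopes) across $P$ and $Q$, the endpoints where "apex" is being invoked at the extremes $p_1$ or $p_{|P|}$ (which are apexes by fiat in Definition~4.9), and the bookkeeping that "on the upper boundary of $P$" in the statement should really be read as "on the upper boundary of $R(P)$", i.e. on the upper convex hull of $P$ between the extreme points. I would also need to double-check the boundary behavior at $v_x=\min_x V$ and $v_x=\max_x V$, where one of $p,q$ is forced to be the extreme (hence an apex) of its set. Once the "upper boundary = upper hull, whose vertices are the apexes" identification is nailed down and the slope-matching/sliding argument is stated cleanly, the rest is a routine consequence of the standard polygon Minkowski-sum decomposition.
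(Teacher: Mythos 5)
There is a genuine gap, and it occurs at the very first step: you identify the upper boundary of $R(P)$ with the upper convex hull of $P$, asserting that bases (and degenerate points) lie strictly inside $R(P)$ and hence that the vertices on the upper boundary are precisely the apexes. This contradicts the paper's definition: $R(P)$ is the region below the polyline $E(P)$ through \emph{all} points of $P$ in order, so its upper boundary (the graph of $g[P]$) passes through the bases as well (in Figure 1 the red curve goes through the base $B$ and the bases $D,E$). In general $R(P)$ is not convex and $g[P]$ is not concave --- the non-concave case is exactly where the lemma has content. Consequently your claims that the upper-boundary edge slopes appear in decreasing order and that the convex-polygon ``edge-merge'' description of the Minkowski sum applies are unjustified, and the case analysis collapses precisely in the hard cases: if the optimal split $v=p+q$ puts $p$ (or $q$) at a base vertex, or if your sliding along two equal-slope edges terminates at an edge endpoint that is a base rather than an apex, you only conclude ``$p$ is a vertex of $P$ and $q$ is on the upper boundary of $Q$,'' which is strictly weaker than either bullet of the lemma. (Your slope-matching observation in the edge-interior/edge-interior case is fine, since it only uses one-sided derivatives at a maximizer, but it does not by itself produce an apex.)

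What is missing is the local perturbation argument that carries the paper's proof. There, one starts from the representation $v=p+q$ with $p$ on the upper boundary of $R(P)$ and $q\in Q$ (or symmetrically), assumes $p$ lies strictly inside a boundary segment $e$ of $P$ and, for contradiction, that $q$ is \emph{not} an apex of $Q$; then one of the neighbors $q_l,q_r$ of $q$ in $Q$ must lie on or above the line supporting $e+q$ (otherwise $q$ would be an apex), and using the segment from $p_r+q_l$ to $p_r+q$ (which lies in $R(V)$) one exhibits a point of $R(V)$ witnessing that $v$ is not on the upper boundary --- a contradiction. Some argument of this kind, forcing one summand to be an \emph{apex} and not merely a vertex of a non-convex boundary, is indispensable; without it (and with the upper-hull misidentification corrected) your proposal does not establish the lemma.
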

\begin{proof}
    We know that for every $v$ on the upper boundary of $R(V)$, $v = p + q$ such that either:
    \begin{itemize}
        \item $p$ is on the upper boundary of $P$ and $q \in Q$, or
        \item $q$ is on the upper boundary of $Q$ and $p \in P$.
    \end{itemize}
    Without loss of generality suppose that $p$ is on the upper boundary of $P$ and $q \in Q$. 
    
    If $p$ is an apex of $P$, then since $q$ is on the upper boundary of $Q$, we are already done. Otherwise, it suffices to show that $q$ is an apex of $Q$.
    
    Since $p$ is not an apex of $P$, we have an infinitesimal segment $e \subset R(P)$ such that $p$ is strictly contained in $e$. Thus $p + q$ is strictly contained in $e + q$, which is contained in $R(P) + q$. Suppose that $e$ connects $p_l$ and $p_r$ with $p_l < p < p_r$. 
    
    \begin{figure}[H]
    \centering 
        \begin{tikzpicture}[scale=0.8]

    \coordinate (A) at (0,0);
    \coordinate (B) at (8,8);
    \coordinate (C) at (0,3);
    \coordinate (D) at (5,8);
    \coordinate (E) at (4,4);
    \coordinate (F) at (-2,2);
    \coordinate (G) at (-2,4.666667);
    \coordinate (H) at (2,6);
    \fill[gray!30] (A) -- (8,0) -- (B) -- (D) -- (E) -- (C) -- cycle;

    \draw[black] (C) -- (E) -- (D);
    \draw[<-, dashed, black] (F) -- (E);
    \draw[<-, dashed, black] (G) -- (B); 
    \draw[thick,black] (A) -- (B);
    \node[right=2pt] at (2, 2) {\large $\mathbf{e + q}$};
    \node[above=2pt] at (6, 0.5) {\LARGE $R(P) + q$};

    \foreach \point/\position/\name in {A/below/{$p_l + q$}, E/right/{$p + q$}, B/right/{$p_r + q$}, H/above/{$p ^ {\prime}$}} {
        \fill[black] (\point) circle (5pt);
        \node[\position=5pt] at (\point) {\name};
    }
    \foreach \point/\position/\name in {F/left/{To $p + q_l$}, G/left/{To $p_r + q_l$}} {
        \node[\position=5pt] at (\point) {\large \name};
    }

\end{tikzpicture}
    \caption{Proof of \cref{lemma:apex}} \label{fig:apex}
    \end{figure}  

    \Cref{fig:apex} shows the immediately neighborhood of $p + q$ after zooming into the infinitesimal segment $e + q$, the shaded region is $R(P) + q$. Suppose $q$ is not an apex of $Q$, then $q$ is not an endpoint of $Q$. Suppose $q_l$ and $q_r$ are the point before and after $q$ in $Q$. The points $p + q_l$ and $p + q_r$ can not both be below $e + q$ since otherwise it would make $q$ an apex of $Q$. Without loss of generality suppose $p + q_l$ is above the segment $e + q$, as shown in the figure. Then a non-zero fraction of the segment connecting $p_r + q_l$ and $p_r + q$, which is entirely in $R(V)$, is above $p + q$. Thus as shown in the figure, we can find a point $p ^ {\prime}$ with ${p ^ {\prime}}_x < p_x$ and ${p ^ {\prime}}_y > p_y$ on the segment. Since $p ^ {\prime} \in R(V)$, $v = p + q$ must not be on the upper boundary of $R(V)$ and thus must not be in++ $V$, which is a contradiction.    
\end{proof}

\begin{lemma} \label{lemma:apex2}
    For any two monotone point set $P$ and $Q$ with no degenerate points, if we let $V = P \oplus Q$, then every apex $v$ of $V$ must be the sum of an apex $p$ of $P$ and an apex $q$ of $Q$.
\end{lemma}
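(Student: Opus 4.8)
The plan is to leverage \cref{lemma:apex}. Since an apex $v$ of $V$ lies on the upper boundary of $R(V)$, that lemma hands us a decomposition $v = p + q$ in which, without loss of generality, $p$ lies on the upper boundary of $R(P)$ and $q$ is an apex of $Q$; so the $q$-part of the claim is already settled, and the whole task reduces to upgrading ``$p$ on the upper boundary'' to ``$p$ is an apex of $P$.''

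First I would dispose of the case where $v$ is an endpoint of $V$, say the leftmost point $(\min_x V,\min_y V)$. Then $v_x = \min_x P + \min_x Q$, and since any $p\in R(P)$, $q\in R(Q)$ with $p_x+q_x=v_x$ must have $p_x=\min_x P$ and $q_x=\min_x Q$, and likewise for the $y$-coordinates, the decomposition is forced to have $p$ the leftmost point of $P$ and $q$ the leftmost point of $Q$ --- both endpoints, hence apexes. The rightmost endpoint is symmetric.

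Now assume $v$ is an interior apex and, for contradiction, that $p$ is not an apex of $P$. Then $p$ is not an endpoint of $P$ (endpoints are apexes), and since $P$ has no degenerate points and $p$ is on the upper boundary of $R(P)$, $p$ is either a base of $P$ or an interior point of an edge of $P$'s boundary polyline; in both subcases the edge(s) of $P$ incident to $p$ are non-vertical (the interior of a vertical edge is not on the upper boundary, and a boundary vertex with a vertical incident edge is necessarily an apex). Consequently the slope $m^P_-$ of the upper boundary of $R(P)$ just left of $p$ is $\le$ the slope $m^P_+$ just right of it --- they are equal if $p$ is interior to an edge, and $m^P_-<m^P_+$ if $p$ is a base (the polyline is locally convex there). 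Let $h(x):=g[P](x-q_x)+q_y$, the graph of $g[P]$ translated by the vector $q$. Since $R(P)+q\subseteq R(V)$ we get $h\le g[V]$ near $x=v_x$, with equality $h(v_x)=g[V](v_x)=v_y$; comparing one-sided slopes at $v_x$ then gives $m^V_-\le m^P_-$ and $m^V_+\ge m^P_+$, where $m^V_\pm$ are the slopes of the edges $[v_l,v]$ and $[v,v_r]$ of $V$. These slopes are finite: a vertical $[v,v_r]$ would put $v$ below the chord $[v_l,v_r]$ and make it a base, while a vertical $[v_l,v]$ would force $g[V]$ to stay bounded away from $v_y$ just left of $v_x$, contradicting $h(v_x-\eta)\to v_y$ as $\eta\to 0^+$. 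Therefore $m^V_-\le m^P_-\le m^P_+\le m^V_+$, i.e. $m^V_-\le m^V_+$. But $v$ being an apex means it lies strictly above the chord $[v_l,v_r]$, which is exactly $m^V_->m^V_+$. Contradiction; hence $p$ is an apex of $P$, and $v=p+q$ with $p$ an apex of $P$ and $q$ an apex of $Q$.

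The step I expect to need the most care is the treatment of degenerate configurations: vertical edges of $P$ or $V$ incident to $p$ or $v$, and the precise meaning of ``lying on the upper boundary'' at points where $g[P]$ jumps. I would organize the argument so that these are pinned down before the slope comparison is invoked --- once $v$ is an interior apex and $p$ is not an apex, $p$ cannot lie on a vertical edge of $P$, $v$ cannot have a vertical edge on either side, and all one-sided slopes in play are genuine finite numbers, so the final two-line slope inequality goes through cleanly.
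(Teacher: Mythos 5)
Your argument is correct in substance but follows a genuinely different route from the paper. The paper's proof of \cref{lemma:apex2} does not invoke \cref{lemma:apex} at all: it observes that an apex $v$ of $V$ cannot lie strictly inside any non-degenerate segment contained in $R(V)$ (since at an apex the left slope strictly exceeds the right slope, no direction stays inside the region on both sides), and that a non-apex $p$ of $P$ lies strictly inside some infinitesimal segment $e\subset R(P)$; translating, $e+q\subset R(V)$ strictly contains $v=p+q$, a contradiction, and the same argument applies symmetrically to $q$. That argument is two lines, needs no endpoint case split, and yields apexness of both summands simultaneously. Your route instead imports \cref{lemma:apex} to get apexness of one summand for free and upgrades the other via a one-sided slope comparison of $h(x)=g[P](x-q_x)+q_y\le g[V](x)$ with equality at $v_x$, giving $m^V_-\le m^P_-\le m^P_+\le m^V_+$ against $m^V_->m^V_+$. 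What your version buys is a fully explicit local argument (no ``infinitesimal segment'' language) at the cost of the endpoint case and the vertical-edge bookkeeping; what the paper's buys is brevity and symmetry.

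One clause in your degenerate-case handling is wrong as stated: ``a boundary vertex with a vertical incident edge is necessarily an apex'' fails when the vertical edge leaves the vertex upward (e.g.\ the middle point of $(0,0),(1,1),(1,2)$ is a base with a vertical outgoing edge). The conclusion you need survives, but for a different reason: if $p$ had a vertical edge above it (or were interior to a vertical edge), then $p_y<g[P](p_x)$, and the point $(p_x,g[P](p_x))+q\in R(P)+q\subseteq R(V)$ would lie strictly above $v$ at abscissa $v_x$, contradicting $v_y=g[V](v_x)$ (which holds because an apex of $V$ has no vertical edge above it). So such a $p$ simply cannot occur in a decomposition of an apex $v$; with that patch, your slope comparison goes through cleanly.
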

\begin{proof}
    Since $v$ is an apex of $V$, it is easy to see that there must not be a non-degenerate segment (i.e. segment with a length greater than $0$) $e \subset V$ such that $v \in e$.

    Suppose $v = p + q$. Suppose $p$ is not an apex of $P$. Then there exists an infinitesimal segment $e \subset R(P)$ such that $p$ is strictly contained in $e$. We have a non-degenerate segment $e + q \subset V$ that contains $v = p + q$, which is a contradiction. Therefore $p$ must be an apex of $P$. Symmetrically, $q$ must be an apex of $Q$.  
\end{proof}
\begin{lemma} \label{lemma:invariants}
    For any set $F$ of profit functions, and an $F$-generatable $(f, P)$, we have the following:
    \begin{enumerate}
        \item For every $x$ in the domain of $f$, $g[P](x) \ge f(x)$. Equivalently, $R(P(f)) \subset R(P)$.
        \item For any apex $p \in P$, $f(p_x) = p_y$.
        \item For any point $p$ on the upper boundary of $R(P)$, there exist points $a$ and $b$ $(0 \le a_y < b_y \le a_y + \delta)$ such that $f(a_x) \ge a_y, f(b_y) \ge b_y$, and $p$ is on the line segment connecting $a$ and $b$.
    \end{enumerate}
    
\end{lemma}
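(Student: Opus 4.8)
The plan is to prove \cref{lemma:invariants} by structural induction on the definition of $F$-generatability (\cref{def:fgen}), establishing all three invariants simultaneously. Throughout, keep in mind the intuitive picture: $P$ lies weakly above the true contour $P(f)$, apexes of $P$ touch the true contour exactly, and every point of $R(P)$'s upper boundary can be ``certified'' by two nearby points of the true contour at vertical distance at most $\delta$.

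\textbf{Base case.} For $(f, P(f))$ with $f \in F$: invariant (1) holds with equality since $R(P(f)) = R(P(f))$. For invariant (2), every apex of $P(f)$ is one of the upper corners $(x_{i+1}, y_i)$ or $(x_{i+1}, y_{i+1})$ (the points where the step function jumps), and by construction $f(x_{i+1}) = y_{i+1}$; one checks the apex points are exactly those with $p_y = f(p_x)$. For invariant (3), any point $p$ on the upper boundary of $R(P(f))$ lies on a segment of $E(P(f))$; taking $a, b$ to be the two endpoints of that segment, one is of the form $(x_{i+1}, y_i)$ and the other $(x_{i+1}, y_{i+1})$ or they lie on a horizontal segment — in all cases $f(a_x) \ge a_y$ and $f(b_x) \ge b_y$ hold, and we need $b_y - a_y \le \delta$. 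This uses that profit functions $f \in F$ have been rounded so consecutive distinct values differ by at most $\delta = \eps n$; I would invoke the rounding conventions set up for the functions in $F$ (in the application, $F$ consists of functions with values multiples of $\eps n$ or finer, so steps are exactly of size at most $\delta$).

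\textbf{Inductive step for Operation 2 ($\oplus$).} Suppose $(f_a, P_a)$ and $(f_b, P_b)$ satisfy all three invariants and set $f = f_a \oplus f_b$, $V = P_a \oplus P_b$. For (1): since $R(P_a(f_a)) \subseteq R(P_a)$ and likewise for $b$, Minkowski sums are monotone under inclusion, so $R(P(f)) = R(P_a(f_a)) + R(P_b(f_b)) \subseteq R(P_a) + R(P_b) = R(V)$; here I would cite $g[P_a \oplus P_b] = g[P_a]\oplus g[P_b]$ and $f_{I_1 \cup I_2} = f_{I_1} \oplus f_{I_2}$-style reasoning, or more directly that $R(P(f_a \oplus f_b)) = R(P(f_a)) + R(P(f_b))$. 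For (2): by \cref{lemma:apex2}, any apex $v$ of $V$ is $p + q$ with $p$ an apex of $P_a$ and $q$ an apex of $P_b$; by invariant (2) for the children, $f_a(p_x) = p_y$ and $f_b(q_x) = q_y$, hence $f(v_x) \ge f_a(p_x) + f_b(q_x) = v_y$; combined with invariant (1) giving $f(v_x) \le g[V](v_x) = v_y$ (since $v$ is on the upper boundary), we get equality. For (3): take $v$ on the upper boundary of $R(V)$. By \cref{lemma:apex}, $v = p + q$ where (WLOG) $p$ is on the upper boundary of $R(P_a)$ and $q$ is an apex of $P_b$. Apply invariant (3) to $p$ in $P_a$: there are $a', b'$ with $0 \le a'_y < b'_y \le a'_y + \delta$, $f_a(a'_x) \ge a'_y$, $f_a(b'_x) \ge b'_y$, and $p$ on segment $\overline{a'b'}$. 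Apply invariant (2) to the apex $q$: $f_b(q_x) = q_y$. Then set $a := a' + q$ and $b := b' + q$; these satisfy $b_y - a_y = b'_y - a'_y \le \delta$, $f(a_x) \ge f_a(a'_x) + f_b(q_x) \ge a'_y + q_y = a_y$, similarly $f(b_x) \ge b_y$, and $v = p + q$ lies on segment $\overline{ab}$ since $p \in \overline{a'b'}$ (translation by $q$). One should also handle the degenerate cases where $p$ is itself an apex (then $p$ can serve as both $a'$ and $b'$, or just use invariant (2) directly) and the edge cases at the endpoints of $R(V)$.

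\textbf{Inductive step for Operation 3 (reduction).} Suppose $(f, P)$ satisfies all invariants and $P'$ is obtained by the reduction removing points strictly between apexes $x, y$ with $0 < y_y - x_y \le \delta$ (then deleting degenerate points). Invariant (2): apexes of $P'$ are a subset of apexes of $P$ — the only points whose status could change are $x$ and $y$ themselves and their new neighbors, but since $x$ and $y$ were apexes of $P$ and all removed points lay below $\overline{xy}$, the points $x$ and $y$ remain apexes (or become degenerate and get removed, in which case there is nothing to check); other surviving apexes are unaffected — so invariant (2) is inherited. Invariant (1): since we only removed points that were below the segment $\overline{xy}$, we have $R(P') \supseteq R(P) \supseteq R(P(f))$ — removing ``dents'' only enlarges the region below the upper boundary in the relevant range. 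Invariant (3) is the crux: for $p$ on the upper boundary of $R(P')$, if $p$ is outside the range $(x_x, y_x)$ then it is on the same segment of $R(P)$ and we use the inductive hypothesis directly; if $p$ is in the range, then $p$ lies on the segment $\overline{xy}$ itself, and since $x, y$ are apexes of $P$, invariant (2) gives $f(x_x) = x_y$ and $f(y_x) = y_y$, and $y_y - x_y \le \delta$ with $x_y \ge 0$ (monotonicity), so $a := x$, $b := y$ works.

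\textbf{Main obstacle.} The delicate part is invariant (3) in the $\oplus$ step, specifically making the appeal to \cref{lemma:apex} rigorous for \emph{every} point on the upper boundary (not just vertices), including boundary/endpoint points and points where $p$ happens to be an apex already (so invariant (3)'s hypothesis on $p$ is vacuously satisfiable by $p$ itself). The translation argument $a = a' + q$, $b = b' + q$ is clean, but one must be careful that $a'_y \ge 0$ is preserved (it is, since $q_y \ge 0$ as all points have nonnegative $y$-coordinates — profit functions are nonnegative on their finite part) and that when $p$ coincides with an endpoint of $R(P_a)$ the ``segment through $p$'' degenerates, in which case one simply takes $a' = b' = p$ and invokes invariant (2) for the apex $p$. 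A secondary subtlety is ensuring the base-case step-size bound $b_y - a_y \le \delta$ genuinely follows from the stated rounding of functions in $F$; I would make explicit that in all our applications the functions in $F$ have consecutive distinct values differing by at most $\delta = \eps n$, which is exactly what the rounding guarantees.
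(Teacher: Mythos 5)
Your overall structure — structural induction over the three operations of \cref{def:fgen}, \cref{lemma:apex2} for invariant (2) under $\oplus$, \cref{lemma:apex} plus the translation $a := a' + q$, $b := b' + q$ for invariant (3) under $\oplus$, and the ``removed points lie below $\overline{xy}$'' argument for Operation 3 — is exactly the paper's proof. However, your base case for invariant (3) has a genuine flaw: you take $a,b$ to be the endpoints of the segment of $E(P(f))$ containing $p$ and then need $b_y - a_y \le \delta$, which you justify by assuming that all functions in $F$ have consecutive distinct values differing by at most $\delta = \eps n$. That assumption is not part of the hypotheses of \cref{lemma:invariants} (which is stated for an arbitrary set $F$ of profit functions), and it is false in the very application of the lemma: in the proof of \cref{lemma:knapmain} the leaf functions are the single-item functions $\bar{f}[j]$, which jump from $0$ to $p_j \in [1,2)$, and $\delta = \eps n$ can be much smaller than $1$ when $n \ll 1/\eps$ in \cref{prob:knapmain}; the intermediate functions $f_i[j]$ can have even larger jumps. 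So as written, your induction does not get off the ground for the pairs that actually matter.

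The fix is what the paper does and needs no hypothesis on step sizes: every point $p$ on the upper boundary of $R(P(f))$ satisfies $f(p_x) \ge p_y$ — this holds on horizontal segments trivially and on a vertical jump at $x_{i+1}$ because $f(x_{i+1}) = y_{i+1} \ge p_y$ — so one may simply take $a = b = p$ (the condition $a_y < b_y$ in the statement is used in the degenerate form $a_y \le b_y$; this is how the paper's own proof and the subsequent use in \cref{corollary:approximate} treat it, and if you insist on distinct points you can instead take $a$ slightly below $p$ on the same vertical line). Note also that the points $a,b$ are not required to lie in $P$ or on the boundary, which is what makes this cheap repair available. The rest of your argument, including the endpoint/apex degeneracies you flag in the $\oplus$ step (handled by $a' = b' = p$ via invariant (2)) and the Operation 3 case analysis, matches the paper and is fine once the base case is repaired.
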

\begin{proof}
    It suffices to argue that each of the three operations in \cref{def:fgen} does not introduce new $F$-generatable pairs that violate each of the invariants.

    For operation (1), since $P = P(f)$, Invariants (1), (2) are obviously satisfied. For Invariant (3), since we must have $f(p_x) \ge p_y$, we can set $a = b = p$.

    For operation (2), fix $(f_a, P_a)$ and $(f_b, P_b)$. Since $R(P(f_a)) \subset R(P_a)$ and $R(P(f_b)) \subset R(P_b)$, it is easy to see that $R(P(f_a \oplus f_b)) \subset R(P_a) + R(P_b) = R(P_a \oplus P_b)$, and therefore invariant (1) must hold. For Invariant (2), suppose a point $p = (p_x, p_y)$ is an apex of $P_a \oplus P_b$. From \cref{lemma:apex2}, $p$ can be expressed as the sum of two apexes $u$ on $P_a$ and $v$ on $P_b$. Since $f_a(u_x) = u_y$ and $f_b(v_x) = v_y$, we must have $p_y = u_y + v_y = f_a(u_x) + f_b(u_y) \le (f_a \oplus f_b)(p_x)$. Since $p_y = g[P](p_x) \ge (f_a \oplus f_b)(p_x)$ (from Invariant (1)), we must have $(f_a \oplus f_b)(p_x) = p_y$. For Invariant (3), let $p$ be a point on the boundary of $R(P)$, then from \cref{lemma:apex}, $p = u + v$ such that either $u$ is on the upper boundary of $R(P_a)$ and $v$ is an apex of $P_b$, or $v$ in on the upper boundary of $R(P_b)$ and $u$ is an apex of $P_a$. Without loss of generality suppose that  $u$ is on the upper boundary of $R(P_a)$ and $v$ is an apex of $P_b$. Then $f_b(v_x) = v_y$ (Invariant (2)). Note that Invariant (3) is satisfied for $(f_a, P_a)$. Let $a_u$ and $b_u$ be the points in Invariant (3) for $u$ on the upper boundary of $R(P_a)$. Then it is not hard to verify that setting $a := a_u + v$ and $b := b_u + v$ satisfies Invariant (3) for $p$ on the upper boundary of $R(P)$.

    For operation (3), since we only remove non-apexes in $P$ to obtain $P ^ {\prime}$, $R(P) \subset R(P ^ {\prime}) = R(\hat{P})$. Therefore $P(f) \subset R(P) \subset R(\hat{P})$, and thus Invariant (1) must hold. It is easy to see that no new apexes can be created, and therefore Invariant (2) does not get broken. Consider Invariant (3) for $\hat{P}$. If $p$ is on the upper boundary of $P$, then Invariant (3) holds since it holds for $(f, P)$. Otherwise, it is easy to see that $p$ must be on a segment connecting two apexes $a$ and $b$ with y-coordinates at most $\delta$ apart and it is easy to verify that $a$ and $b$ satisfy Invariant (3) for $p$.
\end{proof} 

Recall that $\tilde{f}$ $c$-subtractively approximates $f$ if $f$ $c$-additively approximates $\tilde{f}$. 
\begin{corollary} \label{corollary:approximate}
    For any set $F$ of profit functions, and an $F$-generatable $(f, P)$, $g[P]$ $\delta$-subtractively approximates $f(x)$.
\end{corollary}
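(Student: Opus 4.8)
The plan is to simply unpack the definition of $\delta$-subtractive approximation and read off the two required inequalities from \cref{lemma:invariants}; since that lemma already carries all the weight, this will be a short deduction. Recall that $g[P]$ being a $\delta$-subtractive approximation of $f$ means $f$ is a $\delta$-additive approximation of $g[P]$, i.e. for every $x$ in their common domain we need both $f(x) \le g[P](x)$ and $f(x) \ge g[P](x) - \delta$. (The two functions do share a domain: one checks by a one-line induction that each operation in \cref{def:fgen} preserves the identity between the domain of a profit function and the $x$-range of its point set.) The upper bound $f(x) \le g[P](x)$ is exactly Invariant~(1).

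For the lower bound I would fix $x$ in the domain and apply Invariant~(3) to the point $p := (x, g[P](x))$, which lies on the upper boundary of $R(P)$ by definition of $g[P]$. Invariant~(3) produces points $a,b$ with $0 \le a_y \le b_y \le a_y + \delta$, with $f(a_x) \ge a_y$ and $f(b_x) \ge b_y$, and with $p$ on the segment joining $a$ and $b$. Since $p$ is on that segment, $a_y \le p_y \le b_y \le a_y + \delta$, so $a_y \ge p_y - \delta$; moreover at least one of $a_x, b_x$ is $\le p_x = x$. If $a_x \le x$, monotonicity of $f$ gives $f(x) \ge f(a_x) \ge a_y \ge p_y - \delta$; if instead $b_x \le x$, then $f(x) \ge f(b_x) \ge b_y \ge p_y \ge p_y - \delta$. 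In either case $f(x) \ge g[P](x) - \delta$, which is what we wanted.

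I do not expect a genuine obstacle here, as the substance is in \cref{lemma:invariants}; the only point requiring a bit of care is invoking monotonicity of $f$ in the correct direction, which is why the tiny case split on which of $a_x, b_x$ lies to the left of $x$ appears above. An alternative that avoids even that split is to note, by inspecting the constructions of $a$ and $b$ in the proof of \cref{lemma:invariants} (in operation~(1), $a = b = p$; in operation~(2), a fixed translate of an inductively chosen pair; in operation~(3), two apexes straddling $p$), that one always has $a_x \le p_x \le b_x$.
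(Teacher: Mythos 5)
Your proposal is correct and follows essentially the same route as the paper: the upper bound is Invariant~(1) of \cref{lemma:invariants}, and the lower bound comes from applying Invariant~(3) to the point $(x, g[P](x))$ and using monotonicity of $f$ together with $b_y \le a_y + \delta$. Your explicit case split (or the observation that $a_x \le x$ always holds) just makes precise a step the paper's proof leaves implicit.
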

\begin{proof}
    We show that for every $x$ in the domain of $f$, we have $g[P](x) - \delta \le f(x) \le g[P](x)$.

    $f(x) \le g[P](x)$ follows from Invariant (1) of \cref{lemma:invariants}. To see why $g[P](x) - \delta \le f(x)$. Apply Invariant (3) on $(x, g[P](x))$ and obtain $a$ and $b$. Since $a_y \le g[P](x) \le b_y$ and $f$ is non-decreasing, we have $f(x) \ge f(a_x) \ge a_y \ge b_y - \delta \ge g[P](x) - \delta$.
\end{proof}

Now we are ready to prove \cref{lemma:geomain}:
\lemmageomain
\begin{proof}
    Since with $m - 1$ repetitions of Operation 2 we can $F$-generate $(f_1 \oplus f_2 \oplus \cdots \oplus f_m, P_1 \oplus P_2 \oplus \cdots \oplus P_m)$, $g[P_1 \oplus P_2 \oplus \cdots \oplus P_m]$ $\delta$-subtractively approximates $f_1 \oplus f_2 \oplus \cdots \oplus f_m$. It is easy to see that $g[P_1] \oplus g[P_2] \oplus \cdots \oplus g[P_m] = g[P_1 \oplus P_2 \oplus \cdots \oplus P_m]$. Therefore $(g[P_1] \oplus g[P_2] \oplus \cdots \oplus g[P_m])$ $\delta$-subtractively approximates $f_1 \oplus f_2 \oplus \cdots \oplus f_m$.
\end{proof}

Now we move on to efficiency and prove \cref{lemma:reduce}:
\lemmareduce
\begin{proof}
    We can obtain $P ^ {\prime}$ in the following way. Initially set $P ^ {\prime} = P$. Scan the apexes in order and store them in a stack. Every time we encounter a new apex $x$. If the stack is not empty, the top of the stack is $y$, and $\abs{x_y - y_y} \le \delta$, we can simulate an Operation 3 on $P ^ {\prime}$ with $x$ and $y$, which involves:
    \begin{itemize}
        \item Removing the vertices between $x$ and $y$ in $P ^ {\prime}$, and
        \item If $y$ becomes degenerate, removing $y$ from $P ^ {\prime}$.
    \end{itemize}
    After this, if $y$ is no longer an apex of $P ^ {\prime}$ or gets removed from $P ^ {\prime}$, we remove $y$ from the stack and repeat the procedure above. Otherwise, we push $x$ to the end of the stack.

    After processing each apex $x$, let ${P ^ {\prime}}_{1 \cdots k}$ be the prefix of $P ^ {\prime}$ containing the remaining points in $P ^ {\prime}$ up to $x$. It is easy to verify that the following two invariants are true:
    \begin{itemize}
        \item No points of ${P ^ {\prime}}_{1 \cdots k}$ are degenerate.
        \item The stack contains exactly all apexes of ${P ^ {\prime}}_{1 \cdots k}$.
        \item For every maximal consecutive subsequence of bases in ${P ^ {\prime}}_{1 \cdots k}$ enclosed by two apexes $a < b$, $b_y - a_y > \delta$.
    \end{itemize}
    
    The procedure can easily be done in $O(l)$ time, and the resulting $P ^ {\prime}$ is clearly $F$-generatable. Since every maximal consecutive subsequence of bases in ${P ^ {\prime}}$ is enclosed by two apexes $\delta$ with a difference larger than $\delta$ in y-coordinates, it is easy to see that $P ^ {\prime}$ has a convex number of $O(({\max}_y{P} - {\min}_y{P}) / \delta)$.
\end{proof}

Finally, we prove \cref{lemma:combine}: 
\lemmacombine
\begin{proof}
    We can suppose $\sigma = 1$ since otherwise we can re-scale all the y-coordinates. We obtain decompositions $P_a = B_a[1] \circ B_a[2] \circ \cdots \cdots B_a[m_a]$ with $m_a \le m$ and $P_b = B_b[1] \circ B_b[2] \circ \cdots \circ B_b[m_b]$ with $m_b \le m$ from \cref{obs:decompose}.

    Suppose that for each two blocks $B_a[i] \in P_a$ and $B_b[j] \in P_b$, we can compute $A[i, j] = B_a[i] \oplus B_b[j]$. Then by taking the union of $R(A[i, j])$'s, we get $R(P_a \oplus P_b)$ which can give us $P_a \oplus P_b$. The total number of y-coordinates involved in $R(A[i, j])$ does not exceed $({\max}_y{B_a[i]} - {\min}_y{B_a[i]}) + ({\max}_y{B_b[j]} - {\min}_y{B_b[j]}) + 2$. Since we only need a $1$-accurate approximation, to take the union of $R(A[i, j])$'s, we only need to find the minimum x-coordinate for every y-coordinate. Therefore every time we take the union with a new $R(A[i, j])$, we only need to go through all the involved y-coordinates and update the minimum x-coordinate. For all $R(A[i, j])$'s in total this takes time $$O(\sum_{i, j}{({\max}_y{B_a[i]} - {\min}_y{B_a[i]}) + ({\max}_y{B_b[j]} - {\min}_y{B_b[j]}) + 2}) = O(lm).$$
    
    We claim that the following claim is true: For two blocks $B_a[i]$ and $B_b[j]$ with the total span of y-coordinates being $\mathcal{Y}$, $B_a[i] \oplus B_b[j]$ can be computed in $\tilde O(\mathcal{Y})$ time. 
    
    If the claim is true, then the total running time for computing every $A[i, j]$ is $$\tilde O(\sum_{i, j}{({\max}_y{B_a[i]} - {\min}_y{B_a[i]}) + ({\max}_y{B_b[j]} - {\min}_y{B_b[j]}) + 2}) = \tilde O(lm),$$ which finishes the proof of the lemma. Now we prove that for two blocks $B_a[i]$ and $B_b[j]$ with the total span of y-coordinates being $\mathcal{Y}$, we can indeed compute $B_a[i] \oplus B_b[j]$ in $\tilde O(\mathcal{Y})$ time. 
    
    For the case where both $B_a[i]$ and $B_b[j]$ are upper convex hulls, it is well known that the Minkowski sum of two convex polygons with $v$ vertices in total can be computed in $\tilde O(v)$ time \cite{compgeom:2000}. Since the number of vertices is bounded by the number of different y-coordinates, such running time is sufficient.

    It suffices to deal with computations where one of the blocks is a lower convex hull. Without loss of generality, suppose $B_a[i]$ is a lower convex hull. Let $X = {\max}_y{B_a[i]} - {\min}_y{B_a[i]} + 1$, and $Y = {\max}_y{B_b[j]} - {\min}_y{B_b[j]} + 1$. We will now introduce a procedure that computes $B_a[i] \oplus B_b[j]$ in $\tilde O(X + Y)$ time.

    Without loss of generality, suppose the y-coordinates of $B_a[i]$ span $[0, X - 1]$ and those of $B_b[j]$ span $[0, Y - 1]$. For $y \in [0, X - 1]$, define $\phi(y)$ as the minimum $x$ with $g[B_a[i]](x) \ge y$. For $y \in [0, Y - 1]$, define $\gamma(y)$ as the minimum $x$ with $g[B_b[j]](x) \ge y$. It suffices to compute the convolution $$\eta(y) = \min_{y_f \in [0, X - 1], y_g \in [0, Y - 1], y_f + y_g = y}\phi(y_f)+\gamma(y_g),$$ where we know that $\phi$ is convex (i.e. the difference $\phi(y + 1) - \phi(y)$ decreases as $y$ increases), in time linear in $X$ and $Y$.

    For the computation of $\eta(y)$ for $y \in [0, X - 1]$, we use a ``reversed'' version of a well-known technique in competitive programming\footnote{See e.g. \url{https://codeforces.com/blog/entry/8219?\#comment-139241}. We believe that the reversed version has also been featured in some competitive programming task in the past, but unfortunately, we could not find a reference.}. We treat values of $\gamma(y)$ where $y > Y - 1$ to be $\infty$. We compute $\eta(y)$ in increasing order of $y$. We have different ``candidates'' $c$ which we can use to update $\eta(y)$ with $\phi(y - c) + \gamma(c)$. For two candidates $c_a$ and $c_b$ such that $c_a < c_b$, if $\phi(y - c_a) + \gamma(c_b) < \phi(y - c_b) + \gamma(c_b)$, then since $\phi(y - c_a)$ increases more slowly than $\phi(y - c_b)$ as $y$ increases, $c_b$ will never be a better candidate than $c_a$\footnote{In the more common, ``non-reversed'' version, the observation is true for $c_a > c_b$. Thus a double-ended queue is used, but here we will use a stack instead. The ``non-reversed'' version can also be directly solved using the SMAWK algorithm.}. For two candidates $c_a < c_b$, we can compute the minimum $\hat{y}$ at which $\phi(\hat{y} - c_a) + \gamma(c_a) < \phi(\hat{y} - c_b) + \gamma(c_b)$ (or lack thereof) by binary search, and such $\hat{y}$ is the ``time'' when $c_a$ becomes better than $c_b$. We maintain a stack that stores the candidates $c_1 < c_2 < c_3 < \cdots < c_k$ from bottom to top. We also maintain the timestamps $t_1 < t_2 < \cdots < t_{k - 1}$ where $t_i$ is the time at which $c_i$ becomes better than $c_{i + 1}$, namely the smallest $t_i$ such that $\phi(t_i - c_i) + \gamma(c_i) < \phi(t_i - c_{i + 1}) + \gamma(c_{i + 1})$. After computing $\eta(y)$ for $y = c$, $c$ becomes available as a new candidate for future $y$'s. we compute the time $\hat{t}_k$ at which the top of the stack $c_k$ becomes better than $c$. If $\hat{t}_k \ge t_{k - 1}$, then $c_k$ can never become the best candidate and we can remove it from the stack. We repeat until at some $k ^ {\prime}$ we have $t_{k ^ {\prime} - 1} < \hat{t}_{k ^ {\prime}}$ and then add $c$ as $c_{k ^ {\prime} + 1}$ into the stack and set $t_{k ^ {\prime}} = \hat{t}_{k ^ {\prime}}$. As each candidate gets removed at most once, the total running time is $\tilde O(X)$.

    This algorithm will not work for $y \ge X$ as a candidate $c$ becomes unavailable when $y - c > X - 1$. To deal with this issue, we split integers in $[0, X + Y - 1)$ into consecutive blocks of size $X - 1$, where block $k$ contains integers in $\{[k(X - 1), \min(X + Y - 1, (k + 1)(X - 1)))\}$. For each $y$, a candidate $c$ within the same block will always be available. Thus if we call the procedure above on each block we can cover all pairs of $(y, c)$ in the same block. It remains to deal with the case where $y$ and $c$ are in adjacent blocks. To do this, for every two adjacent blocks $[(k - 1)(X - 1), k(X - 1)), [k(X - 1), (k + 1)(X - 1))$, in decreasing order of $i = X - 2, X - 3, \cdots 0$, we compute the contribution of candidates in $[(k - 1)(X - 1) + i, k(X - 1))$ to $y = k(X - 1) + i$ (if $y < X + Y - 1$). To do this, we first add $c = (k - 1)(X - 1) + i$ to the set of available candidates and then find the best candidate for $y$ from the set. It is not hard to maintain and find the best candidate for all $y$'s in near-linear time using the same technique described before. The total running time is $\tilde O(X + Y)$.
\end{proof}

\section{Open Problems}
Although this paper closes the gap between the best algorithm and the conditional lower bound for the Knapsack problem, there are still some open problems in this field.
\begin{itemize}
    \item Can we make our algorithm simple? Intuitively, a ``simple'' algorithm should not use any results from additive combinatorics, or a technique analogous to Chan's number theoretical lemma. Despite the fact that our algorithm for \cref{lemma:knapmain} is simple, the reduction to the special case in the proof of \cref{lemma:greedy} requires \cref{lemma:exchange}, which uses results from additive combinatorics, as well as \cref{lemma:chan}, which uses the number theoretical lemma.
    \item Can we shave an $\Omega(2 ^ {\sqrt{\log ^ {-1} (1 / \eps)}})$ factor from our algorithm? The previous algorithm from \cite{ourprevious} has a running time of $\tilde O\left(n + (\frac{1}{\eps})^{11/5}/2^{\Omega(\sqrt{\log(1/\eps)})}\right)$, which has such a factor shaved. Moreover, can we establish an equivalence between $(1 - \eps)$-approximation of Knapsack and $(\min, +)$-convolution? Alternatively, can we prove a stronger conditional lower bound that rules out such an improvement?
\end{itemize}

	\bibliographystyle{alphaurl} 
	\bibliography{main}

 \appendix
\section{Proof of \cref{lemma:exchange}}
The following structural lemma follows from Theorem 4.1 and Theorem 4.2 of Bringmann and Wellnitz \cite{bw21}.
\begin{restatable}{lemma}{densityrestate}
    Let $n$ distinct positive integers $X=\{x_1,\dots,x_n\} \subseteq [\ell,2\ell]$ be given, where $\ell = o(n^2 /\log n)$.
            
    Then, for a universal constant $c\ge 1$, for every $c\ell^2/n \le t \le \Sigma(X)/2$, there exists $t ^ {\prime} \in \caS(X)$ such that $0\le t^\prime - t  \le 8\ell / n$.  
    \label{lemma:density}
\end{restatable}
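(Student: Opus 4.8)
The plan is to obtain the lemma as a repackaging of the structural theorems of Bringmann and Wellnitz \cite{bw21} together with an elementary counting argument that exploits the dyadic range $[\ell,2\ell]$ and the distinctness of the $x_i$. First I would verify that the hypotheses line up: since $X$ consists of $n$ \emph{distinct} positive integers in $[\ell,2\ell]$ we automatically have $n\le \ell+1$, and combined with $\ell = o(n^2/\log n)$ this places $(n,\ell)$ in the regime where Theorems~4.1 and~4.2 of \cite{bw21} apply. (We may freely assume $n$ exceeds a suitable absolute constant, since the statement is asymptotic — the hypothesis $\ell = o(n^2/\log n)$ only constrains the instance once $n$ is large; for bounded $n$ either the interval $[c\ell^2/n,\Sigma(X)/2]$ is a non-issue or nothing is claimed.)

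The first main step is to extract from Theorem~4.1 (the structural statement) and Theorem~4.2 (its quantitative form) of \cite{bw21} the conclusion that there is a positive integer $d$ — which will turn out to be $\gcd(X)$, or at worst the gcd of all but $O(1)$ of the $x_i$ — and a ``fringe length'' $\alpha \le c_0\,\ell^2/n$ for an absolute constant $c_0$, such that
\[
  \mathcal{S}(X)\ \supseteq\ d\mathbb{Z}\ \cap\ \bigl[\alpha,\ \Sigma(X)-\alpha\bigr],
\]
i.e.\ every multiple of $d$ in the ``bulk'' interval is realized as a subset sum. One then fixes the constant $c$ of the lemma to be any constant with $c\ge c_0$.

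The second step bounds $d$. The elements of $X$ that are divisible by $d$ are distinct multiples of $d$ contained in an interval of length $\ell$, so there are at most $\ell/d+1$ of them; since all but $O(1)$ of the $n$ elements are among them, $n \le \ell/d + O(1)$, whence $d \le \ell/(n-O(1)) \le 2\ell/n$ once $n$ is large. This is comfortably inside the target slack $8\ell/n$, and the generous constant $8$ (versus the $2$ the estimate nominally yields) leaves room to absorb any low-order corrections coming from the exact form of \cite{bw21}.

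Finally I would conclude by rounding. Given $t$ with $c\ell^2/n \le t \le \Sigma(X)/2$, set $t' := d\lceil t/d\rceil$, the smallest multiple of $d$ that is $\ge t$; then $0 \le t'-t \le d-1 < 2\ell/n \le 8\ell/n$. It remains to certify $t' \in \mathcal{S}(X)$ via $t' \in [\alpha,\Sigma(X)-\alpha]$: for the lower end, $t' \ge t \ge c\ell^2/n \ge c_0\ell^2/n \ge \alpha$; for the upper end, $t' \le t+d \le \Sigma(X)/2 + \ell$, and since $\Sigma(X) \ge n\ell$ while $\alpha = O(\ell^2/n) = o(n\ell)$ (using $\ell = o(n^2/\log n)$), we get $\Sigma(X)/2 + \ell \le \Sigma(X)-\alpha$ for all large $n$. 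The only real obstacle here is bookkeeping with the cited black boxes — pinning down that Theorems~4.1--4.2 of \cite{bw21} indeed yield a fringe of length $O(\ell^2/n)$ with modulus essentially $\gcd(X)$, and checking that their precise regime hypotheses follow from ``$n$ distinct integers, $X\subseteq[\ell,2\ell]$, $\ell=o(n^2/\log n)$''; everything downstream of that is the elementary rounding computation above.
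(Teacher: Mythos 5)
The paper contains no actual proof of this lemma: it is imported from \cite{ourprevious} and justified only by the remark that it ``follows from Theorem 4.1 and Theorem 4.2 of Bringmann and Wellnitz \cite{bw21},'' so there is no internal argument to compare against line by line. Your reconstruction is the intended derivation and is sound in outline: the hypothesis $\ell = o(n^2/\log n)$ (together with $\log \ell = O(\log n)$, since $\ell \le n^2$ for large $n$) is exactly what puts $X$ in the dense regime $n \ge C\sqrt{\ell\log\ell}$ of \cite{bw21}; their structural theorems then give a modulus $d$ and a fringe $\lambda = O(\max(X)\cdot\Sigma(X)/n^2) = O(\ell^2/n)$ such that every multiple of $d$ in $[\lambda,\Sigma(X)-\lambda]$ lies in $\caS(X)$; and rounding $t$ up to the next such multiple, with the endpoint checks you carried out, finishes the proof. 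The one caveat is your characterization of $d$: \cite{bw21} does not guarantee that all but $O(1)$ elements are divisible by $d$ --- their almost-divisor notion tolerates a substantially larger exceptional set --- but your counting argument is robust to this, since it only needs that a constant fraction of the $n$ distinct elements of $[\ell,2\ell]$ be multiples of $d$, which still forces $d = O(\ell/n)$ and fits inside the $8\ell/n$ slack, exactly as you anticipated in your closing remark.
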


Now we are ready to prove \cref{lemma:exchange}
\greedyexchange*
\begin{proof}
          If $D(i) <\Delta$, then by the definition of $i$ we have $i=n$, and we can simply let $\tilde S=S$, since $\tilde p=0$ always holds. So in the following we assume $D(i)=\Delta$.

                We define $\tilde S\subseteq [n]$ as the maximizer of 
                \[ \sum_{s\in \tilde S \cap [i]}p_s + \sum_{s\in \tilde S \cap ([n] \setminus [i])}(1-\eps)p_s\]
                among all $\tilde S$ satisfying $\sum_{s\in \tilde S} w_s \le \sum_{s\in S} w_s$ and $\sum_{s\in \tilde S} p_s \le \sum_{s\in S} p_s$.
                 We claim that $\tilde S$ satisfies the properties \eqref{eqn:req1}, \eqref{eqn:req2}, \eqref{eqn:req3}. Observe that \eqref{eqn:req2}, \eqref{eqn:req3} immediately follow from the definition of $\tilde S$. The main part is to verify \eqref{eqn:req1}.

Suppose for contradiction that \eqref{eqn:req1} does not hold. Then, we can find a subset $K\subseteq \tilde S \cap ([n]\setminus [i])$ with total profit $p^{*} = \sum_{k\in K} p_k \in (B,B+2]$, which can  be obtained by removing items from $\tilde S\cap ([n]\setminus [i])$ (recall that each item has profit in $[1,2)$).

        Define item set $I ^ {\prime} := [i]\setminus \tilde S$.
         Since $|\tilde S| < \sum_{s\in \tilde S}p_s/\min_{s\in \tilde S} p_s \le \sum_{s\in \tilde S}p_s \le \sum_{s\in  S}p_s \le 2m$, 
         by the definition of $D(i)$, we know that $\{p_i: i\in I ^ {\prime}\}$ contains at least $D(i)=\Delta$ distinct elements.

         Suppose $c$ is the universal constant in \cref{lemma:density}. We apply \cref{lemma:density} on the set of integers $X=\{p_i/\eps : i\in I^\prime \} \subseteq [1/\eps,2/\eps)$ which contains at least $\Delta$ distinct integers, where the premise $1/\eps = o(\Delta^2/\log \Delta)$ in \cref{lemma:density} is satisfied due to the fact that $\Delta = \omega(\eps ^ {-1 / 2} \log ^ {1 / 2}{(1 / \eps)})$. \cref{lemma:density} states that for every $t\in [c\eps^{-2}/\Delta, 0.5\Delta/\eps]$, there exists $t'\in \caS(X)$ such that $0\le t'-t\le 8\eps^{-1}/\Delta$. Here we set \[t:=\frac{(1-\eps) p^{*}}{\eps} + \frac{\eps^{-1}}{\Delta},\] which satisfies $t>p^{*}(1-\eps)/\eps > (1-\eps)B/\eps  = (1-\eps)(9c\eps^{-1}/\Delta)/\eps > c\eps^{-2}/\Delta$, and $t < (B+2)/\eps + \eps^{-1}/\Delta = 9c\eps^{-2}/\Delta + 2/\eps +\eps^{-1}/\Delta\le o(\eps^ {-3/2}\log ^ {-1/2}(1/\eps)) \le 0.5\Delta/\eps$. Then the conclusion of \cref{lemma:density} says that there is a subset  $R\subseteq I'$ of items with total profit $\tilde p:= \eps \cdot t'$, satisfying 
         \begin{equation}
            1/\Delta\le \tilde p - p^{*}(1-\eps)\le 9/\Delta. \label{eqn:my}
         \end{equation}
         Note that \eqref{eqn:my} implies
         \begin{align*}
             p^{*} - \tilde p &\ge \eps\cdot p^{*} - 9/\Delta\\
             & > \eps \cdot B -9/\Delta\\
             &  = \eps \cdot 9c \eps^{-1}/\Delta - 9/\Delta\\
             & \ge 0.
         \end{align*}

         Recall that
          $R\subseteq I' = [i]\setminus \tilde S$
           and $K\subseteq \tilde S \cap ([n]\setminus [i])$,
            which must both be non-empty.
          Since the efficiency of items are sorted in non-increasing order, we have $\min_{r\in R}p_r/w_r \ge \max_{k\in K} p_k/w_k$. Now we define the set of items \[\tilde S':= (\tilde S \setminus K) \cup R. \]
         Then, we have
         \begin{align*}
            \sum_{s\in  \tilde S}p_s - \sum_{s\in \tilde S'}p_s &= \sum_{k\in  K}p_k - \sum_{r\in R}p_r \\
            & = p^* -\tilde p\\
            & \ge 0,
         \end{align*}
         and
         \begin{align*}
            \sum_{s\in  \tilde S}w_s - \sum_{s\in \tilde S'}w_s &= \sum_{k\in  K}w_k - \sum_{r\in R}w_r \\
            & \ge \frac{\sum_{k\in K}p_k}{\max_{k\in K}(p_k/w_k)} - \frac{\sum_{r\in R}p_r}{\min_{r\in R}(p_r/w_r)}\\
            & \ge  \frac{1}{\min_{r\in R}(p_r/w_r)}\cdot \left ( \sum_{k\in K}p_k - \sum_{r\in R}p_r\right )\\
 &  = \frac{1}{\min_{r\in R}(p_r/w_r)}\cdot \left ( p^{*} - \tilde p\right )\\
 & \ge 0.
         \end{align*}
         Hence, $\sum_{s\in \tilde S'}p_s \le \sum_{s\in \tilde S}p_s$ and $\sum_{s\in \tilde S'}w_s \le \sum_{s\in \tilde S}w_s$.
          On the other hand, by \eqref{eqn:my}, we know that
         \begin{align*}
 &                \left (\sum_{s\in \tilde S' \cap [i]}p_s + \sum_{s\in \tilde S' \cap ([n] \setminus [i])}(1-\eps)p_s\right ) - \left (\sum_{s\in \tilde S \cap [i]}p_s + \sum_{s\in \tilde S \cap ([n] \setminus [i])}(1-\eps)p_s\right )\\
 = \ &  \sum_{r\in R} p_r - \sum_{k\in K}(1-\eps) p_k\\
 = \ & \tilde p - (1-\eps)p^*\\
 \ge \ & 1/\Delta > 0,
 \end{align*}
 contradicting the definition of $\tilde S$ being a  maximizer.
 
 Hence, we have established that $\tilde S$ satisfies \eqref{eqn:req1}.
            \end{proof}

\section{Proof of \cref{lem:reduction-knap}}
Recall that we defined the following simpler problem.
\proknaprestate*

\redknap*

\begin{proof}
First, we can reduce $\eps$ so that $1/\eps$ becomes an integer.

 We will restrict the profit values into small intervals, as follows:  divide the items into $O(\log \frac{\max_j p_j}{\min_j p_j}) =$ $O(\log \eps^{-1})$ groups (see \cref{sec:prelimknap}), each containing items with $p_i \in [2^{j},2^{j+1}]$ for some $j$ (which can be rescaled to $[1,2]$). Finally, use the merging lemma \cref{lemma:dc} to merge the profit functions of all groups, in $\tilde O(n+\eps^{-2})$ overall time.  
 
 Now, having restricted the profit values into $[1,2)$, we can  round every profit value to a multiple of $\eps$, which incurs only $(1-O(\eps))$ approximation factor in total.
 
 Finally, the following greedy lemma takes care of the case with total profit above $\Omega(\eps^{-1})$. 
\begin{lemma}
\label{sortgreed}
 Suppose $p_i \in [1,2]$ for all $i\in I$. For $B=\Omega(\eps^{-1})$,  the profit function $f_I$ can be approximated with  additive error $O(\eps B)$ in $O(n\log n)$ time.
\end{lemma}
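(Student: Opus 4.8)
The plan is to prove the lemma by the classical ``sort by efficiency and take the greedy prefix'' argument. First I would sort the items so that their unit profits are non-increasing, $p_1/w_1 \ge p_2/w_2 \ge \dots \ge p_n/w_n$; this is the only step that costs $O(n\log n)$ time, and everything after it is linear. Set $W_k := \sum_{i \le k} w_i$ and $Q_k := \sum_{i \le k} p_i$ for the prefix weights and profits (with $W_0 = Q_0 = 0$), all computable in $O(n)$ time; since $w_i\ge 1$, the $W_k$ are strictly increasing. Define the \emph{greedy prefix function} $G$ on $x \ge 0$ by $G(x) := Q_{k(x)}$, where $k(x)$ is the largest index with $W_{k(x)} \le x$ (so $k(x)=0$ when $x<W_1$ and $k(x)=n$ when $x\ge W_n$). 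Then $G$ is a monotone step function whose breakpoints lie among $W_1, \dots, W_n$, so it has complexity at most $n$ and can be written down in $O(n)$ time. The algorithm returns $G$ (rounded down to multiples of $\eps B$ if one also wants the output complexity bounded by $O(f_I^{\max}/(\eps B))$; this rounding costs only an extra $\eps B$ of additive error).

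The correctness claim is that $G$ is a $2$-additive approximation of $f_I$, which suffices because the hypothesis $B = \Omega(\eps^{-1})$ makes $2 = O(\eps B)$. The lower bound $G(x) \le f_I(x)$ is immediate, since the prefix $\{1, \dots, k(x)\}$ is a feasible subset of weight $W_{k(x)} \le x$. For the matching upper bound I would pass to the fractional relaxation: $f_I(x) \le \widehat f(x)$, where $\widehat f(x)$ denotes the optimum of the linear program $\max\{ \sum_i \lambda_i p_i : \sum_i \lambda_i w_i \le x,\ 0 \le \lambda_i \le 1 \}$. It is standard that this LP is solved exactly by the efficiency-ordered greedy, which sets $\lambda_1 = \dots = \lambda_{k(x)} = 1$ and $\lambda_{k(x)+1} = \theta$ for some $\theta \in [0,1)$ (and takes nothing more; if $k(x) = n$ it simply takes everything). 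Hence $\widehat f(x) = Q_{k(x)} + \theta\, p_{k(x)+1} \le G(x) + p_{k(x)+1} \le G(x) + 2$, using $p_{k(x)+1}\le 2$. Chaining the inequalities, $G(x) \le f_I(x) \le G(x) + 2$, so $G$ is a $2$-additive — and therefore an $O(\eps B)$-additive — approximation of $f_I$, produced in $O(n \log n)$ time.

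This lemma is routine, so there is no genuine obstacle; the only points deserving a line of care are (i) invoking exactness of the greedy algorithm for fractional knapsack so as to extract the ``$+2$'' slack from the single fractional item, and (ii) the trivial edge cases ($x < W_1$, where $G(x)=f_I(x)=0$, and $x \ge W_n$, where $G(x) = f_I(x)$ exactly), together with the bookkeeping observation that the assumption $B = \Omega(\eps^{-1})$ is precisely what converts the absolute error of $2$ into the claimed $O(\eps B)$.
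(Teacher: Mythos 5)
Your proposal is correct and follows essentially the same route as the paper: sort by non-increasing unit profit, output the greedy prefix step function, and bound the additive error by the maximum single-item profit $\le 2 = O(\eps B)$. The only difference is that you justify the ``$+2$'' slack explicitly via the fractional-knapsack relaxation, whereas the paper simply asserts it; this is a welcome fleshing-out, not a different argument.
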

\begin{proof}
Simply sort the items in nonincreasing order of efficiency $p_i/w_i$, and define the profit function $\tilde f$ resulting from greedy, with function values $0,p_1,p_1+p_2,\dots,p_1 + \dots + p_n$ and $x$-breakpoints $0,w_1,w_1+w_2,\dots,w_1+\dots+w_n$.  It clearly approximates $f_I$ with an additive error of $\max_i p_i \le 2 \le O(\eps B)$ for $B=\Omega(\eps^{-1})$. 
\end{proof}
This greedy approach achieves $(1-O(\eps))$-approximation for large profit values. Hence, it is sufficient to approximate $f_I$ up to $2/\eps$.
\end{proof}

\end{document}